\documentclass[11pt]{article}

\usepackage[margin=1in]{geometry}
\usepackage[T1]{fontenc}
\usepackage[utf8]{inputenc}
\usepackage{lmodern}
\usepackage{microtype}
\usepackage{amsmath,amssymb,amsthm,mathtools}
\usepackage{bm}
\usepackage{booktabs}
\usepackage{graphicx}
\usepackage{float}
\usepackage{tabularx}
\usepackage{enumitem}
\allowdisplaybreaks
\usepackage[round,authoryear]{natbib}
\usepackage[colorlinks=true,linkcolor=black,citecolor=black,urlcolor=black]{hyperref}
\usepackage[nameinlink,capitalize]{cleveref}
\newcolumntype{Y}{>{\raggedright\arraybackslash}X}
\newcolumntype{L}[1]{>{\raggedright\arraybackslash}p{#1}}

\newtheorem{theorem}{Theorem}[section]
\newtheorem{proposition}[theorem]{Proposition}
\newtheorem{lemma}[theorem]{Lemma}
\newtheorem{corollary}[theorem]{Corollary}

\theoremstyle{definition}
\newtheorem{definition}[theorem]{Definition}
\newtheorem{assumption}[theorem]{Assumption}
\newtheorem{example}[theorem]{Example}
\theoremstyle{remark}
\newtheorem{remark}[theorem]{Remark}

\DeclareMathOperator*{\argmax}{arg\,max}

\DeclareMathOperator{\Cov}{Cov}
\DeclareMathOperator{\tr}{tr}
\DeclareMathOperator{\diag}{diag}

\DeclareMathOperator{\Ent}{H}
\newcommand{\R}{\mathbb{R}}
\newcommand{\E}{\mathbb{E}}
\newcommand{\Prob}{\mathbb{P}}
\newcommand{\F}{\mathcal{F}}
\newcommand{\G}{\mathcal{G}}
\newcommand{\Inf}{\mathcal{I}}
\newcommand{\given}{\,\vert\,}
\newcommand{\eps}{\varepsilon}
\newcommand{\Val}{\mathcal{V}}
\newcommand{\T}{^{\!\top}}

\newcommand{\thetana}{\theta_{\mathrm{na}}}
\newcommand{\thetaan}{\theta_{\mathrm{an}}}
\newcommand{\thetaM}{\theta^{\mathrm M}}

\newcommand{\norm}[1]{\left\lVert #1\right\rVert}

\title{\textbf{Anticipatory Portfolio Optimization}\\[3pt]
\large Information, Forecast, and Impact: Three Faces of Anticipation\\
and the Value of Closing the Estimation--Control Gap}
\author{Miquel Noguer i Alonso\\ \small Artificial Intelligence Finance Institute (AIFI)}
\date{\today}

\begin{document}
\maketitle

\begin{abstract}
A portfolio is \emph{anticipatory} when the optimizer acts on a richer model than the myopic, price-taking estimator used to calibrate it. The enrichment can be informational, through an enlarged filtration; dynamic, through a forecast of future returns, risks, and costs over a trading horizon; or performative, through the deployment law induced by market impact. This paper gives one decision-theoretic definition for the three cases and measures anticipation by the realized control gap between the enriched controller and the restricted estimator. The point is not merely semantic: the same quadratic geometry identifies when anticipation is genuine information, when it is a dynamic planning premium, when it is an impact correction, and when it is only overfitted structure.

For log utility under initial enlargement, the value is the invariant information-drift energy $\tfrac12\E\int_0^T\alpha_t^2\,dt$, equivalently a mutual-information or relative-entropy quantity under standard hypotheses. In mean--variance form, signal value becomes the trace $\tfrac{1}{2\gamma}\tr(\Sigma^{-1}\Omega)$. Dynamic forecast anticipation yields an exact finite-horizon quadratic premium in the forecast stack, while permanent impact changes the price-taking allocation $\thetana=(\Lambda+\gamma\Sigma)^{-1}\mu$ into the impact-aware optimum $\thetaan=(2\Lambda+\gamma\Sigma)^{-1}\mu$ and exposes a spectral phase transition for naive recalibration. The main result is an exact stacked finite-horizon LQG decomposition: information, forecast, and impact combine into an information trace plus one inverse-precision norm, whose expansion gives an impact term, a forecast term, and the unique signed forecast--impact interaction. The signed term is then resolved by sharp angle bounds and by an orthogonal nonnegative projection identity. The stationary extension endogenizes the information covariance as a Kalman error reduction and carries impact anticipation to an infinite-horizon Lyapunov trace with transaction costs. Finally, the paper proves an exact estimation penalty $\tfrac12\tr(H^{-1}\Sigma_\eps)$, showing that correctly specified anticipation creates value, vacuous anticipation has zero value, and misspecified anticipation is harmful precisely when estimated structure is optimized as if it were true.
\end{abstract}

\noindent\textbf{Keywords:} anticipatory portfolio optimization; value of information; market impact; performative prediction; dynamic trading; distributionally robust optimization.\\

\newpage

\section{Introduction}\label{sec:intro}

The phrase \emph{anticipatory portfolio optimization} has, in the literature, at least three distinct meanings, and they do not share a method. \citet{PikovskyKaratzas1996} use ``anticipative'' for portfolios adapted to a filtration enlarged by a functional of the future, the mathematics of insider trading and anticipating stochastic calculus. Practitioners use ``anticipatory'' for receding-horizon strategies that optimize against a forecast and re-plan as the forecast updates, the mathematics of multi-period and model-predictive control. And a third, newer sense --- a strategy that anticipates the distribution shift its own deployment causes through market impact --- is the portfolio instance of performative prediction. This paper argues that the three are one.

\paragraph{The unifying view.}
Fix a portfolio decision and ask what the optimizer conditions on, and what it treats as exogenous. A \emph{naive} optimizer uses a base information set and treats the asset-return law as given. An \emph{anticipatory} optimizer enriches this in exactly one of three ways:
\begin{enumerate}[leftmargin=2em,itemsep=2pt]
\item \textbf{Information.} It conditions on more than the public filtration --- a (possibly noisy) signal of the future. (\Cref{sec:info}.)
\item \textbf{Forecast.} It conditions on the predicted dynamics of returns over a horizon and optimizes the whole path, not the next step. (\Cref{sec:forecast}.)
\item \textbf{Impact.} It treats the return law as a function of its own action and differentiates through that dependence. (\Cref{sec:impact}.)
\end{enumerate}
In each case the enrichment is a richer model, and the \emph{value of anticipation} is the change in attainable objective. This is precisely the estimation--control gap of decision theory specialized to allocation: the naive optimum is the calibrated/myopic estimator's choice; the anticipatory optimum is the controller's. The reframing is not cosmetic --- it tells us when anticipation is worth its cost (it equals an information or impact quantity), when it is vacuous (the enrichment is uninformative or impact is negligible), and when it is dangerous (the enrichment is an estimate and is over-optimized).

\paragraph{Contributions.}
\begin{enumerate}[leftmargin=1.6em,itemsep=2pt]
\item We give a common decision-theoretic definition of anticipatory portfolio optimization (\Cref{sec:frame}): the naive policy optimizes under a restricted model, the anticipatory policy under an enriched model, and the value of anticipation is the realized control gap between them, nonnegative under correct specification and potentially negative under misspecification.
\item We reinterpret three literatures in this common language: initial enlargement of filtrations (\Cref{sec:info}), dynamic trading with predictable returns (\Cref{sec:forecast}), and performative allocation under endogenous market impact (\Cref{sec:impact}).
\item We add an exact finite-horizon forecast formula (\Cref{thm:forecastvalue}): relative to a restricted forecast, the value of dynamic anticipation is a positive semidefinite quadratic form in the forecast stack.
\item We sharpen the information result (\Cref{thm:pk}): the value of insider anticipation is the invariant information-drift energy $\tfrac12\E\!\int_0^T\alpha_t^2\,dt$, a mutual-information or relative-entropy quantity under standard enlargement assumptions, not the coordinate-dependent differential entropy of the signal.
\item We prove the impact result in closed form (\Cref{thm:impact}): the price-taking allocation is $\thetana=(\Lambda+\gamma\Sigma)^{-1}\mu$ and the impact-aware allocation is $\thetaan=(2\Lambda+\gamma\Sigma)^{-1}\mu$, the portfolio analogue of the competitive-versus-monopoly first-order condition, with the gap vanishing iff $\Lambda\Sigma^{-1}\mu=0$.
\item We derive the performative value gap and a spectral phase transition (\Cref{thm:phase}): $\rho(\gamma^{-1}\Sigma^{-1}\Lambda)<1$ is necessary and sufficient for global convergence of the myopic fixed-point iteration, sharpened to a damped-iteration statement (\Cref{prop:damped}).
\item We add deterministic and high-probability model-error certificates (\Cref{cor:plugin,cor:hpplugin}) and a Wasserstein robust layer (\Cref{sec:robust}) showing that estimated anticipation calls for certified or dual-norm--regularized allocation rather than unregularized optimization against a fragile future model.
\item We make the harm of estimated anticipation \emph{exact}, not merely bounded (\Cref{thm:estpenalty}): the expected value of plug-in anticipation is the structural value minus an estimation penalty $\tfrac12\tr(H^{-1}\Sigma_\eps)$ that is the exact inverse-precision mirror of the information gain $\tfrac12\tr(H^{-1}\Omega)$, so anticipation pays in expectation iff resolved structure beats injected estimation noise; the bias--variance-optimal policy shrinks the score by its reliability ratio $q/(q+p)$ (\Cref{cor:shrinkage}), with the matrix Bayes form applying $H^{-1}$ to the posterior-mean score (\Cref{rem:bayesshrink}).
\item We make the unification quantitative (\Cref{sec:unified}): first, an exact information--impact decomposition $\Val=\Val_{\mathrm{impact}}+\Val_{\mathrm{info}}$ in the static LQG model (\Cref{thm:unified}); second, an exact finite-horizon three-way decomposition with information, forecast, impact, and the unique forecast--impact cross term (\Cref{thm:threeway}).
\item We resolve the signed cross term rather than hide it: \Cref{prop:interaction} gives the sharp angle formula and bounds showing when forecast and impact are substitutes, complements, or orthogonal value sources; \Cref{cor:orthosplit} rewrites the same interaction as a fully nonnegative orthogonal projection identity; \Cref{prop:shadowimpact} gives the marginal shadow price of impact on forecast and information value.
\item We close the model rather than leave its inputs exogenous (\Cref{sec:stationary}): \Cref{thm:filtering} shows the resolved-mean covariance is exactly the steady-state Kalman error reduction between two filtrations, $\Omega=P^{\mathrm c}-P^{\mathrm f}$, so the information value is an impact-deflated trace of Riccati solutions, monotone in signal quality (\Cref{cor:filtercompstat}); \Cref{thm:stationary} carries the value to the infinite-horizon stationary problem with transaction costs, where impact anticipation is a discounted Lyapunov trace built from a control Riccati and collapses to the static gap as $\Gamma\to0$.
\item We add a curvature bridge beyond exact quadratics (\Cref{prop:curvature}) and an empirical protocol (\Cref{sec:protocol}) that turns the theory into falsifiable diagnostics: signal trace, forecast premium, impact spectral radius, plug-in edge condition, robust out-of-sample value, and confidence-band deployment hurdles.
\end{enumerate}

\paragraph{Novelty and status.}
To make the contribution boundary explicit, the paper separates inherited facts from new portfolio consequences. This is important because the word \emph{anticipatory} already has a precise meaning in stochastic calculus.
\begin{center}
\small
\begin{tabularx}{\textwidth}{@{}L{0.25\textwidth}L{0.20\textwidth}Y@{}}
\toprule
Result block & Status & Role in the paper \\
\midrule
Initial-enlargement log-utility value & Classical & Used as the invariant information benchmark; the paper emphasizes drift energy / mutual information rather than coordinate-dependent entropy. \\
Mean--variance signal trace & Proved here as an elementary counterpart & Translates information value into the quadratic language used by Markowitz, impact, and the unified LQG model. \\
Finite-horizon forecast premium & Proved here & Turns forecast anticipation from a qualitative aim-portfolio story into an exact positive-semidefinite value identity and a decision-relevant forecast-error norm. \\
Impact-aware allocation, value gap, and phase threshold & Proved here in the permanent linear-impact model & Main closed-form portfolio contribution: price-taking stability differs from performative optimality. \\
Plug-in edge certificate and robust layer & Standard regret/DRO logic, specialized here & Gives the practical condition $\widehat\Delta>2\eta$ and converts estimated anticipation into regularized allocation. \\
High-probability deployment certificate & Proved here as a direct statistical specialization & Turns objective confidence bands into the operational rule $\widehat\Delta>2\widehat\eta_\delta$ with probability at least $1-\delta$. \\
Information--impact decomposition & Proved here in the static LQG model & Identifies the sign of the coupling: impact weakly deflates signal value. \\
Full finite-horizon three-way decomposition & Proved here & Converts the former conjecture into a theorem: information, forecast, and impact are one inverse-precision value identity. \\
Forecast--impact interaction bounds & Proved here & Classifies the signed cross term as substitution, complementarity, or orthogonality in the $H$-geometry; gives impact shadow-price sensitivities. \\
Orthogonalized cross-term split and curvature bridge & Proved here from Hilbert geometry / strong concavity & Converts the signed cross term into a nonnegative projection identity and shows how the quadratic value identity survives as bounds outside LQG. \\
Exact value of estimated anticipation and optimal shrinkage & Proved here & Sharpens the worst-case plug-in bound into an exact bias--variance identity: estimation penalty $=\tfrac12\tr(H^{-1}\Sigma_\eps)$ mirrors the information gain; the reliability-ratio/Bayes shrinkage is the in-expectation face of robustness. \\
\bottomrule
\end{tabularx}
\end{center}

\paragraph{Notation.} Assets $i=1,\dots,n$; a portfolio (weights or dollar holdings) is $\theta\in\R^n$. Excess returns have mean $\mu\in\R^n$ and covariance $\Sigma\succ 0$. Risk aversion is $\gamma>0$. Mean--variance utility is $U(\theta;\mu,\Sigma)=\mu\T\theta-\tfrac{\gamma}{2}\theta\T\Sigma\theta$, whose unconstrained maximizer is the Markowitz portfolio $\theta^{\mathrm M}=\gamma^{-1}\Sigma^{-1}\mu$. We write $\Ent(\cdot)$ for (Shannon/differential) entropy and $\rho(\cdot)$ for spectral radius.

\paragraph{Compact versus unconstrained problems.}
In the abstract decision-theoretic definitions we assume $\Theta$ compact and convex to avoid existence issues. The closed-form Markowitz, forecast, and impact formulas below are stated in the unconstrained linear--quadratic case $\Theta=\R^n$, where existence and uniqueness follow from strict concavity. Constrained versions replace inverses by the corresponding convex-programming optimality conditions but preserve the same distinction between naive estimation and anticipatory control.

\section{A Common Frame for Anticipation}\label{sec:frame}

\begin{definition}[Portfolio decision system]\label{def:system}
A portfolio decision system is $(\Theta,\{P_\theta\}_{\theta\in\Theta},\Inf,U)$: a compact convex feasible set $\Theta\subseteq\R^n$; a family of return laws $P_\theta$ on $\R^n$ possibly indexed by the deployed portfolio (impact); an information $\sigma$-algebra $\Inf$ the optimizer may condition on; and a concave utility $U(\theta;\cdot)$. The realized objective of deploying $\theta$ is $J(\theta)=\E_{r\sim P_\theta}[\,U(\theta;r)\,]$.
\end{definition}

A decision system has a \emph{naive} and an \emph{anticipatory} reading determined by how much of $(\Inf,\{P_\theta\})$ the optimizer uses.

\begin{definition}[Naive and anticipatory optima]\label{def:naive}
Let $\Inf_0\subseteq\Inf$ be a base information set and $\bar P$ a reference law. The \emph{naive} optimum ignores the enrichment: it conditions only on $\Inf_0$ and treats the law as the fixed $\bar P=P_{\bar\theta}$ at its own deployment (a fixed point if $\bar\theta$ must equal the chosen $\theta$),
\[
  \theta_{\mathrm{na}}\in\argmax_{\theta\in\Theta}\ \E_{r\sim P_{\theta_{\mathrm{na}}}}\!\big[U(\theta;r)\given\Inf_0\big].
\]
The \emph{anticipatory} optimum uses the full model,
\[
  \theta_{\mathrm{an}}\in\argmax_{\theta\in\Theta}\ J(\theta)=\argmax_{\theta\in\Theta}\ \E_{r\sim P_\theta}\!\big[U(\theta;r)\given\Inf\big].
\]
The \emph{value of anticipation} is $\Val=J(\theta_{\mathrm{an}})-J(\theta_{\mathrm{na}})$.
\end{definition}

\begin{proposition}[Anticipation never harms when correctly specified; can harm when not]\label{prop:value}
If the enriched model $(\Inf,\{P_\theta\})$ is correctly specified and the naive policy is feasible for the anticipatory problem, then $\Val\ge 0$. If instead the anticipatory optimum is computed against an \emph{estimated} model $\widehat P_\theta$ or signal $\widehat\Inf$ while value is realized under the truth, the realized value of anticipation can be negative, with shortfall bounded by the optimizer's sensitivity to model error: writing $\widehat\theta_{\mathrm{an}}$ for the optimum under $\widehat P$,
\[
  J(\theta_{\mathrm{an}})-J(\widehat\theta_{\mathrm{an}})\ \le\ 2\,\sup_{\theta\in\Theta}\big|J(\theta)-\widehat J(\theta)\big|.
\]
\end{proposition}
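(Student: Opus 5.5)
The first claim follows from optimality alone, and the second from the standard three-term plug-in decomposition. I treat them in turn.

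\textbf{Part 1 (correct specification).} When the enriched model is correctly specified, the realized objective of any deployed $\theta$ is exactly the anticipatory criterion $J(\theta)=\E_{r\sim P_\theta}[U(\theta;r)\given\Inf]$. Because $\Inf_0\subseteq\Inf$, the naive policy $\thetana$ is $\Inf_0$-measurable and hence $\Inf$-measurable. By hypothesis it lies in $\Theta$, so it is a feasible candidate in the anticipatory problem. Since $\thetaan$ maximizes $J$ over $\Theta$, we get $J(\thetaan)\ge J(\thetana)$, that is, $\Val\ge0$. The one point to state carefully is that the naive policy is evaluated under the true deployment law $P_{\thetana}$. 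By Definition~\ref{def:naive} this is exactly $J(\thetana)$: the naive fixed point only affects \emph{how} $\thetana$ was chosen, not how it is scored. Compactness of $\Theta$ and upper semicontinuity of $J$ guarantee that $\thetaan$ exists. Concavity of $U$ is not even needed for this part.

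\textbf{Part 2 (misspecification bound).} Let $\widehat J(\theta)=\E_{r\sim\widehat P_\theta}[U(\theta;r)\given\widehat\Inf]$, let $\widehat\theta_{\mathrm{an}}\in\argmax_\Theta\widehat J$, and set $\eta=\sup_\Theta|J-\widehat J|$. I will use the decomposition
\[
J(\thetaan)-J(\widehat\theta_{\mathrm{an}})=\big[J(\thetaan)-\widehat J(\thetaan)\big]+\big[\widehat J(\thetaan)-\widehat J(\widehat\theta_{\mathrm{an}})\big]+\big[\widehat J(\widehat\theta_{\mathrm{an}})-J(\widehat\theta_{\mathrm{an}})\big].
\]
The middle bracket is $\le0$ by optimality of $\widehat\theta_{\mathrm{an}}$ for $\widehat J$, and each outer bracket is at most $\eta$. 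This gives the bound $2\eta$.

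For the claim that the realized value can be negative, I will first note that $J(\widehat\theta_{\mathrm{an}})-J(\thetana)=\Val-[J(\thetaan)-J(\widehat\theta_{\mathrm{an}})]$. This quantity is negative whenever the plug-in shortfall exceeds $\Val$; in particular it is negative when $\Val=0$ and $\widehat\theta_{\mathrm{an}}\ne\thetaan$. I will then give a one-line mean--variance example. Take no true enrichment, so $\thetana=\thetaan=\thetaM$, and an estimated $\widehat\mu=\mu+\epsilon$ with $\epsilon\neq0$. Strict concavity then gives $J(\widehat\theta_{\mathrm{an}})=J(\thetaM)-\tfrac{1}{2\gamma}\epsilon\T\Sigma^{-1}\epsilon<J(\thetana)$. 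This example also shows the realized value is negative precisely when estimated structure is optimized as if it were true.

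\textbf{Main obstacle.} The only delicate point is measurability and well-posedness of the conditional objectives: $\sup_\Theta|J-\widehat J|$ must be finite, and the argmax must exist. If the optimization is done pathwise given the information set, I would apply the argument $\omega$-by-$\omega$ and then take expectations. The inequalities are all pointwise, so they survive integration.
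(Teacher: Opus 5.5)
Your proof is correct and follows the paper's argument: feasibility of $\thetana$ together with optimality of $\thetaan$ gives the first claim, and the same three-term split with a nonpositive middle term gives the $2\sup|J-\widehat J|$ bound. Your explicit mean--variance example for the ``can be negative'' clause goes beyond the paper, which only asserts it. The example is correct, but note that your general remark that $\widehat\theta_{\mathrm{an}}\ne\thetaan$ forces a strict loss when $\Val=0$ needs $\thetaan$ to be the unique maximizer; your strictly concave example supplies that uniqueness.
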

\begin{proof}
For the first claim, $\theta_{\mathrm{an}}$ maximizes $J$ over $\Theta$ and $\theta_{\mathrm{na}}\in\Theta$, so $J(\theta_{\mathrm{an}})\ge J(\theta_{\mathrm{na}})$. For the second, with $g=J$, $\widehat g=\widehat J$ and $\widehat\theta_{\mathrm{an}}\in\argmax\widehat g$, the standard greedy-regret bound gives $\max_\theta g(\theta)-g(\widehat\theta_{\mathrm{an}})\le 2\sup_\theta|g-\widehat g|$ by the three-term split $g(\theta^\star)-\widehat g(\theta^\star)+\widehat g(\theta^\star)-\widehat g(\widehat\theta_{\mathrm{an}})+\widehat g(\widehat\theta_{\mathrm{an}})-g(\widehat\theta_{\mathrm{an}})$, whose middle term is $\le 0$.
\end{proof}

\begin{corollary}[When plug-in anticipation is still worth using]\label{cor:plugin}
Let $\theta_0$ be the baseline policy, let $\widehat\theta$ be the policy selected by an estimated enriched objective $\widehat J$, and define the estimated edge
\[
  \widehat\Delta:=\widehat J(\widehat\theta)-\widehat J(\theta_0).
\]
If the objective error on the two deployed policies is bounded by $\eta$,
\[
  |J(\widehat\theta)-\widehat J(\widehat\theta)|\le\eta,
  \qquad
  |J(\theta_0)-\widehat J(\theta_0)|\le\eta,
\]
then the realized value satisfies
\[
  J(\widehat\theta)-J(\theta_0)\ge \widehat\Delta-2\eta.
\]
Consequently plug-in anticipation is certified to beat the baseline whenever $\widehat\Delta>2\eta$. If $\widehat\Delta\le2\eta$, the sign of the realized improvement is not identified by the estimated model alone.
\end{corollary}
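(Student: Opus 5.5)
The plan is a direct three-term telescoping argument, in the same spirit as the greedy-regret split in the proof of \Cref{prop:value}. The one difference is that here both the error term and the comparison are evaluated only at the two deployed policies, so no uniform-in-$\theta$ bound is needed. I would write the realized improvement as
\[
J(\widehat\theta)-J(\theta_0)=\big[J(\widehat\theta)-\widehat J(\widehat\theta)\big]+\big[\widehat J(\widehat\theta)-\widehat J(\theta_0)\big]+\big[\widehat J(\theta_0)-J(\theta_0)\big].
\]
The middle bracket is by definition $\widehat\Delta$. The first bracket is at least $-\eta$ by the first hypothesis, and the third bracket is at least $-\eta$ by the second. Summing the three gives $J(\widehat\theta)-J(\theta_0)\ge\widehat\Delta-2\eta$.

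The certification statement is then immediate. If $\widehat\Delta>2\eta$, the right-hand side is strictly positive, so $J(\widehat\theta)>J(\theta_0)$.

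For the final sentence, that the sign is not identified when $\widehat\Delta\le2\eta$, I would give a two-point construction showing the bound is sharp. Fix values $\widehat J(\widehat\theta)$ and $\widehat J(\theta_0)$ with edge $\widehat\Delta\in[0,2\eta]$. Consider two true objectives, both consistent with the $\eta$-error hypotheses.
\begin{itemize}
\item In the first, set $J(\widehat\theta)=\widehat J(\widehat\theta)-\eta$ and $J(\theta_0)=\widehat J(\theta_0)+\eta$. The realized gap is then $\widehat\Delta-2\eta\le0$, and it is strictly negative unless $\widehat\Delta=2\eta$.
\item In the second, set $J(\widehat\theta)=\widehat J(\widehat\theta)+\eta$ and $J(\theta_0)=\widehat J(\theta_0)-\eta$. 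The realized gap is then $\widehat\Delta+2\eta>0$ whenever $\eta>0$.
\end{itemize}
Both truths are compatible with the same estimated model and the same error bound. So the estimated model alone cannot determine the sign of the realized improvement. If $\widehat\theta$ is an $\argmax$ of $\widehat J$, one should also check that $\widehat\Delta\ge0$ automatically; this is harmless for the construction.

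No step is a genuine obstacle. The only point needing care is stating precisely what ``not identified'' means, namely exhibiting two truths consistent with the hypotheses that give opposite signs. The boundary case $\widehat\Delta=2\eta$ gives realized values $0$ and $4\eta$, so there the sign is ambiguous only in the weak sense of zero versus positive.
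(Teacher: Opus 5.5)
Your proof is correct and is essentially the paper's argument: the paper applies the two error bounds as $J(\widehat\theta)\ge\widehat J(\widehat\theta)-\eta$ and $J(\theta_0)\le\widehat J(\theta_0)+\eta$ and subtracts, which is your three-term split written in one line. The paper gives no argument for the final non-identification sentence, so your two-point construction is a useful addition. That claim genuinely requires $\widehat\Delta>-2\eta$, which holds automatically when $\widehat\theta$ maximizes $\widehat J$ over a set containing $\theta_0$; this is why restricting to $\widehat\Delta\in[0,2\eta]$ (with $\eta>0$) is the right move.
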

\begin{proof}
Two applications of the objective-error bound give
\[
  J(\widehat\theta)-J(\theta_0)
  \ge \widehat J(\widehat\theta)-\eta-\big(\widehat J(\theta_0)+\eta\big)
  =\widehat\Delta-2\eta .
\]

\end{proof}

\begin{corollary}[High-probability deployment certificate]\label{cor:hpplugin}
Let $\widehat J_N$ be an estimated objective constructed from $N$ observations, let $\theta_0$ be the baseline, and let $\widehat\theta_N$ be the estimated anticipatory policy. Suppose that for some radius $\eta_N(\delta)$ the event
\[
  \mathcal E_\delta:=\left\{
  \sup_{\theta\in S_N}|J(\theta)-\widehat J_N(\theta)|\le \eta_N(\delta)
  \right\}
\]
has probability at least $1-\delta$, where $S_N$ contains the two deployed policies $\{\theta_0,\widehat\theta_N\}$; in particular $S_N=\Theta$ is sufficient when a uniform confidence band is available. Define
\[
  \widehat\Delta_N:=\widehat J_N(\widehat\theta_N)-\widehat J_N(\theta_0).
\]
Then, with probability at least $1-\delta$,
\[
  J(\widehat\theta_N)-J(\theta_0)\ge \widehat\Delta_N-2\eta_N(\delta).
\]
If the enriched model has an implementation cost $c\ge0$ --- data, computation, turnover, or execution overhead --- then net positive deployment is certified by
\[
  \widehat\Delta_N>c+2\eta_N(\delta).
\]
Thus an anticipatory backtest is not enough: the estimated value must clear the statistical objective-error band and the implementation cost.
\end{corollary}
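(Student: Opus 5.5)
The argument reduces to \Cref{cor:plugin}, applied pathwise on the confidence event $\mathcal E_\delta$. Fix a sample realization in $\mathcal E_\delta$. Because $S_N$ contains both deployed policies $\theta_0$ and $\widehat\theta_N$, the uniform bound over $S_N$ gives
\[
  |J(\widehat\theta_N)-\widehat J_N(\widehat\theta_N)|\le\eta_N(\delta)
  \quad\text{and}\quad
  |J(\theta_0)-\widehat J_N(\theta_0)|\le\eta_N(\delta).
\]
These are exactly the hypotheses of \Cref{cor:plugin} with $\eta=\eta_N(\delta)$, $\widehat J=\widehat J_N$ and $\widehat\theta=\widehat\theta_N$. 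The corollary then yields $J(\widehat\theta_N)-J(\theta_0)\ge\widehat\Delta_N-2\eta_N(\delta)$ on that realization. Since this holds at every point of $\mathcal E_\delta$ and $\Prob(\mathcal E_\delta)\ge1-\delta$, the inequality holds with probability at least $1-\delta$.

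The main point needing care is that $\widehat\theta_N$ is data-dependent. A confidence band at a fixed $\theta$ would not suffice, because the guarantee must apply at the policy actually selected by the data. The statement handles this by requiring that $S_N$ contain $\widehat\theta_N$ on the event itself, so no union bound or re-selection argument is needed. The simplest way to meet this requirement is to take $S_N=\Theta$, which is sufficient whenever a uniform band is available. I will also note that $\widehat\Delta_N$ and $\eta_N(\delta)$ are random quantities. The conclusion is therefore a statement about the event $\mathcal E_\delta$, not a bound on a fixed deterministic quantity.

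For the cost clause, write the net realized value as $J(\widehat\theta_N)-J(\theta_0)-c$. On $\mathcal E_\delta$ it is at least $\widehat\Delta_N-c-2\eta_N(\delta)$. This lower bound is strictly positive whenever $\widehat\Delta_N>c+2\eta_N(\delta)$, so net positive deployment is certified with probability at least $1-\delta$.

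The closing remark about backtests is interpretive and follows directly. If $\widehat\Delta_N\le c+2\eta_N(\delta)$, the bound gives no certificate. This mirrors the non-identification statement in \Cref{cor:plugin}.
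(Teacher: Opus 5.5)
Your proposal is correct and matches the paper's proof: the paper likewise applies the deterministic bound of \Cref{cor:plugin} pathwise on $\mathcal E_\delta$ with $\eta=\eta_N(\delta)$, subtracts the fixed cost $c$, and concludes from $\Prob(\mathcal E_\delta)\ge1-\delta$. Your remarks on the data-dependence of $\widehat\theta_N$ and on $S_N$ containing it on the event itself are accurate clarifications of the same argument, not a different route.
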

\begin{proof}
On $\mathcal E_\delta$ the deterministic bound of \Cref{cor:plugin} applies with $\eta=\eta_N(\delta)$. Subtracting the fixed implementation cost $c$ gives the net condition. The probability statement follows from $\Prob(\mathcal E_\delta)\ge1-\delta$.
\end{proof}

\begin{lemma}[Master quadratic value identity]\label{lem:quadratic}
Let $H\succ0$ and $J_b(\theta):=b\T\theta-\tfrac12\theta\T H\theta$ on $\R^n$. For two score vectors $b$ and $\bar b$, define $\theta_b:=H^{-1}b$ and $\theta_{\bar b}:=H^{-1}\bar b$. Evaluating both policies under the true score $b$ gives
\[
  J_b(\theta_b)-J_b(\theta_{\bar b})
  =\tfrac12(b-\bar b)\T H^{-1}(b-\bar b)\ge0.
\]
Consequently every linear--quadratic face of anticipation prices the missing enrichment by an inverse-precision norm: the relevant precision is $\gamma\Sigma$ for ordinary Markowitz, $K$ for dynamic trading with costs, and $2\Lambda+\gamma\Sigma$ when permanent impact is internalized.
\end{lemma}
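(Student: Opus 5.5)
The identity is a completing-the-square computation for the strictly concave quadratic $J_b$, so I plan a direct algebraic proof followed by a short remark on the three specializations. First I rewrite $J_b$ around its maximizer. Since $H\succ0$ is symmetric and invertible, for every $\theta\in\R^n$
\[
  J_b(\theta)=\tfrac12 b\T H^{-1}b-\tfrac12(\theta-H^{-1}b)\T H(\theta-H^{-1}b).
\]
To check this, I expand the right-hand side: $-\tfrac12\theta\T H\theta+\theta\T b-\tfrac12 b\T H^{-1}b+\tfrac12 b\T H^{-1}b$. This shows $\theta_b=H^{-1}b$ is the unique maximizer, with $J_b(\theta_b)=\tfrac12 b\T H^{-1}b$.

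Next I evaluate at $\theta_{\bar b}=H^{-1}\bar b$. The deviation is $\theta_{\bar b}-\theta_b=H^{-1}(\bar b-b)$, so the penalty term is
\[
  \tfrac12(\bar b-b)\T H^{-1}HH^{-1}(\bar b-b)=\tfrac12(b-\bar b)\T H^{-1}(b-\bar b).
\]
Subtracting this from $J_b(\theta_b)$ gives the claimed identity. Nonnegativity follows because $H^{-1}\succ0$, with equality iff $b=\bar b$. The only care needed is the use of symmetry of $H$ in $\theta\T b=b\T\theta$ and in $(H^{-1})\T=H^{-1}$; I take $H$ symmetric as implicit in $H\succ0$.

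For the ``consequently'' clause, I only need to read off $(b,\bar b,H)$ in each setting, so this part is interpretive rather than a separate proof.
\begin{itemize}
\item For Markowitz, $J=U(\cdot;\mu,\Sigma)$ has $b=\mu$ and $H=\gamma\Sigma$.
\item For permanent impact, the realized objective $\mu\T\theta-\theta\T\Lambda\theta-\tfrac\gamma2\theta\T\Sigma\theta$ has precision $2\Lambda+\gamma\Sigma$. Here I assume $\Lambda$ is symmetrized.
\item For dynamic trading, $K$ is the stacked Hessian, defined later.
\end{itemize}
The main obstacle is checking that the naive policy in each face actually has the form $H^{-1}\bar b$ for the same true $H$. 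In the impact case, the price-taking policy $(\Lambda+\gamma\Sigma)^{-1}\mu$ uses a different matrix. To handle this, I will define the effective score $\bar b=(2\Lambda+\gamma\Sigma)(\Lambda+\gamma\Sigma)^{-1}\mu$, after which the lemma applies verbatim. I will defer the details to \Cref{thm:impact}.
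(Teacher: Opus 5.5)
Your proof is correct and matches the paper's: both complete the square to get $J_b(\theta)=\tfrac12 b\T H^{-1}b-\tfrac12(\theta-H^{-1}b)\T H(\theta-H^{-1}b)$, substitute $\theta=H^{-1}\bar b$, and treat the ``consequently'' clause as interpretive. Your effective-score device $\bar b=(2\Lambda+\gamma\Sigma)(\Lambda+\gamma\Sigma)^{-1}\mu$ is a valid way to see that part 3 of \Cref{thm:impact} is an instance of the lemma, since any policy $\theta$ can be written as $H^{-1}(H\theta)$.
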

\begin{proof}
Completing the square gives $J_b(\theta)=\tfrac12 b\T H^{-1}b-\tfrac12(\theta-H^{-1}b)\T H(\theta-H^{-1}b)$. Substituting $\theta=H^{-1}\bar b$ gives the identity.

\end{proof}

\begin{proposition}[Curvature bounds beyond the exact quadratic case]\label{prop:curvature}
Let $J$ be differentiable, $m$-strongly concave and $L$-smooth on a convex neighborhood of an unconstrained maximizer $\theta^\star$, with $0<m\le L<\infty$ and $\nabla J(\theta^\star)=0$. Then every feasible restricted policy $\theta_0$ satisfies
\[
  \frac{m}{2}\|\theta^\star-\theta_0\|^2
  \le
  J(\theta^\star)-J(\theta_0)
  \le
  \frac{L}{2}\|\theta^\star-\theta_0\|^2 .
\]
If $\theta^\star$ is constrained but satisfies the first-order variational inequality
$\nabla J(\theta^\star)\T(\theta-\theta^\star)\le0$ for all feasible $\theta$, the lower bound still holds. Therefore the exact LQG identities in this paper are the sharp constant-Hessian case of a more general principle: anticipation is priced by the curvature-weighted distance between the enriched controller and the restricted estimator.
\end{proposition}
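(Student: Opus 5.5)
The plan is to use the standard two-sided quadratic bounds that follow from strong concavity and smoothness. We apply them at the base point $\theta^\star$ and evaluate them at $\theta_0$. Throughout, $\theta_0$ must lie in the convex neighborhood $N$ on which the curvature hypotheses hold. I read "feasible restricted policy" as $\theta_0\in N$ (or $N$ equal to the whole feasible set). If needed, we state this explicitly rather than extend the bounds outside $N$.

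\textbf{Lower bound.} $m$-strong concavity of $J$ on $N$ is equivalent to the inequality
\[
J(y)\le J(x)+\nabla J(x)\T(y-x)-\tfrac m2\|y-x\|^2 \qquad (x,y\in N).
\]
Take $x=\theta^\star$ and $y=\theta_0$. This gives $J(\theta^\star)-J(\theta_0)\ge -\nabla J(\theta^\star)\T(\theta_0-\theta^\star)+\tfrac m2\|\theta_0-\theta^\star\|^2$. In the unconstrained case the gradient term vanishes because $\nabla J(\theta^\star)=0$. In the constrained case the variational inequality gives $\nabla J(\theta^\star)\T(\theta_0-\theta^\star)\le 0$. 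The subtracted term is then nonnegative, so the lower bound survives in both cases. If one prefers not to cite the first-order characterization of strong concavity, it can be derived quickly: $g(\theta)=J(\theta)+\tfrac m2\|\theta\|^2$ is concave and differentiable, so it lies below its tangent plane. Expanding the squared norms then gives the displayed inequality.

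\textbf{Upper bound.} $L$-smoothness means $\nabla J$ is $L$-Lipschitz on $N$. We write
\[
J(\theta_0)-J(\theta^\star)=\int_0^1\nabla J\big(\theta^\star+t(\theta_0-\theta^\star)\big)\T(\theta_0-\theta^\star)\,dt,
\]
using convexity of $N$ to stay on the segment. We subtract $\nabla J(\theta^\star)\T(\theta_0-\theta^\star)=0$ and apply Cauchy--Schwarz with the Lipschitz bound, giving $\int_0^1 Lt\|\theta_0-\theta^\star\|^2dt=\tfrac L2\|\theta_0-\theta^\star\|^2$. This yields $J(\theta^\star)-J(\theta_0)\le \tfrac L2\|\theta^\star-\theta_0\|^2$.

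\textbf{Remaining steps.} The statement's claim that the upper bound is dropped in the constrained case needs no work. The final interpretive sentence is checked against \Cref{lem:quadratic}. For $J_b$ the Hessian is $-H$, so the bounds hold with $m=\lambda_{\min}(H)$ and $L=\lambda_{\max}(H)$, and the exact identity is $\tfrac12(\theta^\star-\theta_0)\T H(\theta^\star-\theta_0)$. Both bounds are attained along the extreme eigenvectors, which shows sharpness. The only mild obstacle is the domain issue, namely that $\theta_0$ must lie in $N$. The rest consists of routine Taylor-type estimates.
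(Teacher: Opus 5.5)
Your proof is correct and follows the paper's argument. You get the lower bound from the first-order strong-concavity inequality at $\theta^\star$, with the gradient term vanishing by stationarity or made nonpositive by the variational inequality, and the upper bound from the $L$-smoothness quadratic bound, which you derive from the integral form where the paper cites it directly. Your explicit caveat that $\theta_0$ must lie in the convex neighborhood where the curvature hypotheses hold, and your sharpness check with $m=\lambda_{\min}(H)$ and $L=\lambda_{\max}(H)$, are sound additions that the paper leaves implicit.
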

\begin{proof}
Strong concavity gives
\[
  J(\theta_0)\le J(\theta^\star)+\nabla J(\theta^\star)\T(\theta_0-\theta^\star)-\frac{m}{2}\|\theta_0-\theta^\star\|^2.
\]
If $\theta^\star$ is unconstrained the gradient term is zero; if it is constrained the variational inequality makes the gradient term nonpositive. Rearranging proves the lower bound. When $\nabla J(\theta^\star)=0$, $L$-smoothness of the concave function gives
\[
  J(\theta_0)\ge J(\theta^\star)-\frac{L}{2}\|\theta_0-\theta^\star\|^2,
\]
which is the upper bound after rearrangement.
\end{proof}

\Cref{prop:value,cor:plugin,cor:hpplugin,lem:quadratic,prop:curvature} are the spine of the paper. Under correct specification, anticipation is a value-of-information inequality: enlarging the optimizer's model cannot reduce the maximum attainable objective because the naive policy remains feasible. Under misspecification, however, anticipation becomes a form of optimizer overreach: the richer model creates directions in which estimation error is magnified by the control problem. The three sections below instantiate this single principle as an information-divergence value, a forecast-horizon premium, and a performative impact gap.

\section{Anticipation of Information}\label{sec:info}

The oldest sense of anticipation is access to information beyond the market filtration: the agent's strategy is adapted to $\G_t=\F_t\vee\sigma(L)$ for some functional $L$ of the future price path, an \emph{enlargement} of the public filtration $\F$ \citep{KaratzasShreve1998}. That privately held information has positive value, and that its competitive use erodes it, is the classical theme of \citet{GrossmanStiglitz1980}; we make its portfolio value exact below.

\subsection{Enlargement and the information drift}

Under Jacod's hypothesis --- the regular conditional law of $L$ given $\F_t$ is a.s.\ absolutely continuous with respect to the law of $L$ --- the public Brownian motion $W$ remains a semimartingale in the enlarged filtration:
\begin{equation}\label{eq:infdrift}
  W_t=\widetilde W_t+\int_0^t \alpha_s\,ds,
\end{equation}
where $\widetilde W$ is a $\G$-Brownian motion and $\alpha$ is the \emph{information drift} \citep{Jacod1985,JeulinYor,DiNunnoOksendalProske2009}. The insider sees the extra drift $\alpha$; the honest agent does not. Under Jacod's condition, the enlarged-filtration price process remains a $\G$-semimartingale, so admissible insider strategies are integrated in the $\G$-semimartingale sense. If instead one tries to trade on future-dependent, nonadapted functionals outside this semimartingale setting, forward/Skorokhod integrals and Malliavin calculus become the natural tools \citep{DiNunnoOksendalProske2009,BiaginiOksendal2005}.

\subsection{The value of anticipated information is an information quantity}

For logarithmic utility the value of anticipation has a closed and striking form.

\begin{theorem}[Information value under initial enlargement; \citealp{PikovskyKaratzas1996,AmendingerImkellerSchweizer1998}]\label{thm:pk}
Consider a complete It\^o market on $[0,T]$ and an agent with logarithmic utility. Let the public filtration be $\F$ and let the insider filtration be the initial enlargement $\G_t=\F_t\vee\sigma(L)$, where $L$ is a future-dependent signal satisfying Jacod's absolute-continuity condition. Assume the market-price-of-risk process is admissible for the public log-optimal problem, the enlarged-filtration log-optimal problem has finite expected utility, and the information drift $\alpha$ in $W_t=\widetilde W_t+\int_0^t\alpha_s\,ds$ is square-integrable. Then the additional expected logarithmic utility of the insider relative to the honest agent is
\[
  \Val_{\mathrm{info}}\;=\;\tfrac12\,\E\!\int_0^T \alpha_s^2\,ds.
\]
Equivalently, this quantity is the relative-entropy cost of passing from the public to the enlarged-information law. In the special case where $L$ is a discrete $\F_T$-measurable signal, it reduces to the Shannon entropy $\Ent(L)$. More generally the invariant object is a mutual-information or relative-entropy quantity $\mathcal I(L;\F_T)$, \emph{not} the coordinate-dependent differential entropy of $L$ itself.
\end{theorem}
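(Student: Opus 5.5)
The plan is to compute both log-optimal values explicitly and subtract, using the fact that for logarithmic utility the optimal wealth is the numéraire portfolio in each filtration. Write the one-dimensional price dynamics $dS_t/S_t=\mu_t\,dt+\sigma_t\,dW_t$ with market price of risk $\lambda_t=\mu_t/\sigma_t$; the multi-asset case follows coordinatewise. For a fraction-of-wealth strategy $\pi$, It\^o's formula gives
\[
\log X_T=\log x+\int_0^T\big(\pi_t\sigma_t\lambda_t-\tfrac12\pi_t^2\sigma_t^2\big)\,dt+\int_0^T\pi_t\sigma_t\,dW_t .
\]
For the honest agent the stochastic integral is an $\F$-martingale under the admissibility assumption. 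Pointwise maximization then gives $\pi\sigma=\lambda$ and value $\log x+\tfrac12\E\int_0^T\lambda_t^2\,dt$.

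For the insider I will substitute the decomposition \eqref{eq:infdrift}, $dW_t=d\widetilde W_t+\alpha_t\,dt$. The integrand becomes $\pi\sigma(\lambda+\alpha)-\tfrac12\pi^2\sigma^2$ plus a $\G$-Brownian integral. Square-integrability of $\alpha$ and finiteness of the enlarged problem let me kill that integral in expectation, possibly after localization and a Fatou/monotone convergence passage. The insider optimum is then $\pi\sigma=\lambda+\alpha$, with value $\log x+\tfrac12\E\int_0^T(\lambda_t+\alpha_t)^2\,dt$. Subtracting, the difference is $\E\int_0^T\lambda_t\alpha_t\,dt+\tfrac12\E\int_0^T\alpha_t^2\,dt$.

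The cross term vanishes because $\lambda$ is $\F$-adapted, hence $\G$-adapted. Both $\int\lambda\,dW$ and $\int\lambda\,d\widetilde W$ have zero mean, the first as an $\F$-martingale and the second as a $\G$-martingale. Taking expectations in $\int\lambda\,dW=\int\lambda\,d\widetilde W+\int\lambda\alpha\,dt$ gives $\E\int\lambda\alpha\,dt=0$, which leaves $\tfrac12\E\int\alpha^2$. This is the pointwise analogue of \Cref{lem:quadratic}, with $H=\sigma^2$ and score gap $\sigma\alpha$.

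For the equivalent relative-entropy forms I will follow \citet{AmendingerImkellerSchweizer1998}. Under Jacod's condition, the conditional density $p_t(\ell)=dP(L\in\cdot\mid\F_t)/dP(L\in\cdot)$ is an $\F$-martingale in $t$, and $\alpha_t=d\langle p(\ell),W\rangle_t/(p_t(\ell)\,dt)$ evaluated at $\ell=L$. It\^o's formula for $\log p_t(L)$ then shows $\tfrac12\E\int_0^T\alpha^2\,dt=\E[\log p_T(L)]-\E[\log p_0(L)]$. Here $p_0\equiv1$, and $\E\log p_T(L)$ is the Kullback--Leibler divergence between the joint law of $(L,\F_T)$ and the product law, i.e.\ $\mathcal I(L;\F_T)$. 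When $L$ is discrete and $\F_T$-measurable, $p_T(\ell)=\mathbf 1_{\{L=\ell\}}/P(L=\ell)$, so $\E\log p_T(L)=-\sum_\ell P(L=\ell)\log P(L=\ell)=\Ent(L)$. For continuous $L$, $p_T$ is singular, so only the mutual-information form survives; this is the stated point that differential entropy is not the invariant.

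The main obstacle is the integrability step: justifying that the $\widetilde W$-integrals and $\int\log p\,dp$-type local martingales are true martingales, and that $\alpha$ is well defined at $\ell=L$. I will handle this by localizing with stopping times, invoking the finiteness hypotheses, and citing the cited references for the measurable-in-$\ell$ versions of $p_t(\ell)$ rather than reproving them.
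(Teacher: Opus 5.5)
Your proposal is correct and follows the same route as the paper's proof sketch: compute both log-optimal values, read off the excess as the drift energy, and identify it with the Amendinger--Imkeller--Schweizer relative entropy. It also usefully supplies the two steps the sketch leaves to the citations, namely the cancellation of the cross term $\E\int_0^T\lambda_t\alpha_t\,dt$ and the It\^o computation for $\log p_t(L)$.

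One small correction: continuity of $L$ does not by itself make $p_T$ singular (for $W_T$ plus independent Gaussian noise it is a genuine density); singularity occurs when Jacod's condition breaks at $T$, e.g.\ for continuous $\F_T$-measurable $L$, where $\tfrac12\E\int_0^T\alpha_t^2\,dt=\mathcal I(L;\F_T)=+\infty$ lies outside the hypotheses, so the non-invariance of differential entropy is better argued from the value depending only on $\sigma(L)$, whereas $\Ent(f(L))\neq\Ent(L)$ in general for smooth bijections $f$.
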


\begin{proof}[Proof sketch]
For log utility the optimal expected growth rate is quadratic in the market price of risk. Initial enlargement turns the public Brownian motion into a $\G$-semimartingale with additional drift $\alpha$; comparing the honest and insider log-optimal wealth processes leaves exactly the excess term $\tfrac12\E\int_0^T\alpha_s^2\,ds$. Identifying this drift energy with a relative-entropy or mutual-information term follows from the entropy representation for initial enlargement \citep{PikovskyKaratzas1996,AmendingerImkellerSchweizer1998}. If $L$ is independent of the terminal market information then $\alpha\equiv0$ and the value is zero; if $L$ is a discrete $\F_T$-measurable terminal signal the mutual-information term becomes $\mathcal I(L;\F_T)=\Ent(L)$.
\end{proof}

\begin{remark}[Why the information face is the cleanest]
The information case is the cleanest because the value of anticipation is literally an information quantity. The insider does not merely have a better estimate of $\mu$; the insider optimizes under a different filtration. The information drift $\alpha$ measures the extra predictable component of returns created by the enlarged filtration, so $\Val_{\mathrm{info}}=0 \Leftrightarrow \alpha\equiv0$ whenever the signal carries no usable information about the price path. Noisy signals, progressive enlargement, and partial information replace the simple terminal quantity by a mutual-information rate or a filtered information-drift energy.
\end{remark}

\subsection{The mean--variance value of information is a trace}\label{sec:mvinfo}

The entropy formula of \Cref{thm:pk} is special to logarithmic utility. For the mean--variance optimizer that governs the rest of the paper, anticipated information has an equally closed but different form, which we will need for the unified decomposition of \Cref{sec:unified}. Suppose the information set $\Inf$ resolves part of the uncertainty in the mean: the conditional mean $\mu_\Inf=\E[\mu\given\Inf]$ is a random vector with $\E[\mu_\Inf]=\mu$ and \emph{resolved-mean covariance}
\[
  \Omega\;:=\;\Cov(\mu_\Inf)\;=\;\E\big[(\mu_\Inf-\mu)(\mu_\Inf-\mu)\T\big]\;\succeq 0,
\]
the covariance the signal removes from the prior; $\Omega=0$ exactly when the signal is uninformative about the mean. Assume the signal informs the mean but not the (conditional) covariance $\Sigma$.

\begin{proposition}[Value of information for the mean--variance optimizer]\label{prop:mvinfo}
A signal-adapted policy $\theta(\Inf)$ has realized objective $\E[\mu_\Inf\T\theta-\tfrac\gamma2\theta\T\Sigma\theta]$; its optimum is the conditional Markowitz portfolio $\theta(\Inf)=\gamma^{-1}\Sigma^{-1}\mu_\Inf$. The value of anticipating the signal over the unconditional optimizer $\gamma^{-1}\Sigma^{-1}\mu$ is
\[
  \Val_{\mathrm{info}}^{\mathrm{MV}}\;=\;\tfrac{1}{2\gamma}\,\tr\!\big(\Sigma^{-1}\Omega\big)\;\ge 0,
\]
vanishing iff $\Omega=0$. If $\mu_\Inf=\E[\mu\given s]$ for a Gaussian signal $s$ of precision rising from $0$ to $\infty$, $\Omega$ rises monotonically from $0$ to the full prior covariance of $\mu$, and $\Val_{\mathrm{info}}^{\mathrm{MV}}$ with it.
\end{proposition}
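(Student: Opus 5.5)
The plan is to reduce everything to \Cref{lem:quadratic} applied pointwise in the signal, then average over $\Inf$. First, for fixed realization of $\Inf$ the conditional objective is $\E[\mu\T\theta-\tfrac\gamma2\theta\T\Sigma\theta\given\Inf]=\mu_\Inf\T\theta-\tfrac\gamma2\theta\T\Sigma\theta$. This uses the tower property, measurability of $\theta(\Inf)$, and the assumption that the signal does not change the conditional covariance. Maximizing over $\Inf$-measurable $\theta$ therefore decouples across realizations. Strict concavity ($\gamma\Sigma\succ0$) then gives the unique pointwise maximizer $\theta(\Inf)=(\gamma\Sigma)^{-1}\mu_\Inf$, the conditional Markowitz portfolio. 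Taking expectations of the realized objective is legitimate provided $\mu_\Inf$ is square integrable, which I assume; it is needed for $\Omega$ to exist.

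Second, apply \Cref{lem:quadratic} with $H=\gamma\Sigma$, true score $b=\mu_\Inf$, and restricted score $\bar b=\mu$. The unconditional policy $\gamma^{-1}\Sigma^{-1}\mu$ is $\Inf$-measurable (it is constant), so it is feasible, consistent with \Cref{prop:value}. Pointwise the gap is $\tfrac12(\mu_\Inf-\mu)\T(\gamma\Sigma)^{-1}(\mu_\Inf-\mu)$. Taking expectations and using $\E[x\T Ax]=\tr(A\,\E[xx\T])$ gives
\[
\Val_{\mathrm{info}}^{\mathrm{MV}}=\tfrac1{2\gamma}\tr\big(\Sigma^{-1}\Omega\big).
\]
Nonnegativity follows because $\tr(\Sigma^{-1}\Omega)=\tr(\Sigma^{-1/2}\Omega\Sigma^{-1/2})$ with a PSD argument. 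If this trace is zero, then $\Sigma^{-1/2}\Omega\Sigma^{-1/2}=0$, hence $\Omega=0$. Conversely, $\Omega=0$ gives zero value.

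Third, for the Gaussian monotonicity claim, I would take prior $\mu\sim N(m,S)$ and signal $s=\mu+\epsilon$ with $\epsilon\sim N(0,\tau^{-1}I)$, or more generally precision $\tau P_0$. The law of total covariance gives $\Omega=\Cov(\E[\mu\given s])=S-\Cov(\mu\given s)=S-(S^{-1}+\tau P_0)^{-1}$. As $\tau$ increases, $S^{-1}+\tau P_0$ increases in the Loewner order, so its inverse decreases, because matrix inversion is operator-antitone on $\succ0$ matrices. Hence $\Omega$ is Loewner-increasing in $\tau$. At $\tau=0$ it equals $0$, and as $\tau\to\infty$ it tends to $S$ when $P_0\succ0$. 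Because $\Omega\mapsto\tr(\Sigma^{-1}\Omega)$ is monotone under the Loewner order (a trace against a PSD matrix), the value increases along with $\Omega$.

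The main obstacle is the first step: justifying the pointwise (conditional) optimization rigorously. This requires the interchange of sup and expectation over measurable policies and the invariance of the covariance under conditioning. Once these hold, the rest is the master identity plus standard Gaussian updating. I would state the integrability and unchanged-$\Sigma$ assumptions explicitly at that point.
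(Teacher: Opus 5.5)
Your proof is correct. For the trace formula it uses the same completed square as the paper, with different bookkeeping. The paper computes the two expected values separately, $\tfrac1{2\gamma}\E[\mu_\Inf\T\Sigma^{-1}\mu_\Inf]=\tfrac1{2\gamma}\tr\big(\Sigma^{-1}(\mu\mu\T+\Omega)\big)$ for the conditional Markowitz policy against $\tfrac1{2\gamma}\mu\T\Sigma^{-1}\mu$ for the constant one, and subtracts. You instead apply \Cref{lem:quadratic} realization by realization with $H=\gamma\Sigma$, $b=\mu_\Inf$, $\bar b=\mu$, and average the pointwise regret $\tfrac12(\mu_\Inf-\mu)\T(\gamma\Sigma)^{-1}(\mu_\Inf-\mu)$. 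Your version makes nonnegativity visible pathwise. You also prove the ``only if'' half of the vanishing claim through $\tr(\Sigma^{-1/2}\Omega\Sigma^{-1/2})=0\Rightarrow\Omega=0$, which the paper leaves implicit.

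You genuinely go beyond the paper on the Gaussian clause, which the paper's proof does not address at all. Your argument supplies exactly what is missing:
\begin{itemize}
\item $\Omega=S-(S^{-1}+\tau P_0)^{-1}$ from the law of total covariance;
\item Loewner monotonicity from operator antitonicity of the inverse;
\item the endpoint from $(S^{-1}+\tau P_0)^{-1}\preceq\tau^{-1}P_0^{-1}\to0$;
\item monotonicity of $\Omega\mapsto\tr(\Sigma^{-1}\Omega)$.
\end{itemize}
Your hypothesis $P_0\succ0$ is genuinely needed for the endpoint. A signal uninformative in some direction leaves $\Omega$ rank-deficient even in the limit, so it cannot reach a nonsingular $S$.

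Finally, the step you flag as the main obstacle is lighter than you suggest. The policy $\gamma^{-1}\Sigma^{-1}\mu_\Inf$ is $\Inf$-measurable and maximizes the integrand pointwise. It therefore dominates every $\Inf$-measurable policy with integrable objective, and no interchange of supremum and expectation is needed.
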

\begin{proof}
The objective is concave with conditional maximizer $\gamma^{-1}\Sigma^{-1}\mu_\Inf$ and pointwise value $\tfrac1{2\gamma}\mu_\Inf\T\Sigma^{-1}\mu_\Inf$; taking expectations, $\E[\mu_\Inf\T\Sigma^{-1}\mu_\Inf]=\tr(\Sigma^{-1}\E[\mu_\Inf\mu_\Inf\T])=\tr(\Sigma^{-1}(\mu\mu\T+\Omega))$. The unconditional optimizer attains $\tfrac1{2\gamma}\mu\T\Sigma^{-1}\mu=\tfrac1{2\gamma}\tr(\Sigma^{-1}\mu\mu\T)$; subtracting leaves $\tfrac1{2\gamma}\tr(\Sigma^{-1}\Omega)$. Nonnegativity is $\Sigma^{-1}\succ0$, $\Omega\succeq0$.
\end{proof}

\begin{remark}[Which spectral functional prices information: trace vs.\ log-determinant]\label{rem:tracevslogdet}
\Cref{thm:pk,prop:mvinfo} value the \emph{same} resolved information by \emph{different} functionals of the posterior. Under log-utility the value is an entropy --- for a Gaussian mean it is $\tfrac12\log\det(\Sigma_{\mathrm{prior}}\Sigma_{\mathrm{post}}^{-1})$, a \emph{log-determinant} of the precision gain. Under mean--variance it is $\tfrac1{2\gamma}\tr(\Sigma^{-1}\Omega)$, a \emph{trace} of the resolved covariance scaled by inverse risk. The utility, not the signal, selects the spectral functional: log-utility weights the multiplicative information gain across all directions equally (a sum of logs of eigenvalues), mean--variance weights the additive, risk-deflated gain (a sum of eigenvalues). They agree only to leading order for small signals and diverge for strong ones; conflating ``entropy of the signal'' with ``mean--variance value of the signal'' is therefore a category error we keep explicit \citep{CoverThomas2006}.
\end{remark}

\paragraph{Status.} Classical and proved for log-utility and initial enlargement; \Cref{prop:mvinfo} is the proved mean--variance counterpart used below. The noisy and progressive cases are partially open and the natural home of new work.

\section{Anticipation of a Forecast}\label{sec:forecast}

The second sense is forecast-driven: returns are predictable, the agent forecasts their dynamics over a horizon, and optimizes the whole trajectory subject to trading costs, re-planning as the forecast updates. The base information set is unchanged --- the agent is not an insider --- but the optimizer uses the \emph{predicted dynamics} rather than a one-period snapshot.

\subsection{Multi-period optimization with predictable returns and costs}

Let factor-driven expected returns evolve as a mean-reverting process and let trading incur quadratic transaction costs. In the linear--quadratic formulation of \citet{GarleanuPedersen2013}, expected returns are $\E_t[r_{t+1}]=B f_t$ with factors $f_{t+1}=(I-\Phi)f_t+\varepsilon_{t+1}$, the per-period objective is mean--variance net of a quadratic cost $\tfrac12\Delta\theta_t\T\Gamma\,\Delta\theta_t$, and the agent maximizes the discounted sum.

\begin{proposition}[Forward-looking aim portfolio; structural form of \citealp{GarleanuPedersen2013}]\label{thm:gp}
Consider the linear--quadratic dynamic trading problem with predictable expected returns $\E_t[r_{t+1}]=Bf_t$, $f_{t+1}=(I-\Phi)f_t+\varepsilon_{t+1}$, quadratic risk penalty $\tfrac{\gamma}{2}\theta_t\T\Sigma\theta_t$, and quadratic transaction cost $\tfrac12(\theta_t-\theta_{t-1})\T\Gamma(\theta_t-\theta_{t-1})$. The optimal policy has the partial-adjustment form
\[
  \theta_t=\theta_{t-1}+M\big(\operatorname{aim}_t-\theta_{t-1}\big),
\]
where $M$ is the stabilizing trade-rate matrix obtained from the associated Riccati equation. The aim portfolio is a forward-looking linear combination of expected future Markowitz portfolios,
\[
  \operatorname{aim}_t=\sum_{s\ge0} W_s\,\E_t[\theta^{\mathrm M}_{t+s}],
  \qquad
  \theta^{\mathrm M}_{t+s}=\gamma^{-1}\Sigma^{-1}\E_t[r_{t+s+1}],
\]
for matrix weights $(W_s)_{s\ge0}$ determined by the return persistence, transaction-cost matrix, risk matrix, and discounting. Hence forecast anticipation enters through the target toward which the investor trades, not merely through the one-period expected return.
\end{proposition}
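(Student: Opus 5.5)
The plan is to solve the infinite-horizon LQ problem by dynamic programming with a quadratic value function, and then read off the partial-adjustment and aim structure from the first-order condition. The state is $(\theta_{t-1},f_t)$. I would posit a discounted value function of the form
\[
V(\theta_{t-1},f_t)=-\tfrac12\theta_{t-1}\T A_{\theta\theta}\theta_{t-1}+\theta_{t-1}\T A_{\theta f}f_t+\tfrac12 f_t\T A_{ff}f_t+a_0 .
\]
With discount factor $\beta$, I would substitute this into the Bellman equation, whose per-period reward is $\theta_t\T Bf_t-\tfrac\gamma2\theta_t\T\Sigma\theta_t-\tfrac12\Delta\theta_t\T\Gamma\Delta\theta_t$. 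Since $f_{t+1}=(I-\Phi)f_t+\varepsilon_{t+1}$ has zero-mean innovations, the continuation term only requires $\E_t[f_{t+1}]=(I-\Phi)f_t$ together with a constant from the noise covariance.

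The first step is the first-order condition in $\theta_t$:
\[
\big(\gamma\Sigma+\Gamma+\beta A_{\theta\theta}\big)\theta_t=\Gamma\theta_{t-1}+\big(B+\beta A_{\theta f}(I-\Phi)\big)f_t .
\]
Matching coefficients then gives three equations. The quadratic-in-$\theta$ block yields a matrix Riccati equation for $A_{\theta\theta}$. The cross block yields a linear Sylvester-type equation for $A_{\theta f}$, driven by $B$ and $(I-\Phi)$.

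I would define $M:=(\gamma\Sigma+\Gamma+\beta A_{\theta\theta})^{-1}(\gamma\Sigma+\beta A_{\theta\theta})$, so that $I-M=(\cdots)^{-1}\Gamma$. The first-order condition then rearranges to $\theta_t=\theta_{t-1}+M(\operatorname{aim}_t-\theta_{t-1})$ with
\[
\operatorname{aim}_t=(\gamma\Sigma+\beta A_{\theta\theta})^{-1}\big(B+\beta A_{\theta f}(I-\Phi)\big)f_t .
\]
The remaining work is to rewrite this aim as a weighted sum of future Markowitz portfolios.

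I would do this by unrolling the Sylvester equation for $A_{\theta f}$ as a Neumann series. Each iteration multiplies by $(I-\Phi)$ on the right and by a discounted transition matrix $\beta\,\Gamma(\gamma\Sigma+\Gamma+\beta A_{\theta\theta})^{-1}$-type factor on the left. This expresses $A_{\theta f}f_t$ as a series $\sum_{s\ge1}C_s B(I-\Phi)^{s}f_t$.

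Since $\E_t[r_{t+s+1}]=B(I-\Phi)^sf_t$, each term is $C_s\gamma\Sigma\,\E_t[\theta^{\mathrm M}_{t+s}]$. This gives the claimed form with $W_s:=(\gamma\Sigma+\beta A_{\theta\theta})^{-1}C_s\gamma\Sigma$ and $C_0=I$. By construction these weights depend only on $\Phi,\Gamma,\Sigma,\gamma,\beta$.

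The main obstacle is existence and uniqueness of a stabilizing solution of the Riccati equation, together with convergence of the Neumann series, which also justifies the verification argument for the infinite horizon. For this I would first cite standard discounted LQ theory. The cost $\Gamma\succ0$ and risk $\gamma\Sigma\succ0$ give a strictly concave stage reward, and the uncontrolled factor dynamics are stable if $\Phi$ has eigenvalues in $(0,2)$ or if $\beta$ is small enough. Under these conditions the Riccati iteration from $A=0$ converges monotonically to a positive semidefinite $A_{\theta\theta}$.

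Given that fixed point, the spectral radius of the left factor is $<1$, because $\beta\Gamma(\gamma\Sigma+\Gamma+\beta A_{\theta\theta})^{-1}$ has spectral radius below one. The spectral radius of $I-\Phi$ is $\le1$ under the stated stationarity. Together these make the Sylvester series converge geometrically, and a transversality argument shows the Bellman solution is the optimum.
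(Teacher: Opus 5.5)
Your proposal is correct and follows the same route as the paper's sketch. Both use dynamic programming with a quadratic value function in the state $(\theta_{t-1},f_t)$, obtain the trade rate from a Riccati equation, and unroll the factor dynamics to write the aim as weighted expected future Markowitz portfolios; you do the unrolling through the Neumann series of the Stein equation for $A_{\theta f}$, which makes the coefficients explicit. The one step you state circularly, $\rho(\beta\Gamma\mathcal{Q}^{-1})<1$ with $\mathcal{Q}:=\gamma\Sigma+\Gamma+\beta A_{\theta\theta}$, holds because $\Gamma\mathcal{Q}^{-1}$ is similar to $\mathcal{Q}^{-1/2}\Gamma\mathcal{Q}^{-1/2}$, whose eigenvalues lie in $[0,1)$ since $\mathcal{Q}-\Gamma=\gamma\Sigma+\beta A_{\theta\theta}\succ0$; the same bound shows $I-M=\mathcal{Q}^{-1}\Gamma$ is Schur, i.e.\ $M$ is the stabilizing trade rate.
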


\begin{proof}[Proof sketch]
The problem is a linear--quadratic regulator with state $(\theta_{t-1},f_t)$ and control $\Delta\theta_t=\theta_t-\theta_{t-1}$. The value function is quadratic, the optimal policy affine, and the stabilizing solution of the Riccati equation yields the trade-rate matrix $M$. Substituting the factor dynamics into the affine target expresses it as a weighted sum of expected future Markowitz portfolios; the explicit coefficients $(W_s)$ depend on the discount factor and on $(\Gamma,\gamma\Sigma,\Phi)$ \citep{GarleanuPedersen2013,Boyd2017}. The structural point is that the optimizer trades toward a forecast of where the optimal portfolio will be, not only toward today's Markowitz portfolio.
\end{proof}

\begin{theorem}[Finite-horizon forecast value as a quadratic premium]\label{thm:forecastvalue}
Fix a horizon $T$, discount factor $\beta\in(0,1]$, initial book $\theta_{-1}$, transaction-cost matrix $\Gamma\succeq0$, and risk matrix $\gamma\Sigma\succ0$. Let $m=(m_0,\ldots,m_T)$ be the true conditional forecast stack, $m_s=\E_t[r_{t+s+1}]$, and let $\bar m$ be any restricted forecast stack, for example the myopic stack that repeats $m_0$ or ignores alpha decay. Stack holdings as $\Theta=(\theta_0,\ldots,\theta_T)$ and consider
\[
  J_m(\Theta):=\sum_{s=0}^{T}\beta^s\left[m_s\T\theta_s-\tfrac\gamma2\theta_s\T\Sigma\theta_s-\tfrac12(\theta_s-\theta_{s-1})\T\Gamma(\theta_s-\theta_{s-1})\right].
\]
Define
\[
  \mathcal B:=\diag(I,\beta I,\ldots,\beta^T I),\qquad
  \mathcal R:=\diag(\gamma\Sigma,\beta\gamma\Sigma,\ldots,\beta^T\gamma\Sigma),
\]
\[
  \mathcal C:=\diag(\Gamma,\beta\Gamma,\ldots,\beta^T\Gamma),\qquad
  D:=\begin{bmatrix}
  I & 0 & \cdots & 0\\
  -I & I & \ddots & \vdots\\
  0 & \ddots & \ddots & 0\\
  0 & \cdots & -I & I
  \end{bmatrix},
  \qquad q:=(\theta_{-1},0,\ldots,0).
\]
The finite-horizon trading objective can be written, up to the constant $-\tfrac12 q\T\mathcal Cq$, as
\[
  J_m(\Theta)=(\mathcal Bm+D\T\mathcal Cq)\T\Theta-\tfrac12\Theta\T K\Theta,
  \qquad K:=\mathcal R+D\T\mathcal C D\succ0.
\]
Hence the forecast-aware optimum and the restricted-forecast policy are
\[
  \Theta_m^*=K^{-1}(\mathcal Bm+D\T\mathcal Cq),\qquad
  \Theta_{\bar m}^*=K^{-1}(\mathcal B\bar m+D\T\mathcal Cq),
\]
and the exact realized value of using the true forecast stack rather than the restricted stack is
\[
  \Val_{\mathrm{forecast}}
  =J_m(\Theta_m^*)-J_m(\Theta_{\bar m}^*)
  =\tfrac12(m-\bar m)\T\mathcal B K^{-1}\mathcal B(m-\bar m)\;\ge 0.
\]
It vanishes iff the forecast error $m-\bar m$ lies in the null space of $\mathcal B K^{-1}\mathcal B$; under $K\succ0$ and $\beta>0$, this is simply $m=\bar m$.
\end{theorem}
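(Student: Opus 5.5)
The plan is to reduce everything to the master quadratic identity of \Cref{lem:quadratic} with precision $H=K$ and scores $b=\mathcal Bm+D\T\mathcal Cq$, $\bar b=\mathcal B\bar m+D\T\mathcal Cq$. The work therefore splits into three parts. First, rewrite $J_m$ in stacked form. Second, verify $K\succ0$ so that the optima are unique and given by $K^{-1}b$. Third, read off the value gap and its null-space characterization.

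For the stacked rewriting, collect the increments $\theta_s-\theta_{s-1}$ for $s=0,\dots,T$ into the vector $D\Theta-q$. The first block row gives $\theta_0-\theta_{-1}$, and row $s\ge1$ gives $\theta_s-\theta_{s-1}$. With the discount weights absorbed into $\mathcal C$, the cost sum equals $\tfrac12(D\Theta-q)\T\mathcal C(D\Theta-q)$. The linear return term is $\sum_s\beta^sm_s\T\theta_s=(\mathcal Bm)\T\Theta$, and the risk term is $\tfrac12\Theta\T\mathcal R\Theta$. Expanding the cost quadratic gives $\tfrac12\Theta\T D\T\mathcal CD\Theta-(D\T\mathcal Cq)\T\Theta+\tfrac12q\T\mathcal Cq$. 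Collecting terms produces the stated form with $K=\mathcal R+D\T\mathcal CD$, up to the constant $-\tfrac12q\T\mathcal Cq$. Because that constant does not depend on $\Theta$, it cancels in any difference of objective values.

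For positive definiteness, note that $\mathcal R\succ0$ since $\gamma\Sigma\succ0$ and $\beta>0$. Also $D\T\mathcal CD\succeq0$ since $\Gamma\succeq0$. Hence $K\succ0$. Strict concavity then yields the unique maximizers $\Theta_m^*=K^{-1}b$ and $\Theta_{\bar m}^*=K^{-1}\bar b$. The restricted policy is by definition the maximizer of $J_{\bar m}$, and it differs from the true-objective problem only through the score, since $q$ and $K$ are shared.

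Applying \Cref{lem:quadratic} with $H=K$ gives $J_m(\Theta_m^*)-J_m(\Theta_{\bar m}^*)=\tfrac12(b-\bar b)\T K^{-1}(b-\bar b)$. The $D\T\mathcal Cq$ terms cancel in the difference, leaving $b-\bar b=\mathcal B(m-\bar m)$. This yields $\tfrac12(m-\bar m)\T\mathcal BK^{-1}\mathcal B(m-\bar m)\ge0$. For the vanishing claim, a PSD quadratic form $x\T Ax$ is zero iff $Ax=0$, that is, iff $x$ lies in the null space of $A=\mathcal BK^{-1}\mathcal B$. Since $K^{-1}\succ0$ and $\mathcal B$ is invertible when $\beta>0$, the matrix $A$ is positive definite, so the value vanishes iff $m=\bar m$.

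The only step needing care is the bookkeeping of the first increment and the sign of the cross term $D\T\mathcal Cq$. I would check it directly for $T=0$: there $J=m_0\T\theta_0-\tfrac\gamma2\theta_0\T\Sigma\theta_0-\tfrac12(\theta_0-\theta_{-1})\T\Gamma(\theta_0-\theta_{-1})$, whose linear coefficient is $m_0+\Gamma\theta_{-1}$, matching $\mathcal Bm+D\T\mathcal Cq$.
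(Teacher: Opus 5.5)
Your proposal is correct and follows essentially the paper's route. The paper likewise expands $J_m$ into the stacked quadratic, uses $K=\mathcal R+D\T\mathcal CD\succ0$, notes $\Theta_m^*-\Theta_{\bar m}^*=K^{-1}\mathcal B(m-\bar m)$, and completes the square in the $K$-norm, which is \Cref{lem:quadratic} with $H=K$. Your explicit bookkeeping of $D\Theta-q$ and the sign of $D\T\mathcal Cq$ (with the $T=0$ check), and your congruence argument showing $\mathcal BK^{-1}\mathcal B\succ0$ for the vanishing claim, supply details the paper leaves implicit.
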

\begin{proof}
Expanding the discounted risk and trading-cost objective gives the displayed quadratic form. Since $\mathcal R\succ0$ and $D\T\mathcal C D\succeq0$, $K\succ0$, so the maximizer is obtained by the linear system $K\Theta=\mathcal Bm+D\T\mathcal Cq$. Replacing $m$ by $\bar m$ gives the restricted policy, and subtracting the two linear systems yields $\Theta_m^*-\Theta_{\bar m}^*=K^{-1}\mathcal B(m-\bar m)$. A strictly concave quadratic satisfies $J_m(\Theta_m^*)-J_m(\Theta)=\tfrac12(\Theta-\Theta_m^*)\T K(\Theta-\Theta_m^*)$ for every feasible stacked policy $\Theta$; set $\Theta=\Theta_{\bar m}^*$ to obtain the formula.

\end{proof}

\begin{corollary}[Decision-relevant forecast error]\label{cor:decisionforecast}
For any candidate forecast stack $\widetilde m$, let $e(\widetilde m):=m-\widetilde m$ and define the corresponding policy
\[
  \Theta_{\widetilde m}^{*}:=K^{-1}(\mathcal B\widetilde m+D\T\mathcal Cq).
\]
Under the true forecast stack $m$, its loss relative to the forecast-aware optimum is
\[
  J_m(\Theta_m^*)-J_m(\Theta_{\widetilde m}^*)
  =\tfrac12 e(\widetilde m)\T\mathcal B K^{-1}\mathcal B e(\widetilde m).
\]
Thus forecast model $A$ is better for trading than forecast model $B$ exactly when its error is smaller in the optimizer-weighted norm induced by $\mathcal B K^{-1}\mathcal B$, not necessarily when it has smaller Euclidean mean-squared error.
\end{corollary}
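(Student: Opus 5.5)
The plan is to obtain \Cref{cor:decisionforecast} as a direct specialization of \Cref{lem:quadratic}, using the stacked representation from \Cref{thm:forecastvalue}. That theorem shows that, up to the additive constant $-\tfrac12 q\T\mathcal C q$ (which cancels in any difference of objective values), the trading objective is
\[
J_m(\Theta)=b\T\Theta-\tfrac12\Theta\T K\Theta,
\qquad
b:=\mathcal B m+D\T\mathcal C q,
\qquad
K\succ0 .
\]
So I take $H=K$ and true score $b$ in the master identity.

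The candidate policy is $\Theta_{\widetilde m}^*=K^{-1}\bar b$, where $\bar b:=\mathcal B\widetilde m+D\T\mathcal C q$. Both policies share the inventory term $D\T\mathcal C q$, so that term drops out of the score difference:
\[
b-\bar b=\mathcal B(m-\widetilde m)=\mathcal B\,e(\widetilde m).
\]
Applying \Cref{lem:quadratic} then gives
\[
J_m(\Theta_m^*)-J_m(\Theta_{\widetilde m}^*)
=\tfrac12\,e\T\mathcal B\T K^{-1}\mathcal B\,e .
\]
Since $\mathcal B$ is diagonal, $\mathcal B\T=\mathcal B$, and this is exactly the displayed formula. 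Equivalently, this is the argument of \Cref{thm:forecastvalue} with $\bar m$ replaced by an arbitrary $\widetilde m$; nothing in that proof used the restricted structure of $\bar m$.

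For the comparison claim, I compare two forecast models $A$ and $B$ through their errors $e_A$ and $e_B$. Each model's realized loss relative to the same optimum $J_m(\Theta_m^*)$ equals its squared seminorm $\tfrac12\|e\|^2_{\mathcal BK^{-1}\mathcal B}$. Hence $J_m(\Theta_{A}^*)>J_m(\Theta_{B}^*)$ exactly when $e_A$ is smaller than $e_B$ in that norm.

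To justify the phrase ``not necessarily smaller Euclidean error,'' I would give a short example: two error vectors of equal Euclidean length aligned with different eigenvectors of $\mathcal BK^{-1}\mathcal B$. Because $K$ is not a multiple of the identity in general (for instance, when $\Sigma$ is anisotropic), the weighted norms differ. Scaling one of the vectors slightly then reverses the Euclidean ranking while preserving the weighted ranking.

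There is no real obstacle here. The only step needing care is checking that the constant and the $D\T\mathcal C q$ term cancel, and both do so by linearity.
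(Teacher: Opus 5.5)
Your proposal is correct and takes the paper's route. The paper applies \Cref{thm:forecastvalue} with $\bar m=\widetilde m$; that theorem's proof is the same completing-the-square identity you invoke through \Cref{lem:quadratic}. The paper then gets the pairwise comparison by subtracting the two weighted losses, as you do, and your explicit checks that the constant and the $D\T\mathcal Cq$ term cancel, together with your anisotropic ranking-reversal example, are sound additions.
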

\begin{proof}
Apply \Cref{thm:forecastvalue} with $\bar m=\widetilde m$. Pairwise comparison follows by subtracting the two weighted quadratic losses.
\end{proof}

\begin{remark}[Why this is the forecast analogue of information value]
\Cref{thm:forecastvalue} is the dynamic-trading counterpart of the information trace in \Cref{prop:mvinfo}. A signal is valuable only through the covariance it resolves in directions priced by $\Sigma^{-1}$; a forecast path is valuable only through deviations $m-\bar m$ in directions priced by the inverse dynamic precision $K^{-1}$. Transaction costs suppress high-frequency forecast components because they make $K$ large in directions requiring rapid rebalancing. Thus ``better forecasting'' is not a scalar property: it is valuable exactly where the optimizer can express it after risk and trading frictions.
\end{remark}

\begin{remark}[The value of forecast anticipation and its trap]
Relative to the myopic or otherwise restricted forecast optimizer, $\Val_{\mathrm{forecast}}\ge 0$ by \Cref{prop:value} and, in the finite-horizon quadratic case, by the exact norm identity of \Cref{thm:forecastvalue}: anticipating alpha decay lets the agent avoid trading into positions it will immediately unwind. But the forecast $B,\Phi$ is \emph{estimated}, so \Cref{prop:value}'s second clause bites: an aim portfolio built from an over-fit forecast is over-anticipation, and realized $\Val$ can be negative. This is the predict-then-optimize problem --- a statistically accurate forecast is not a decision-optimal one, since the optimizer is sensitive to error only in the directions that move the aim \citep{ElmachtoubGrigas2022}. Decision-focused training, which fits the forecast through the optimizer, is the correct response and is where this branch meets modern learning \citep{DontiAmosKolter2017}.
\end{remark}

\paragraph{Status.} The linear--quadratic case and the decision-relevant forecast norm are proved (closed form); convex multi-period extensions are computational \citep{Boyd2017}; decision-focused estimation of the forecast is open in the portfolio setting.

\section{Anticipation of Impact}\label{sec:impact}

The third sense is the one that ties anticipation to the estimation--control gap most tightly: the strategy's own trades move prices, so the return law is a function of the deployed portfolio, $P_\theta$, and an anticipatory optimizer differentiates through that dependence. We work the linear--quadratic case fully because it yields a clean and, to our knowledge, underemphasized result: \emph{anticipating impact is the monopolist's correction}.

\subsection{The performative return law}

Let permanent linear impact depress expected returns in the deployed portfolio,
\begin{equation}\label{eq:impactlaw}
  \E_{r\sim P_\theta}[r]=\mu-\Lambda\theta,\qquad \Lambda\succeq 0\ \text{symmetric},
\end{equation}
so that holding $\theta$ moves the achievable mean against the position (a Kyle/Almgren--Chriss-type permanent impact; $\Lambda$ is the impact matrix) \citep{Kyle1985,AlmgrenChriss2000}. The realized objective of \Cref{def:system} is then
\begin{equation}\label{eq:Jimpact}
  J(\theta)=(\mu-\Lambda\theta)\T\theta-\tfrac{\gamma}{2}\theta\T\Sigma\theta
  =\mu\T\theta-\theta\T\Lambda\theta-\tfrac{\gamma}{2}\theta\T\Sigma\theta.
\end{equation}

\begin{theorem}[Anticipating impact is the monopolist correction]\label{thm:impact}
Assume the permanent linear impact law $\E_{r\sim P_\theta}[r]=\mu-\Lambda\theta$ with $\Sigma\succ0$, $\Lambda\succeq0$, $\Lambda=\Lambda\T$. Then:
\begin{enumerate}[leftmargin=2em,itemsep=2pt]
\item The price-taking self-consistent allocation is $\thetana=(\Lambda+\gamma\Sigma)^{-1}\mu$.
\item The impact-aware anticipatory allocation is $\thetaan=(2\Lambda+\gamma\Sigma)^{-1}\mu$.
\item The value of anticipating impact is
\[
  \Val_{\mathrm{impact}}=J(\thetaan)-J(\thetana)
  =\tfrac12(\thetana-\thetaan)\T(2\Lambda+\gamma\Sigma)(\thetana-\thetaan)\ \ge0.
\]
\item The gap vanishes if and only if $\Lambda\thetaM=0$ with $\thetaM=\gamma^{-1}\Sigma^{-1}\mu$, equivalently $\Lambda\Sigma^{-1}\mu=0$. In the special case $\Sigma=I$ this reduces to $\Lambda\mu=0$.
\end{enumerate}
\end{theorem}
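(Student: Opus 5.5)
Item 1 comes from the price-taking fixed point of \Cref{def:naive}. Item 2 comes from the first-order condition of the strictly concave objective \eqref{eq:Jimpact}. Item 3 is then an instance of \Cref{lem:quadratic}, and item 4 is an algebraic characterization of when the two linear systems have the same solution.

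For item 1, the naive optimizer freezes the mean at $\bar\mu=\mu-\Lambda\bar\theta$ and solves the Markowitz problem $\max_\theta \bar\mu\T\theta-\tfrac\gamma2\theta\T\Sigma\theta$. Its solution is $\theta=\gamma^{-1}\Sigma^{-1}(\mu-\Lambda\bar\theta)$. Imposing the self-consistency condition $\bar\theta=\theta$ gives $(\Lambda+\gamma\Sigma)\theta=\mu$. The matrix $\Lambda+\gamma\Sigma$ is positive definite, because $\Lambda\succeq0$ and $\gamma\Sigma\succ0$. So the fixed point exists, is unique, and equals $\thetana$.

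For item 2, $J(\theta)=\mu\T\theta-\tfrac12\theta\T(2\Lambda+\gamma\Sigma)\theta$, which uses the symmetry of $\Lambda$. This is exactly the form $J_b$ of \Cref{lem:quadratic} with $b=\mu$ and $H=2\Lambda+\gamma\Sigma\succ0$. The gradient condition $\mu-(2\Lambda+\gamma\Sigma)\theta=0$ then gives $\thetaan=H^{-1}\mu$.

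For item 3, I write $\thetana=H^{-1}\bar b$ with $\bar b:=H\thetana=(2\Lambda+\gamma\Sigma)(\Lambda+\gamma\Sigma)^{-1}\mu$. \Cref{lem:quadratic} gives $\Val_{\mathrm{impact}}=\tfrac12(\mu-\bar b)\T H^{-1}(\mu-\bar b)$. Since $H^{-1}(\mu-\bar b)=\thetaan-\thetana$, this equals $\tfrac12(\thetana-\thetaan)\T H(\thetana-\thetaan)\ge0$. Alternatively, the same identity follows directly from the completed square $J(\thetaan)-J(\theta)=\tfrac12(\theta-\thetaan)\T H(\theta-\thetaan)$.

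Item 4 is the step needing the most care. Because $H\succ0$, the gap is zero if and only if $\thetana=\thetaan$.

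Suppose first that $\thetana=\thetaan=:\theta$. Then $(\Lambda+\gamma\Sigma)\theta=\mu=(2\Lambda+\gamma\Sigma)\theta$. Subtracting the two equations gives $\Lambda\theta=0$, and hence $\gamma\Sigma\theta=\mu$, so $\theta=\thetaM$. Therefore $\Lambda\thetaM=0$.

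Conversely, suppose $\Lambda\thetaM=0$. Then $(\Lambda+\gamma\Sigma)\thetaM=\gamma\Sigma\thetaM=\mu$, and likewise $(2\Lambda+\gamma\Sigma)\thetaM=\mu$. Both systems have unique solutions, so $\thetana=\thetaan=\thetaM$.

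Finally, $\Lambda\thetaM=\gamma^{-1}\Lambda\Sigma^{-1}\mu$, so the condition is equivalent to $\Lambda\Sigma^{-1}\mu=0$. Setting $\Sigma=I$ gives $\Lambda\mu=0$.

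The only subtle point in the whole argument is interpreting the naive optimum as a fixed point, together with checking that it is unique. Positive definiteness of $\Lambda+\gamma\Sigma$ handles the uniqueness.
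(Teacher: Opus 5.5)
Your proof is correct and follows the paper's argument essentially step for step. The price-taking fixed point comes from the frozen-mean first-order condition, and the impact-aware optimizer from the Hessian $-(2\Lambda+\gamma\Sigma)$. The value gap comes from completing the square; your route through \Cref{lem:quadratic} with $\bar b=H\thetana$ is the same computation. The zero-gap condition comes from subtracting the two normal equations. Your explicit converse in item 4 is a small improvement: $\Lambda\thetaM=0$ makes $\thetaM$ solve both systems, hence $\thetana=\thetaan$ by uniqueness. The paper writes out only the necessity direction.
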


\begin{proof}
The price-taker treats the impact-adjusted mean as fixed at $\mu-\Lambda\bar\theta$ and solves $\max_\theta (\mu-\Lambda\bar\theta)\T\theta-\tfrac\gamma2\theta\T\Sigma\theta$, with first-order condition $\mu-\Lambda\bar\theta-\gamma\Sigma\theta=0$. Self-consistency $\bar\theta=\theta$ gives $(\Lambda+\gamma\Sigma)\theta=\mu$, hence $\thetana=(\Lambda+\gamma\Sigma)^{-1}\mu$. The anticipatory agent maximizes the true deployed objective \eqref{eq:Jimpact}, whose Hessian is $-(2\Lambda+\gamma\Sigma)\prec0$, so the unique maximizer satisfies $\mu-2\Lambda\theta-\gamma\Sigma\theta=0$, i.e.\ $\thetaan=(2\Lambda+\gamma\Sigma)^{-1}\mu$. Since $J$ is a strictly concave quadratic with maximizer $\thetaan$, completing the square gives $J(\thetaan)-J(\theta)=\tfrac12(\theta-\thetaan)\T(2\Lambda+\gamma\Sigma)(\theta-\thetaan)$ for any $\theta$; set $\theta=\thetana$. The gap is zero iff $\thetana=\thetaan$; subtracting the two normal equations yields $\Lambda\theta=0$, then $\mu=\gamma\Sigma\theta$ gives $\theta=\gamma^{-1}\Sigma^{-1}\mu=\thetaM$, so the zero-gap condition is $\Lambda\thetaM=0$, equivalently $\Lambda\Sigma^{-1}\mu=0$.
\end{proof}

\begin{remark}[Shrinkage is spectral, not necessarily componentwise]
In one dimension $\thetaan=\mu/(2\lambda+\gamma\sigma^2)$ and $\thetana=\mu/(\lambda+\gamma\sigma^2)$, so the anticipatory allocation is strictly smaller in magnitude whenever $\lambda\mu\ne0$. In multiple dimensions componentwise shrinkage is not generally well defined, because $\Sigma$ and $\Lambda$ need not be diagonal in the same basis. The correct general statement is spectral: in each common eigenmode of $\Sigma$ and $\Lambda$, anticipation shrinks the impacted component by replacing $\lambda_j+\gamma\sigma_j^2$ with $2\lambda_j+\gamma\sigma_j^2$. Thus anticipation reduces exposure to directions in which expected return is most endogenous to the portfolio itself.
\end{remark}

\begin{corollary}[Cournot/monopsony reading]\label{cor:monopolist}
The price-taker counts impact once (in the level of returns); the anticipator counts it twice (in the level and in the marginal effect of its position on that level), exactly the factor distinguishing a competitive from a monopolistic first-order condition. Anticipating one's own impact is internalizing it as a monopolist internalizes its effect on price.

\end{corollary}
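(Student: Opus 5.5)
The corollary is essentially a reading of the two first-order conditions already derived in the proof of \Cref{thm:impact}, so I would prove it by making the ``once versus twice'' accounting explicit and then matching it to the textbook competitive/monopoly conditions.

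\textbf{Step 1: the price-taker counts impact once.} The price-taker maximizes $(\mu-\Lambda\bar\theta)\T\theta-\tfrac\gamma2\theta\T\Sigma\theta$ with $\bar\theta$ held fixed. Its first-order condition is
\[
\mu-\Lambda\bar\theta-\gamma\Sigma\theta=0 .
\]
Here impact enters only through the level of the expected return, evaluated at the equilibrium $\bar\theta=\theta$.

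\textbf{Step 2: the anticipator counts it twice.} The anticipator differentiates the deployed objective \eqref{eq:Jimpact}. By the product rule, the gradient of $(\mu-\Lambda\theta)\T\theta$ is
\[
(\mu-\Lambda\theta)\;+\;\big(\partial_\theta(\mu-\Lambda\theta)\big)\T\theta=(\mu-\Lambda\theta)-\Lambda\T\theta .
\]
By symmetry of $\Lambda$ this equals $\mu-2\Lambda\theta$. The first $\Lambda\theta$ is the level effect, which the price-taker shares. The second is the marginal effect of the position on the price it faces, which the price-taker omits.

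\textbf{Step 3: match to the IO first-order conditions.} With scalar inverse demand $p(q)=a-\lambda q$, a competitive firm sets $p(q)=MC$. A monopolist sets $p(q)+p'(q)q=MC$, i.e.\ $a-2\lambda q=MC$. Under the identifications $q\leftrightarrow\theta$, $p\leftrightarrow\mu-\Lambda\theta$ and $MC\leftrightarrow\gamma\Sigma\theta$ (the marginal risk cost), the two conditions coincide with Steps 1 and 2. This is exactly the passage from $(\Lambda+\gamma\Sigma)$ to $(2\Lambda+\gamma\Sigma)$ in \Cref{thm:impact}. For a long position the investor acts as a monopsonist buying against its own price pressure, which justifies the ``monopsony'' label.

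\textbf{Main obstacle.} The only step needing care is the factor of two in the multivariate case. It relies on $\Lambda=\Lambda\T$, since otherwise the marginal term is $\Lambda\T\theta$ and the anticipatory precision becomes $\Lambda+\Lambda\T+\gamma\Sigma$. I would note this explicitly and otherwise treat the corollary as an interpretive consequence of \Cref{thm:impact}, with no new estimates required.
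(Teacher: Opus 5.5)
Your proposal is correct and follows essentially the paper's own route. The corollary has no separate proof; it rests on the two first-order conditions $\mu-\Lambda\bar\theta-\gamma\Sigma\theta=0$ and $\mu-2\Lambda\theta-\gamma\Sigma\theta=0$ from the proof of \Cref{thm:impact}, which you make explicit via the product rule and match to $p=MC$ versus $p+p'q=MC$. Your observation that the factor of two uses $\Lambda=\Lambda\T$ (otherwise the anticipatory precision is $\Lambda+\Lambda\T+\gamma\Sigma$) is accurate and a harmless sharpening of the paper's standing symmetry assumption.
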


\begin{proposition}[Constrained impact anticipation as two variational inequalities]\label{prop:constrainedimpact}
Let $\Theta\subset\R^n$ be closed and convex, $G:=\Lambda+\gamma\Sigma$, and $H:=2\Lambda+\gamma\Sigma$. The constrained price-taking fixed point and the constrained impact-aware optimum are characterized by
\[
  (G\thetana-\mu)\T(\theta-\thetana)\ge0\quad \forall\theta\in\Theta,
  \qquad
  (H\thetaan-\mu)\T(\theta-\thetaan)\ge0\quad \forall\theta\in\Theta.
\]
Moreover, under the true deployed objective $J(\theta)=\mu\T\theta-\tfrac12\theta\T H\theta$, the constrained value satisfies
\[
  J(\thetaan)-J(\thetana)
  =\tfrac12\norm{\thetana-\thetaan}_{H}^{2}+(H\thetaan-\mu)\T(\thetana-\thetaan)
  \ge \tfrac12\norm{\thetana-\thetaan}_{H}^{2}\ge0.
\]
Thus the unconstrained closed form of \Cref{thm:impact} is not an artifact of allowing leverage: with long-only, leverage, turnover, or factor-neutrality constraints, anticipation remains the solution of the $H$-precision variational inequality, while price-taking stability solves the $G$-precision variational inequality.
\end{proposition}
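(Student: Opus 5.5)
The plan has three parts: derive both variational inequalities from first-order optimality of suitable convex programs, then obtain the value identity by expanding the quadratic $J$ around $\thetaan$ and signing the linear term with the $H$-inequality.

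\emph{Impact-aware optimum.} Write $J(\theta)=\mu\T\theta-\tfrac12\theta\T H\theta$ with $H=2\Lambda+\gamma\Sigma\succ0$. Then $J$ is strictly concave and coercive, so on the closed convex set $\Theta$ it has a unique maximizer $\thetaan$. The standard first-order condition for a concave differentiable function on a convex set states that $\theta^\star$ is optimal iff $\nabla J(\theta^\star)\T(\theta-\theta^\star)\le0$ for all $\theta\in\Theta$. Since $\nabla J(\thetaan)=\mu-H\thetaan$, this condition is exactly $(H\thetaan-\mu)\T(\theta-\thetaan)\ge0$ for all $\theta\in\Theta$.

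\emph{Price-taking fixed point.} Here I follow the proof of \Cref{thm:impact}. For a frozen $\bar\theta$, the price-taker maximizes the strictly concave map $\theta\mapsto(\mu-\Lambda\bar\theta)\T\theta-\tfrac\gamma2\theta\T\Sigma\theta$ over $\Theta$. Its optimality condition is $(\gamma\Sigma\theta+\Lambda\bar\theta-\mu)\T(\theta'-\theta)\ge0$ for all $\theta'\in\Theta$. Imposing self-consistency $\bar\theta=\theta=\thetana$ gives $(G\thetana-\mu)\T(\theta'-\thetana)\ge0$. Conversely, any solution of this inequality is a best response to itself, because the inner problem is concave and the variational inequality is sufficient. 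Existence and uniqueness follow because $G=\Lambda+\gamma\Sigma$ is symmetric positive definite: the inequality is then the optimality condition of maximizing the strictly concave potential $\mu\T\theta-\tfrac12\theta\T G\theta$ over $\Theta$. Equivalently, $G$ is a strongly monotone affine operator. This characterization step is the only place needing care, since the fixed-point definition of \Cref{def:naive} has to be matched to a single variational inequality. Symmetry of $\Lambda$, assumed in the statement, is what makes the potential argument available.

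\emph{Value identity.} For any $\theta$, Taylor expansion of the quadratic $J$ around $\thetaan$ is exact:
\[
J(\theta)=J(\thetaan)+(\mu-H\thetaan)\T(\theta-\thetaan)-\tfrac12(\theta-\thetaan)\T H(\theta-\thetaan).
\]
Setting $\theta=\thetana$ and rearranging gives
\[
J(\thetaan)-J(\thetana)=\tfrac12\norm{\thetana-\thetaan}_H^2+(H\thetaan-\mu)\T(\thetana-\thetaan).
\]
Now $\thetana\in\Theta$, so the $H$-variational inequality applied with $\theta=\thetana$ makes the second term nonnegative. Positive definiteness of $H$ makes the first term nonnegative. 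This yields both displayed inequalities. In the unconstrained case $\Theta=\R^n$ the linear term vanishes and the identity reduces to \Cref{thm:impact}(3), which serves as a consistency check. The whole argument is the constrained branch of \Cref{prop:curvature} with the exact constant-Hessian remainder.
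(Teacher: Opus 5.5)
Your proposal is correct and follows the paper's proof: both variational inequalities are the first-order optimality conditions of strictly concave quadratics over $\Theta$, and the value identity comes from exactly expanding $J$ around $\thetaan$ and signing the linear term with the $H$-inequality at $\theta=\thetana$. Your derivation of the $G$-inequality from the frozen-mean best response plus self-consistency, with existence and uniqueness via the $G$-potential, spells out a step the paper compresses into one sentence.
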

\begin{proof}
The two variational inequalities are the first-order optimality conditions for maximizing the corresponding strictly concave quadratics over $\Theta$. For the value identity, write $d=\thetana-\thetaan$ and expand the quadratic $J(\thetaan+d)$ around $\thetaan$:
\[
  J(\thetaan)-J(\thetaan+d)=\tfrac12d\T Hd+(H\thetaan-\mu)\T d.
\]
The second variational inequality with $\theta=\thetana$ makes the linear term nonnegative. The final statement follows by substituting the definitions of $G$ and $H$.
\end{proof}

\Cref{thm:impact} is the performative-prediction phenomenon in allocation \citep{Perdomo2020}: $\theta_{\mathrm{na}}$ is the performatively stable portfolio (calibrated to the returns it induces) and $\theta_{\mathrm{an}}$ is the performatively optimal one (accounting for how it shifts returns). The gap $\Val_{\mathrm{impact}}$ is the portfolio agency gap.

\subsection{A phase transition in the convergence of naive recalibration}

The naive optimum is often reached not by solving the fixed point but by \emph{recalibration}: estimate returns at the current book, re-optimize, retrade, repeat. This is the map
\begin{equation}\label{eq:recal}
  \theta^{(k+1)}=\gamma^{-1}\Sigma^{-1}\big(\mu-\Lambda\theta^{(k)}\big),
\end{equation}
and whether it converges is a property of the impact-to-risk ratio.

\begin{theorem}[Spectral phase transition for naive recalibration]\label{thm:phase}
Let $A=\gamma^{-1}\Sigma^{-1}\Lambda$. The naive recalibration map \eqref{eq:recal} converges to the unique price-taking fixed point $\thetana=(\Lambda+\gamma\Sigma)^{-1}\mu$ from every initialization if and only if $\rho(A)<1$. If $\rho(A)\ge1$, the affine iteration fails to converge from generic initializations, although the fixed point itself still exists as the solution of the linear system. The critical surface is therefore $\rho(\gamma^{-1}\Sigma^{-1}\Lambda)=1$.
\end{theorem}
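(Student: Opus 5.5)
The plan is to treat \eqref{eq:recal} as an affine iteration $\theta^{(k+1)}=c-A\theta^{(k)}$ with $c=\gamma^{-1}\Sigma^{-1}\mu$ and $A=\gamma^{-1}\Sigma^{-1}\Lambda$, and to reduce everything to the spectral theory of $A$. First I would verify that $\thetana$ is a fixed point: $\theta=c-A\theta$ is equivalent to $(I+A)\theta=c$, i.e.\ $(\gamma\Sigma+\Lambda)\theta=\mu$. Since $\Lambda+\gamma\Sigma\succ0$, this system has the unique solution $\thetana=(\Lambda+\gamma\Sigma)^{-1}\mu$ regardless of $\rho(A)$. That settles the existence clause for $\rho(A)\ge1$. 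Writing $e^{(k)}=\theta^{(k)}-\thetana$ then gives the error recursion $e^{(k+1)}=-Ae^{(k)}$, so $e^{(k)}=(-A)^k e^{(0)}$.

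Next I would pin down the spectrum of $A$. The matrix $A$ is similar to the symmetric matrix $\gamma^{-1}\Sigma^{-1/2}\Lambda\Sigma^{-1/2}\succeq0$, via conjugation by $\Sigma^{1/2}$. Hence $A$ is diagonalizable with real nonnegative eigenvalues $0\le a_1\le\dots\le a_n$, and $\rho(A)=a_n$. In the eigenbasis, $e^{(k)}$ has coordinates $(-a_j)^k e^{(0)}_j$.

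For sufficiency, if $\rho(A)<1$ every coordinate decays geometrically. Equivalently, $\|(-A)^k\|\to0$ by the standard fact that $\rho(A)<1$ iff $A^k\to0$, so $\theta^{(k)}\to\thetana$ from every initialization.

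For necessity, suppose $a_n\ge1$ and take a right eigenvector $v$ with $Av=a_nv$. The initialization $\theta^{(0)}=\thetana+v$ gives $e^{(k)}=(-a_n)^kv$. This oscillates with constant norm when $a_n=1$ and diverges when $a_n>1$, so convergence fails from some initialization, which proves the ``only if'' direction. For the ``generic'' claim, I would note that convergence requires $e^{(0)}$ to lie in the span of eigenvectors with $a_j<1$. When $\rho(A)\ge1$ this is a proper linear subspace, which has Lebesgue measure zero, so failure is generic.

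The critical-surface statement then follows by combining the two directions. The one genuinely delicate step is the spectral reduction: $A$ is not symmetric, so I need the similarity argument to guarantee real nonnegative eigenvalues and diagonalizability. Without it, the genericity argument would have to handle Jordan blocks, and the sign structure behind the oscillation at $\rho(A)=1$ (eigenvalue $-1$ of $-A$) would not be clear. Once the similarity is in place, the rest is routine linear recursion.
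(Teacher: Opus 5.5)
Your proof is correct and takes essentially the paper's route: you write the map as an affine iteration with linear part $-A$, apply the spectral-radius criterion, and identify the limit through the nonsingular system $(\Lambda+\gamma\Sigma)\theta=\mu$. The paper cites the standard affine-iteration result and simply asserts generic failure, using the similarity to $\gamma^{-1}\Sigma^{-1/2}\Lambda\Sigma^{-1/2}$ only for interpretation; your use of that similarity to get diagonalizability, the explicit eigenvector counterexample, and the measure-zero convergent subspace makes the ``only if'' and genericity claims fully rigorous.
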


\begin{proof}
The recalibration map can be written as $\theta^{(k+1)}=b-A\theta^{(k)}$ with $b=\gamma^{-1}\Sigma^{-1}\mu$, so its linear part is $-A$. A standard affine-iteration result gives global convergence from every initialization iff $\rho(-A)<1$; since $\rho(-A)=\rho(A)$ this is $\rho(A)<1$. The limit, when it exists, satisfies $\theta=b-A\theta$, i.e.\ $(\Lambda+\gamma\Sigma)\theta=\mu$, so it is $\thetana$. If $\rho(A)\ge1$ the linear part is not Schur stable and the iteration diverges or oscillates for generic initializations. Finally, although $A=\gamma^{-1}\Sigma^{-1}\Lambda$ is not necessarily symmetric, it is similar to the symmetric positive-semidefinite $\gamma^{-1}\Sigma^{-1/2}\Lambda\Sigma^{-1/2}$, so the spectral threshold is real and economically interpretable: instability appears when impact becomes large relative to risk aversion and covariance.
\end{proof}

\begin{remark}[The agency-gap reading]
\Cref{thm:phase} does not say that the price-taking fixed point ceases to exist above the threshold. It says something sharper: the natural myopic learning procedure used to reach it ceases to be globally stable. Below the threshold, naive recalibration is a contraction-like procedure and the self-consistent allocation is reachable by repeated estimate-and-optimize updates. At or above the threshold, myopic retraining becomes dynamically unreliable. The impact-aware allocation avoids this failure because it solves the deployed objective directly rather than chasing a moving return law.
\end{remark}

The dichotomy is sharper than divergence-versus-convergence, and the sharpening clarifies what the threshold is really about.

\begin{proposition}[The transition governs the unit-step map; damping crosses it]\label{prop:damped}
The matrix $A=\gamma^{-1}\Sigma^{-1}\Lambda$ is similar to the symmetric positive-semidefinite $\gamma^{-1}\Sigma^{-1/2}\Lambda\Sigma^{-1/2}$, so its eigenvalues are real and nonnegative; consequently $\rho(A)=\lambda_{\max}(A)$ is the exact asymptotic rate of the unit-step recalibration \eqref{eq:recal} when $\rho(A)<1$, and divergence at $\rho(A)\ge1$ is monotone or sign-alternating, not spiral. The relaxed (damped) recalibration
\[
  \theta^{(k+1)}=(1-\omega)\,\theta^{(k)}+\omega\,\gamma^{-1}\Sigma^{-1}\big(\mu-\Lambda\theta^{(k)}\big),\qquad \omega\in(0,1],
\]
has the same fixed point $\theta_{\mathrm{na}}=(\Lambda+\gamma\Sigma)^{-1}\mu$ and converges from every initialization iff $0<\omega(1+\lambda_i)<2$ for all eigenvalues $\lambda_i$ of $A$, i.e.\ iff $\omega<2/(1+\rho(A))$. Hence for \emph{any} impact level a sufficiently damped recalibration reaches the naive self-consistent allocation; the threshold $\rho(A)=1$ is precisely the largest impact at which the \emph{undamped} ($\omega=1$) map --- the practitioner's default --- still converges.
\end{proposition}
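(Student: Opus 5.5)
The plan is to reduce everything to the spectrum of $A$, using a similarity transform and then a standard affine-iteration criterion.

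\textbf{Step 1: the spectrum of $A$.} Set $S:=\Sigma^{1/2}$, which exists because $\Sigma\succ0$. Then
\[
  S A S^{-1}=\gamma^{-1}\Sigma^{-1/2}\Lambda\Sigma^{-1/2}=:\widetilde A .
\]
This matrix is symmetric. It is also positive semidefinite, because $\Lambda\succeq0$ and congruence preserves semidefiniteness. Hence $A$ is diagonalizable over $\R$ with real eigenvalues $\lambda_i\ge0$, and $\rho(A)=\lambda_{\max}(A)$.

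\textbf{Step 2: the unit-step map.} Write $e^{(k)}:=\theta^{(k)}-\thetana$. As in the proof of \Cref{thm:phase}, the error obeys $e^{(k+1)}=-Ae^{(k)}$. In the eigenbasis of $A$ (equivalently, in $S$-coordinates) this becomes the scalar recursions
\[
  e_i^{(k+1)}=-\lambda_i e_i^{(k)} .
\]
If $\rho(A)<1$, each mode decays geometrically at rate $\lambda_i$. The slowest mode decays at rate exactly $\rho(A)$, so $\rho(A)$ is the exact asymptotic rate for generic initial conditions. If instead some $\lambda_i\ge1$, that mode is multiplied by a real factor of modulus at least $1$. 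Since $-\lambda_i$ is real, the mode alternates in sign, or is held constant up to sign when $\lambda_i=1$. There is no rotation, which justifies the claim ``monotone or sign-alternating, not spiral.'' Here I would be careful with wording: with a real negative multiplier the behaviour is literally sign-alternating in modulus growth. The phrase ``monotone'' in the statement applies to the modulus $|e_i^{(k)}|$.

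\textbf{Step 3: the damped map.} Rearranging shows the fixed-point equation is still $\theta=\gamma^{-1}\Sigma^{-1}(\mu-\Lambda\theta)$, because $\omega>0$ cancels. So the fixed point is $\thetana$ by \Cref{thm:phase}. The damped error iteration is
\[
  e^{(k+1)}=\big((1-\omega)I-\omega A\big)e^{(k)} .
\]
Its iteration matrix is again diagonalizable in the eigenbasis of $A$, with real eigenvalues $1-\omega(1+\lambda_i)$. By the standard affine-iteration result, global convergence holds iff the spectral radius of this matrix is below $1$, that is, iff
\[
  |1-\omega(1+\lambda_i)|<1 \quad\text{for all } i,
\]
which is equivalent to $0<\omega(1+\lambda_i)<2$. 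The left inequality is automatic because $\omega>0$ and $\lambda_i\ge0$. The right inequality is binding at $\lambda_{\max}$, which gives $\omega<2/(1+\rho(A))$.

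\textbf{Step 4: the consequences.} Since $2/(1+\rho(A))>0$ for any finite impact level, some $\omega\in(0,1]$ always works. At $\omega=1$ the condition reads $1+\rho(A)<2$, i.e.\ $\rho(A)<1$, which recovers \Cref{thm:phase}. This shows that $\rho(A)=1$ is precisely the boundary for the undamped map.

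\textbf{Main obstacle.} The only delicate point is the ``iff'' direction of convergence from every initialization. Necessity requires initializations with a nonzero component on the offending eigenvector. Diagonalizability via the similarity to $\widetilde A$ makes this clean, since no Jordan blocks arise. So I would carry out the whole argument in $S$-coordinates, where the iteration matrices are symmetric and the spectral-norm and spectral-radius criteria coincide.
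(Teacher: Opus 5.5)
Your proposal is correct and follows the paper's proof essentially step for step: the similarity $\Sigma^{1/2}A\Sigma^{-1/2}=\gamma^{-1}\Sigma^{-1/2}\Lambda\Sigma^{-1/2}$ gives real nonnegative $\lambda_i$, the relaxed iteration matrix $(1-\omega)I-\omega A$ has eigenvalues $1-\omega(1+\lambda_i)$ with the binding constraint at $\lambda_{\max}=\rho(A)$, and $\omega>0$ cancels in the fixed-point equation. Your explicit modal analysis of the unit-step map fills in details the paper leaves implicit, namely the exact rate $\rho(A)$ and the observation that undamped non-convergence is always sign-alternating because the multipliers $-\lambda_i$ are nonpositive, and that analysis is accurate.
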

\begin{proof}
$\Sigma^{-1}\Lambda=\Sigma^{-1/2}(\Sigma^{-1/2}\Lambda\Sigma^{-1/2})\Sigma^{1/2}$ exhibits the similarity; the inner matrix is symmetric PSD, giving real $\lambda_i\ge0$. The relaxed map is affine with linear part $(1-\omega)I-\omega A$, whose eigenvalues are $1-\omega(1+\lambda_i)$; convergence from every point holds iff each lies in $(-1,1)$, i.e.\ $0<\omega(1+\lambda_i)<2$, and since $\lambda_i\ge0$ the binding constraint is $\lambda_{\max}=\rho(A)$, giving $\omega<2/(1+\rho(A))$. At $\omega=1$ this is $\rho(A)<1$, recovering \Cref{thm:phase}. The fixed point is independent of $\omega$ because the map fixes exactly the solutions of $\gamma\Sigma\theta=\mu-\Lambda\theta$.
\end{proof}

\begin{remark}[What damping costs and what it cannot buy]
Damping is not a free escape from \Cref{thm:phase}: it recovers \emph{reachability} of $\theta_{\mathrm{na}}$, not its \emph{optimality}. Above threshold the naive self-consistent portfolio remains the price-taker's fixed point, strictly dominated by the anticipatory $\theta_{\mathrm{an}}=(2\Lambda+\gamma\Sigma)^{-1}\mu$ by the gap of \Cref{thm:impact}; choosing $\omega$ to stabilize the iteration is itself a low-grade act of anticipation (the trader admits the system is near-critical), but it converges to the wrong target. Solving the impact-aware problem directly is both stable and optimal at once.
\end{remark}

\subsection{Many anticipating traders: from monopoly to competition}

The single-agent result has a clean game-theoretic extension that closes a loop with the information section: as the number of impact-aware traders grows, the monopolist correction \emph{erodes} toward the competitive allocation, the impact analogue of the \citet{GrossmanStiglitz1980} erosion of private value. Let $N$ traders share a common permanent impact on aggregate flow: with holdings $\theta^i$ and aggregate $\bar\theta=\sum_i\theta^i$, the realized mean is $\mu-\Lambda\bar\theta$, and trader $i$ earns $J^i=(\mu-\Lambda\bar\theta)\T\theta^i-\tfrac\gamma2\theta^{i\top}\Sigma\theta^i$, bearing its own risk and sharing the price impact.

\begin{theorem}[The monopolist correction is the one-trader Nash equilibrium]\label{thm:multiagent}
Suppose each trader anticipates the marginal effect of its own holding on the common price and best-responds to the others (Nash). Then the unique symmetric equilibrium and its price-taking counterpart (each ignoring its own marginal impact) are
\[
  \theta^{\star}_N=\big[\gamma\Sigma+(N{+}1)\Lambda\big]^{-1}\mu,
  \qquad
  \theta^{\mathrm{na}}_N=\big[\gamma\Sigma+N\Lambda\big]^{-1}\mu .
\]
For $N=1$ these are exactly the monopolist $\thetaan=H^{-1}\mu$ and price-taker $\thetana=G^{-1}\mu$ of \Cref{thm:impact}. Given rivals fixed at equilibrium, trader $i$ faces the concave quadratic with Hessian $-H$ and effective score $c_N=\mu-(N{-}1)\Lambda\theta^{\star}_N$, so its private value of anticipating its own impact is the same $H$-norm gap
\[
  \Val^{i}_{N}=\tfrac12\big\|(G^{-1}-H^{-1})c_N\big\|_H^2\ \ge 0 .
\]
\end{theorem}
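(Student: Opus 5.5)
The plan is to write trader $i$'s objective as a function of its own holding with the rivals held fixed, read off the first-order conditions, and impose symmetry. Let $\bar\theta^{-i}=\sum_{j\ne i}\theta^j$. Then
\[
  J^i(\theta^i)=(\mu-\Lambda\bar\theta^{-i})\T\theta^i-\theta^{i\top}\Lambda\theta^i-\tfrac\gamma2\theta^{i\top}\Sigma\theta^i,
\]
which is a concave quadratic in $\theta^i$ with Hessian $-(2\Lambda+\gamma\Sigma)=-H$, strictly negative because $\Sigma\succ0$ and $\Lambda\succeq0$.

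\textbf{Nash equilibrium.} The first-order condition is $\mu-\Lambda\bar\theta^{-i}-2\Lambda\theta^i-\gamma\Sigma\theta^i=0$. Setting $\theta^j=\theta$ for all $j$ gives $\bar\theta^{-i}=(N-1)\theta$, so $\bigl[\gamma\Sigma+(N+1)\Lambda\bigr]\theta=\mu$. The matrix $\gamma\Sigma+(N+1)\Lambda$ is positive definite, so this symmetric solution is unique.

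\textbf{Price-taking counterpart.} Each trader now treats the mean $\mu-\Lambda\bar\theta$ as fixed. The first-order condition is $\mu-\Lambda\bar\theta-\gamma\Sigma\theta^i=0$. Imposing symmetry with $\bar\theta=N\theta$ gives $(\gamma\Sigma+N\Lambda)\theta=\mu$.

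\textbf{Reduction to $N=1$.} With $N=1$ the two formulas become $(\gamma\Sigma+2\Lambda)^{-1}\mu=H^{-1}\mu$ and $(\gamma\Sigma+\Lambda)^{-1}\mu=G^{-1}\mu$. These are exactly $\thetaan$ and $\thetana$ of \Cref{thm:impact}, so no further argument is needed.

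\textbf{Private value of anticipation.} Fix the rivals at $\theta^\star_N$, so $\bar\theta^{-i}=(N-1)\theta^\star_N$. Trader $i$'s objective is then $J^i(\theta)=c_N\T\theta-\tfrac12\theta\T H\theta$ with effective score $c_N=\mu-(N-1)\Lambda\theta^\star_N$.
\begin{itemize}
\item The anticipatory best response is $H^{-1}c_N$. It is consistent with equilibrium, since $H\theta^\star_N=\mu-(N-1)\Lambda\theta^\star_N=c_N$.
\item The own-impact-ignoring response against the same rivals solves $c_N-\Lambda\theta-\gamma\Sigma\theta=0$. This is the same self-consistency argument as in \Cref{thm:impact}, now with score $c_N$, and it gives $\theta=G^{-1}c_N$.
\end{itemize}
Completing the square as in \Cref{lem:quadratic} gives $J^i(H^{-1}c_N)-J^i(\theta)=\tfrac12\|\theta-H^{-1}c_N\|_H^2$ for any $\theta$. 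Taking $\theta=G^{-1}c_N$ yields $\Val^i_N=\tfrac12\|(G^{-1}-H^{-1})c_N\|_H^2\ge0$.

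\textbf{Main obstacle.} The algebra is routine; the delicate point is interpretive. The comparison policy $G^{-1}c_N$ is the unilateral deviation of one price-taking trader against fixed anticipating rivals. It is not the symmetric price-taking equilibrium $\theta^{\mathrm{na}}_N$, and it must not be confused with it.

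I will state this explicitly so that the displayed $H$-norm gap is read as the private value of anticipation. With that reading, $c_N$ is what the trader takes as given, and the identity follows directly.

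Uniqueness also needs care. The statement asserts uniqueness only among symmetric equilibria, and positive definiteness of $\gamma\Sigma+(N+1)\Lambda$ gives exactly that.

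One could go further and show that every Nash equilibrium is symmetric. Stacking the first-order conditions gives the system $(I_N\otimes(\gamma\Sigma+\Lambda)+\mathbf 1\mathbf 1\T\otimes\Lambda)\Theta=\mathbf 1\otimes\mu$. The block matrix is positive definite, so the solution is unique and hence symmetric. This step is optional.
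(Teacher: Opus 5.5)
Your proposal is correct and follows the paper's proof essentially step for step: you get $[\gamma\Sigma+(N+1)\Lambda]\theta=\mu$ and $(\gamma\Sigma+N\Lambda)\theta=\mu$ from the first-order conditions at a symmetric profile. With rivals frozen at $(N-1)\theta^\star_N$, the trader's objective has Hessian $-H$, and completing the square between $H^{-1}c_N$ and the price-taking response $G^{-1}c_N$ gives the stated gap. Your optional stacked-system argument, based on positive definiteness of $I_N\otimes(\gamma\Sigma+\Lambda)+\mathbf 1\mathbf 1^\top\otimes\Lambda$, is a valid small strengthening that the paper does not state: it shows that every Nash equilibrium is the symmetric one.
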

\begin{proof}
Differentiating $J^i$ in $\theta^i$ with $\theta^{-i}$ fixed gives the first-order condition $\mu-\Lambda\bar\theta-\Lambda\theta^i-\gamma\Sigma\theta^i=0$; at a symmetric profile $\bar\theta=N\theta^{\star}$, this is $[\gamma\Sigma+(N{+}1)\Lambda]\theta^{\star}=\mu$, with a unique solution since $\gamma\Sigma+(N{+}1)\Lambda\succ0$. The price-taker omits the marginal term $-\Lambda\theta^i$, giving $[\gamma\Sigma+N\Lambda]\theta^{\mathrm{na}}=\mu$. With rivals at $\theta^{-i}=(N{-}1)\theta^{\star}_N$, the map $\theta^i\mapsto J^i$ is $(\mu-\Lambda\theta^{-i})\T\theta^i-\theta^{i\top}\Lambda\theta^i-\tfrac\gamma2\theta^{i\top}\Sigma\theta^i$ with Hessian $-(2\Lambda+\gamma\Sigma)=-H$ and maximizer $H^{-1}c_N$, while the price-taking response is $G^{-1}c_N$; completing the square as in \Cref{thm:impact} gives $\Val^i_N=\tfrac12(G^{-1}c_N-H^{-1}c_N)\T H(G^{-1}c_N-H^{-1}c_N)$.
\end{proof}

\begin{corollary}[Competition erodes the value of impact anticipation]\label{cor:erosion}
Assume additionally that $\Lambda\succ0$. As $N\to\infty$, both aggregate positions converge to the competitive total,
\[
  N\theta_N^{\star}\to \Lambda^{-1}\mu,
  \qquad
  N\theta_N^{\mathrm{na}}\to \Lambda^{-1}\mu,
\]
while each individual holding goes to zero. The effective private score faced by trader $i$ at the symmetric Nash equilibrium is
\[
  c_N:=\mu-(N{-}1)\Lambda\theta_N^{\star}
  =H\,[\gamma\Sigma+(N{+}1)\Lambda]^{-1}\mu,
\]
so $c_N=O(N^{-1})$ and therefore
\[
  \Val_N^i
  =\tfrac12\big\|(G^{-1}-H^{-1})c_N\big\|_H^2
  =O(N^{-2})\to0 .
\]
More explicitly,
\[
  \Val_N^i
  \le
  \frac{\lambda_{\max}(H)}{2}\,
  \|G^{-1}-H^{-1}\|_2^2\,
  \frac{\|H\|_2^2\|\Lambda^{-1}\|_2^2\|\mu\|_2^2}{(N+1)^2}.
\]
If $\Sigma$ and $\Lambda$ are simultaneously diagonalizable, the decay is monotone mode by mode. Without such commutation, the robust matrix statement is the $O(N^{-2})$ erosion and convergence to zero, not a componentwise monotonicity claim. Impact anticipation is therefore most valuable to a large, near-monopolistic trader and is competed away in a crowded market, paralleling the erosion of private information value in \citet{GrossmanStiglitz1980}.
\end{corollary}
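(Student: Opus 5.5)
We work directly from the closed forms in \Cref{thm:multiagent}, writing $M_N:=\gamma\Sigma+(N{+}1)\Lambda$ and $M_N^{\mathrm{na}}:=\gamma\Sigma+N\Lambda$. The proof has four steps: aggregate limits, the identity for $c_N$, the explicit $O(N^{-2})$ bound, and the commuting-case remark.

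\emph{Aggregate positions.} Assume $\Lambda\succ0$. Factor the equilibrium as
\[
N\theta_N^\star=N M_N^{-1}\mu=\Big(\tfrac{\gamma\Sigma+\Lambda}{N}+\Lambda\Big)^{-1}\mu .
\]
The matrix in parentheses converges to $\Lambda\succ0$. Matrix inversion is continuous at invertible matrices, so $N\theta_N^\star\to\Lambda^{-1}\mu$. The same argument applies to $N\theta_N^{\mathrm{na}}=(\gamma\Sigma/N+\Lambda)^{-1}\mu$. Since $N\theta_N$ has a finite limit, each individual holding $\theta_N$ is $O(N^{-1})\to0$.

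\emph{The identity for $c_N$.} Write $\mu=M_N\theta_N^\star$. Then
\[
c_N=\mu-(N{-}1)\Lambda\theta_N^\star=\big[\gamma\Sigma+(N{+}1)\Lambda-(N{-}1)\Lambda\big]\theta_N^\star=(\gamma\Sigma+2\Lambda)\theta_N^\star=H\theta_N^\star=HM_N^{-1}\mu .
\]
This is the displayed identity, and it gives $c_N=O(N^{-1})$ because $\theta_N^\star=O(N^{-1})$.

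\emph{Explicit bound.} The step that needs care is a clean bound on $\|M_N^{-1}\|_2$ without commutativity. Use the Loewner ordering: $\gamma\Sigma\succ0$ gives $M_N\succeq(N{+}1)\Lambda\succeq(N{+}1)\lambda_{\min}(\Lambda)I$, so
\[
\|M_N^{-1}\|_2\le\frac{1}{(N{+}1)\lambda_{\min}(\Lambda)}=\frac{\|\Lambda^{-1}\|_2}{N{+}1}.
\]
Hence $\|c_N\|_2\le\|H\|_2\|\Lambda^{-1}\|_2\|\mu\|_2/(N{+}1)$. Insert this into the value formula of \Cref{thm:multiagent}:
\[
\Val_N^i=\tfrac12\,x\T Hx\le\tfrac12\lambda_{\max}(H)\|x\|_2^2,\qquad x=(G^{-1}-H^{-1})c_N,
\]
and bound $\|x\|_2\le\|G^{-1}-H^{-1}\|_2\|c_N\|_2$. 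Squaring yields exactly the stated bound, and hence $\Val_N^i=O(N^{-2})\to0$.

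\emph{Commuting case.} If $\Sigma$ and $\Lambda$ are simultaneously diagonalizable in an orthonormal basis, every matrix above is diagonal in that basis. In mode $j$, with $a_j=\gamma\sigma_j^2>0$, $\lambda_j\ge0$ and $\mu_j$ the mode coordinate of $\mu$, we have $c_{N,j}=(a_j+2\lambda_j)\mu_j/(a_j+(N{+}1)\lambda_j)$. The modal contribution to $\Val_N^i$ is a fixed nonnegative constant times $c_{N,j}^2$. The factor $1/(a_j+(N{+}1)\lambda_j)$ is nonincreasing in $N$, so each modal value is nonincreasing in $N$, which is the mode-by-mode monotonicity claim.

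The final remark, that componentwise monotonicity need not hold without commutation while the $O(N^{-2})$ statement does, needs no proof beyond noting that the norm bound above never used commutativity.
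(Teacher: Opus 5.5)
Your proof is correct and follows the paper's argument essentially step for step. It uses the same rescaling-plus-continuity argument for the aggregate limits, and the same substitution $\mu=[\gamma\Sigma+(N{+}1)\Lambda]\theta_N^\star$ to get $c_N=H\theta_N^\star$. It uses the same Loewner bound $\gamma\Sigma+(N{+}1)\Lambda\succeq(N{+}1)\Lambda$ for the spectral-norm estimate, which you spell out via $\lambda_{\min}(\Lambda)I$ where the paper leaves it implicit, and the same modal formula for $c_{N,j}$ in the commuting case.
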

\begin{proof}
The aggregate limits follow from
\[
  N[\gamma\Sigma+(N{+}1)\Lambda]^{-1}
  =\big[(\gamma/N)\Sigma+(1+1/N)\Lambda\big]^{-1}\to\Lambda^{-1}
\]
and the analogous identity for $N[\gamma\Sigma+N\Lambda]^{-1}$. The expression for $c_N$ follows by substituting $[\gamma\Sigma+(N{+}1)\Lambda]\theta_N^{\star}=\mu$:
\[
  c_N=\mu-(N{-}1)\Lambda\theta_N^{\star}
      =[\gamma\Sigma+2\Lambda]\theta_N^{\star}=H\theta_N^{\star}.
\]
Since $\gamma\Sigma+(N{+}1)\Lambda\succeq (N+1)\Lambda$, the spectral norm bound
\[
  \|c_N\|_2\le \|H\|_2\|[\gamma\Sigma+(N{+}1)\Lambda]^{-1}\|_2\|\mu\|_2
  \le \frac{\|H\|_2\|\Lambda^{-1}\|_2\|\mu\|_2}{N+1}
\]
gives the displayed $O(N^{-2})$ value bound. If $\Sigma$ and $\Lambda$ share an eigenbasis, then in each mode $j$,
\[
  c_{N,j}=\frac{\gamma s_j+2\ell_j}{\gamma s_j+(N+1)\ell_j}\,\mu_j,
\]
and the corresponding value summand is a nonnegative constant times $c_{N,j}^2$, which decreases in $N$ whenever $\ell_j>0$ and is identically zero when $\ell_j=0$.
\end{proof}

\begin{proposition}[Collective recalibration is $N$ times more fragile]\label{prop:multiphase}
Let $A:=\gamma^{-1}\Sigma^{-1}\Lambda$ and $a_{\max}:=\rho(A)$. If the $N$ price-taking traders reach $\theta_N^{\mathrm{na}}$ by simultaneous recalibration
\[
  \theta^{i,(k+1)}=(\gamma\Sigma)^{-1}\Big(\mu-\Lambda\sum_{j=1}^N\theta^{j,(k)}\Big),
\]
the joint iteration is globally stable iff
\[
  N\,a_{\max}<1 .
\]
For the impact-aware best-response recalibration, the relevant matrix is $B:=H^{-1}\Lambda$ and
\[
  \rho(B)=\frac{a_{\max}}{1+2a_{\max}}<\frac12 .
\]
Hence the anticipating joint iteration is globally stable iff
\[
  (N-1)\frac{a_{\max}}{1+2a_{\max}}<1.
\]
Equivalently, it is automatically stable for $N\le3$, while for $N>3$ it requires $a_{\max}<1/(N-3)$. Internalizing impact therefore expands the collective stability region from $a_{\max}<1/N$ to $a_{\max}<1/(N-3)$ when $N>3$, and eliminates the collective instability for duopoly and triopoly. It does not make arbitrarily large crowds stable at arbitrary impact intensity.
\end{proposition}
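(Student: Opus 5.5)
The plan is to write both joint iterations as affine maps on the stacked vector $\vartheta^{(k)}:=(\theta^{1,(k)},\ldots,\theta^{N,(k)})\in\R^{nN}$. Their linear parts are Kronecker products, which reduces global stability to a spectral-radius computation, exactly as in \Cref{thm:phase}. Let $\mathbf 1_N$ denote the all-ones vector and $E:=\mathbf 1_N\mathbf 1_N\T$. I will invoke the standard affine-iteration fact used in \Cref{thm:phase}: $x\mapsto c+Mx$ converges from every initialization iff $\rho(M)<1$. The limit is then the unique fixed point. For the price-taking scheme this fixed point is the symmetric solution of $\gamma\Sigma\theta=\mu-N\Lambda\theta$, namely $\theta_N^{\mathrm{na}}$.

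\emph{Price-taking recalibration.} Each block update is $\theta^{i,(k+1)}=(\gamma\Sigma)^{-1}\mu-A\sum_j\theta^{j,(k)}$, so the linear part is $-(E\otimes A)$. The spectrum of $E$ is $\{N,0,\ldots,0\}$. Eigenvalues of a Kronecker product are the products of the factors' eigenvalues, so the spectrum of $E\otimes A$ is $\{N\lambda_i(A)\}\cup\{0\}$. By \Cref{prop:damped}, the $\lambda_i(A)$ are real and nonnegative. Hence $\rho(E\otimes A)=N a_{\max}$, and global stability holds iff $Na_{\max}<1$.

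\emph{Impact-aware best response.} From the first-order condition in the proof of \Cref{thm:multiagent}, trader $i$'s best response to fixed rivals is $\theta^{i}=H^{-1}\big(\mu-\Lambda\sum_{j\ne i}\theta^{j}\big)$. The simultaneous best-response iteration therefore has linear part $-((E-I_N)\otimes B)$ with $B=H^{-1}\Lambda$. The spectrum of $E-I_N$ is $\{N-1\}\cup\{-1\ (\text{multiplicity }N-1)\}$.

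Next I compute the spectrum of $B$ by the same congruence as in \Cref{prop:damped}. Set $L:=\Sigma^{-1/2}\Lambda\Sigma^{-1/2}\succeq0$. Then
\[
H^{-1}\Lambda=\Sigma^{-1/2}(2L+\gamma I)^{-1}L\,\Sigma^{1/2},
\]
and since $(2L+\gamma I)^{-1}$ and $L$ commute, they are simultaneously diagonalizable. If $\ell$ is an eigenvalue of $L$, then $a=\ell/\gamma$ is an eigenvalue of $A$, and the corresponding eigenvalue of $B$ is $\ell/(2\ell+\gamma)=a/(1+2a)$. The map $a\mapsto a/(1+2a)$ is increasing and bounded by $1/2$, which gives $\rho(B)=a_{\max}/(1+2a_{\max})<1/2$.

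Consequently the eigenvalues of the linear part are $-(N-1)b_i$ and $b_i$, where the $b_i\in[0,1/2)$ are the eigenvalues of $B$. The eigenvalues $b_i$ are always inside the unit disk. The only binding constraint is therefore $(N-1)\rho(B)<1$, i.e.\ $(N-1)a_{\max}<1+2a_{\max}$, i.e.\ $(N-3)a_{\max}<1$.

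This last inequality yields the stated case split. For $N\le3$ it holds automatically. For $N>3$ it is equivalent to $a_{\max}<1/(N-3)$. Comparing with $a_{\max}<1/N$ gives the claimed expansion of the stability region.

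The step needing the most care is the spectral bookkeeping for the best-response map. There I must check that the eigenvalues $-1$ of $E-I_N$ produce eigenvalues $b_i<1/2$, and hence never bind. I also need that $B$, though nonsymmetric, has real nonnegative spectrum, so that "$\rho$" and "largest eigenvalue" coincide and the monotone map $a\mapsto a/(1+2a)$ transfers $\rho(A)$ to $\rho(B)$. Both points follow from the simultaneous-diagonalization argument above, which is the route I would take.
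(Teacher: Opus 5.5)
Your proposal is correct and follows the paper's proof: both write the joint linear parts as $-(\mathbf 1_N\mathbf 1_N^\top)\otimes A$ and $-(\mathbf 1_N\mathbf 1_N^\top-I_N)\otimes B$, read Schur stability off the Kronecker spectra, and transfer $\rho(A)$ to $\rho(B)$ through the increasing map $a\mapsto a/(1+2a)$. The only difference is cosmetic: you get the spectrum of $B$ from the explicit similarity $B=\Sigma^{-1/2}(2L+\gamma I)^{-1}L\,\Sigma^{1/2}$, whereas the paper uses generalized eigenvectors $\Lambda v=\ell\Sigma v$; your version accounts for the whole spectrum, including the nonbinding $-1$ branch, a little more explicitly.
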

\begin{proof}
The price-taking joint map has linear part
\[
  -(\mathbf{1}_N\mathbf{1}_N^\top)\otimes A.
\]
Since $\mathbf{1}_N\mathbf{1}_N^\top$ has eigenvalues $N$ and $0$, the nonzero spectrum is $\{-N\lambda:\lambda\in\sigma(A)\}$; Schur stability is therefore equivalent to $Na_{\max}<1$.

For anticipating best responses, each trader internalizes its own marginal impact and reacts to the other $N-1$ traders. The linear part is
\[
  -(\mathbf{1}_N\mathbf{1}_N^\top-I_N)\otimes B,
  \qquad B=H^{-1}\Lambda .
\]
The matrix $\mathbf{1}_N\mathbf{1}_N^\top-I_N$ has eigenvalues $N-1$ and $-1$; since the eigenvalues of $B$ are nonnegative, the binding Schur condition is $(N-1)\rho(B)<1$.

It remains to express $\rho(B)$ in the same generalized eigenvalues as $A$. If $\Lambda v=\ell\Sigma v$, then $A v=(\ell/\gamma)v$ and
\[
  Bv=(\gamma\Sigma+2\Lambda)^{-1}\Lambda v
  =\frac{\ell}{\gamma+2\ell}v
  =\frac{a}{1+2a}v,
  \qquad a:=\ell/\gamma .
\]
Because $a\mapsto a/(1+2a)$ is increasing on $[0,\infty)$, $\rho(B)=a_{\max}/(1+2a_{\max})$. The equivalent thresholds follow by elementary rearrangement.
\end{proof}

The remaining microstructure extension is \emph{transient} impact: replacing the matrix $\Lambda$ by a convolution (propagator) kernel turns \eqref{eq:Jimpact} into the Almgren--Chriss/Obizhaeva--Wang execution functional, the impact operator $\mathcal L$ of \Cref{sec:threeway} into a symmetric block kernel, and the monopolist correction into the front-loading/back-loading of an optimal execution schedule \citep{AlmgrenChriss2000,ObizhaevaWang2013,Gatheral2010}; the stacked decomposition of \Cref{thm:threeway} already accommodates any such symmetric positive-semidefinite $\mathcal L$.

\paragraph{Status.} \Cref{thm:impact,prop:constrainedimpact,thm:phase,thm:multiagent,cor:erosion,prop:multiphase} are proved in the linear--quadratic permanent-impact case; the transient-impact extension is stated and supported by the cited execution literature.

\section{Anticipating an Adversarial Future, and Misspecified Anticipation}\label{sec:robust}

\Cref{prop:value} warned that anticipation against an estimated model can lose value. The disciplined response is to anticipate an \emph{adversarial} future: optimize against the worst return law in a neighborhood of the estimate. This is distributionally robust optimization and it is the modeling choice orthogonal to the three faces --- it can be layered on each.

\begin{proposition}[Wasserstein-robust anticipation is dual-norm regularized; \citealp{EsfahaniKuhn2018,BlanchetMurthy2019}]\label{prop:dro}
This is a \emph{robust-mean} formulation: the ambiguity ball perturbs the return law entering the mean term while the covariance estimate $\widehat\Sigma$ is held fixed (a full second-moment-robust counterpart changes the risk term and is not treated here). Let $\widehat P$ be an empirical return law and consider the distributionally robust mean problem over a type-$1$ Wasserstein ball of radius $\eps$,
\[
  \max_{\theta\in\Theta}\ \min_{Q:\,W_1(Q,\widehat P)\le\eps}\ \E_{r\sim Q}[\,r\T\theta\,]-\tfrac{\gamma}{2}\theta\T\widehat\Sigma\theta .
\]
Its inner worst case equals $\E_{\widehat P}[r\T\theta]-\eps\|\theta\|_*$, so the robust portfolio solves
\[
  \max_{\theta\in\Theta}\ \widehat\mu\T\theta-\eps\|\theta\|_*-\tfrac{\gamma}{2}\theta\T\widehat\Sigma\theta,
\]
a Markowitz problem with an added dual-norm penalty $\eps\|\theta\|_*$. Robust anticipation thus shrinks the position by exactly the regularizer the ambiguity radius dictates.
\end{proposition}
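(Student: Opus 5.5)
The proof has two parts. First, evaluate the inner worst case for a fixed $\theta$. Then substitute it into the outer problem, which gives the regularized Markowitz program. Throughout, $\theta$ and $\widehat\Sigma$ are held fixed in the inner problem, so the risk term $-\tfrac\gamma2\theta\T\widehat\Sigma\theta$ is a constant there and only the linear functional $Q\mapsto\E_Q[r\T\theta]$ is optimized. The type-$1$ Wasserstein distance is taken with transport cost $\|r-r'\|$ for a chosen norm $\|\cdot\|$ on $\R^n$, and $\|\cdot\|_*$ is its dual norm. This fixes the pairing in $\eps\|\theta\|_*$.

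For the lower bound $\min_Q\E_Q[r\T\theta]\ge\E_{\widehat P}[r\T\theta]-\eps\|\theta\|_*$, I would use the Lipschitz property of $f_\theta(r)=r\T\theta$. Hölder's inequality gives $|f_\theta(r)-f_\theta(r')|\le\|\theta\|_*\|r-r'\|$, so $f_\theta$ is $\|\theta\|_*$-Lipschitz. Kantorovich--Rubinstein duality (or simply integrating the Lipschitz bound against any coupling $\pi$ of $Q$ and $\widehat P$) then gives $|\E_Qf_\theta-\E_{\widehat P}f_\theta|\le\|\theta\|_*W_1(Q,\widehat P)\le\eps\|\theta\|_*$ for every $Q$ in the ball.

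For attainment, I would construct an explicit adversarial law. Pick $v\in\R^n$ with $\|v\|=1$ and $v\T\theta=\|\theta\|_*$; such a $v$ exists because the supremum defining the dual norm is attained on the compact unit sphere. Define $Q^\star$ as the pushforward of $\widehat P$ under the shift $r\mapsto r-\eps v$. The deterministic coupling moves every atom a distance exactly $\eps$, so $W_1(Q^\star,\widehat P)\le\eps$, and $\E_{Q^\star}[r\T\theta]=\E_{\widehat P}[r\T\theta]-\eps\|\theta\|_*$. The two bounds match, so the inner minimum equals $\widehat\mu\T\theta-\eps\|\theta\|_*$ with $\widehat\mu=\E_{\widehat P}[r]$. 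This is also the point to note that $\E_{\widehat P}\|r\|<\infty$ is needed, which is automatic for an empirical law, so that all expectations are finite. Alternatively I could cite the strong-duality theorem of \citet{EsfahaniKuhn2018} for piecewise-linear losses, but the direct construction is self-contained and avoids its hypotheses.

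Substituting the inner value into the outer maximization gives the displayed program. The objective is concave, since $\|\theta\|_*$ is convex and $\widehat\Sigma\succeq0$, and it is strictly concave when $\widehat\Sigma\succ0$, so maximizers exist on the compact convex $\Theta$. For the shrinkage reading I would write the first-order condition $\widehat\mu-\gamma\widehat\Sigma\theta\in\eps\,\partial\|\theta\|_*$, which shows that $\theta=0$ is optimal once $\eps$ dominates the dual of the dual norm of $\widehat\mu$, and that positions contract as $\eps$ grows.

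The main obstacle is mostly bookkeeping: matching the norm used in the transport cost with the dual norm that appears in the penalty. Everything else is the Lipschitz bound plus the shift construction.
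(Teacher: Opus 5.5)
Your proof is correct, and it is more elementary than the paper's. The paper's sketch invokes strong duality for type-$1$ Wasserstein DRO (Esfahani--Kuhn, Blanchet--Murthy) to rewrite the inner infimum as a Lipschitz-penalized expectation. It then identifies $\|\theta\|_*$ as the Lipschitz modulus of $r\mapsto r\T\theta$ relative to the transport norm.

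You use that Lipschitz fact only for the easy inequality, integrating it against an arbitrary coupling. In place of the duality theorem you give an explicit primal certificate: the rigid translation $r\mapsto r-\eps v$ along a dual-norm maximizer $v$, which costs exactly $\eps$ and lowers the mean payoff by exactly $\eps\|\theta\|_*$.

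What your route buys:
\begin{itemize}
\item It is self-contained and needs only $\E_{\widehat P}\|r\|<\infty$, so it holds for any such reference law, not only empirical ones.
\item It exhibits a worst-case distribution, which justifies writing $\min$ rather than $\inf$. The duality sketch leaves attainment implicit.
\item It shows plainly why $\widehat\Sigma$ plays no role in the inner problem: the risk term is constant in $Q$ by construction, rather than ``unaffected to first order.''
\end{itemize}

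What the duality route buys is generality. Your translation argument relies on the loss being linear: every unit of transported mass then buys the same decrease $\|\theta\|_*$, so a uniform shift is optimal. The dual formula covers general losses. It is the natural starting point for the second-moment-robust counterpart the statement excludes. There the risk term is also evaluated under $Q$, a translation leaves the variance unchanged, and the adversary would instead dilate the law.

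One minor point concerns your shrinkage aside. The condition $\widehat\mu-\gamma\widehat\Sigma\theta\in\eps\,\partial\|\theta\|_*$ is the unconstrained optimality condition; on a compact $\Theta$ you must add the normal cone of $\Theta$ at $\theta$. Your conclusion that $\theta=0$ is optimal when $\|\widehat\mu\|\le\eps$ still holds whenever $0\in\Theta$. In that case the penalized objective is bounded above by $(\|\widehat\mu\|-\eps)\|\theta\|_*\le0$, which is its value at $\theta=0$.
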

\begin{proof}[Proof idea]
Strong duality for Wasserstein DRO turns the inner infimum into a Lipschitz-penalized expectation; for the linear loss $r\mapsto r\T\theta$ the Lipschitz modulus is $\|\theta\|_*$ (dual to the transport norm), giving the $\eps\|\theta\|_*$ term \citep{EsfahaniKuhn2018,BlanchetMurthy2019}. The covariance term is unaffected to first order.
\end{proof}

\begin{remark}[Robustness caps the downside of \Cref{prop:value}]
Layering \Cref{prop:dro} on each face bounds the harm of misspecified anticipation: the insider with a noisy signal shrinks toward the honest portfolio as signal ambiguity grows; the forecaster's aim portfolio is pulled toward the current book as forecast ambiguity grows; the impact-anticipator's monopolist correction is damped as impact-matrix ambiguity grows. Robustness is the price of admitting that the anticipated future is estimated. It connects directly to ambiguity-averse (maxmin) utility and to coherent risk, all three being one convex-duality phenomenon \citep{GilboaSchmeidler1989,RockafellarUryasev2000}.
\end{remark}

\subsection{The exact value of estimated anticipation: information gain versus estimation penalty}\label{sec:estval}

\Cref{prop:value} bounded the harm of misspecified anticipation by twice the model-error sup, and \Cref{cor:plugin,cor:hpplugin} turned that bound into deployment hurdles. In the linear--quadratic case the same phenomenon has an \emph{exact} average-case form, and it exposes a clean duality: the value of resolved information and the cost of estimation noise are the \emph{same} inverse-precision functional applied to opposite-signed covariances.

Work in the deployed objective of \Cref{thm:threeway}, $J_b(\Theta)=b\T\Theta-\tfrac12\Theta\T H\Theta$ with $H\succ0$ and oracle anticipatory policy $\Theta_b^\star=H^{-1}b$. In practice the score $b$ is not observed: the optimizer forms a noisy estimate $\widehat b=b+\eps$ with $\E[\eps\given b]=0$ and $\Cov(\eps\given b)=\Sigma_\eps\succeq0$, and deploys the plug-in policy $\widehat\Theta=H^{-1}\widehat b$.

\begin{theorem}[Estimation penalty mirrors information gain]\label{thm:estpenalty}
Conditional on the true score $b$, the expected suboptimality of plug-in anticipation relative to the oracle is exactly
\[
  \E_\eps\big[J_b(\Theta_b^\star)-J_b(\widehat\Theta)\big]
  =\tfrac12\tr\!\big(H^{-1}\Sigma_\eps\big)\;\ge0 .
\]
Hence, for any deterministic baseline $\Theta_0$, the expected realized value of estimated anticipation splits as
\[
  \E_\eps\big[J_b(\widehat\Theta)-J_b(\Theta_0)\big]
  =\underbrace{\big[J_b(\Theta_b^\star)-J_b(\Theta_0)\big]}_{\text{structural value of anticipation}}
  \;-\;\underbrace{\tfrac12\tr\!\big(H^{-1}\Sigma_\eps\big)}_{\text{estimation penalty}} .
\]
The penalty $\tfrac12\tr(H^{-1}\Sigma_\eps)$ is the same inverse-precision trace as the information value $\tfrac12\tr(H^{-1}\Omega)$ of \Cref{thm:unified,prop:mvinfo}, with the injected estimation-noise covariance $\Sigma_\eps$ in place of the resolved-signal covariance $\Omega$. Resolving genuine structure and injecting estimation noise are thus the same operation with opposite sign. Combining with the signal average of \Cref{thm:threeway}, the net expected value of estimated anticipation over the price-taking baseline $\Theta_{\mathrm{na}}=G^{-1}b_0$ is
\[
  \E\,\Val^{\mathrm{est}}_{3}
  =\underbrace{\tfrac12\big\|(G^{-1}-H^{-1})b_0-H^{-1}(b-b_0)\big\|_H^2}_{\text{impact + forecast (deterministic)}}
  +\underbrace{\tfrac12\tr\!\big(H^{-1}\Omega_b\big)}_{\text{information gain}}
  -\underbrace{\tfrac12\tr\!\big(H^{-1}\Sigma_\eps\big)}_{\text{estimation penalty}} ,
\]
so estimated anticipation beats the baseline in expectation iff the resolved structure exceeds the injected estimation noise in the $H^{-1}$-weighted trace; when the only structural source is the signal, this is the sharp signal-to-estimation-noise condition $\tr(H^{-1}\Omega_b)>\tr(H^{-1}\Sigma_\eps)$.
\end{theorem}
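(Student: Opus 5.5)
The plan is to reduce everything to the master quadratic identity of \Cref{lem:quadratic}. Fix $b$ and apply the lemma with true score $b$ and competing score $\bar b=\widehat b=b+\eps$. This gives the pathwise identity
\[
  J_b(\Theta_b^\star)-J_b(\widehat\Theta)=\tfrac12\eps\T H^{-1}\eps .
\]
We then take the conditional expectation over $\eps$ given $b$ and use the trace trick:
\[
  \E[\eps\T H^{-1}\eps\given b]=\tr\!\big(H^{-1}\E[\eps\eps\T\given b]\big).
\]
Because $\E[\eps\given b]=0$, the second moment $\E[\eps\eps\T\given b]$ equals $\Sigma_\eps$, which yields the first display. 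Nonnegativity follows since $H^{-1}\succ0$ and $\Sigma_\eps\succeq0$, so the trace of their product is the trace of a PSD matrix $\Sigma_\eps^{1/2}H^{-1}\Sigma_\eps^{1/2}$.

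The split for a deterministic baseline $\Theta_0$ is pure bookkeeping. Write
\[
  J_b(\widehat\Theta)-J_b(\Theta_0)=\big[J_b(\Theta_b^\star)-J_b(\Theta_0)\big]-\big[J_b(\Theta_b^\star)-J_b(\widehat\Theta)\big].
\]
The first bracket is $\eps$-free, so taking $\E_\eps$ only affects the second bracket, and the first display then gives the result. The remark that the penalty mirrors the information trace is then a direct comparison with \Cref{prop:mvinfo}, whose $\tfrac1{2\gamma}\tr(\Sigma^{-1}\Omega)$ is $\tfrac12\tr(H^{-1}\Omega)$ with $H=\gamma\Sigma$, and with the information term of \Cref{thm:unified}. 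The mirroring is structural: both arise as $\tfrac12\E[x\T H^{-1}x]$ of a mean-zero vector $x$. For information, that vector is the resolved signal deviation and it adds value. For estimation, it is the noise and it subtracts value.

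For the net three-way formula, I would invoke \Cref{thm:threeway} in its signal-averaged form. I expect it to say that the expected oracle value over $\Theta_{\mathrm{na}}=G^{-1}b_0$ equals the deterministic $H$-norm term $\tfrac12\|(G^{-1}-H^{-1})b_0-H^{-1}(b-b_0)\|_H^2$ plus $\tfrac12\tr(H^{-1}\Omega_b)$, where the cross term vanishes because the signal component has conditional mean zero. I would take $\Theta_0=\Theta_{\mathrm{na}}$ in the split, which is deterministic or at least independent of $\eps$, and apply the tower property: first $\E_\eps$ given $b$, then the signal average. This requires $\eps$ to be conditionally mean zero given the full $b$, including its random signal part, which is exactly the hypothesis.

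The main obstacle is matching notation with \Cref{thm:threeway}, specifically which part of $b$ is random (the $\Omega_b$ component) and which is the deterministic forecast/impact part. The orthogonality killing the cross term must be checked there: $\E[(b-\E b)\given\cdot]=0$ against the deterministic displacement. I would first verify this by expanding $\tfrac12\|d+H^{-1}\xi\|_H^2$ with $\E\xi=0$.

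The concluding iff then follows by sign comparison. When the deterministic term is zero, meaning the only source is the signal, beating the baseline in expectation reduces to $\tr(H^{-1}\Omega_b)>\tr(H^{-1}\Sigma_\eps)$.
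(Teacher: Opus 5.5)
Your proposal is correct and follows essentially the paper's route. The paper also completes the square pathwise to get $\tfrac12\eps\T H^{-1}\eps$, which is exactly \Cref{lem:quadratic} with $\bar b=\widehat b$, then applies the trace identity and the add-and-subtract split. For the three-way display it conditions on the signal-refined score $b_\Inf$, applies the first identity, averages over the signal, and imports the $H$-norm term from \Cref{thm:threeway}. The extra hypothesis you flag, that $\eps$ is conditionally mean zero with covariance $\Sigma_\eps$ given $b_\Inf$, is what the paper secures by assuming $\eps\perp\xi$ with $b_\Inf=b+\xi$.
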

\begin{proof}
Since $\Theta_b^\star=H^{-1}b$ maximizes the strictly concave $J_b$, completing the square gives
$J_b(\Theta_b^\star)-J_b(\widehat\Theta)=\tfrac12\|\widehat\Theta-\Theta_b^\star\|_H^2=\tfrac12\|H^{-1}\eps\|_H^2=\tfrac12\eps\T H^{-1}\eps$;
taking $\E_\eps$ and using $\E[\eps\T H^{-1}\eps]=\tr(H^{-1}\Sigma_\eps)$ proves the first identity. The baseline split follows by adding and subtracting $J_b(\Theta_b^\star)$. For the final display, let the optimizer observe the noisy refined score $\widehat b_\Inf=b_\Inf+\eps$ with $b_\Inf=b+\xi$, $\E[\xi]=0$, $\Cov(\xi)=\Omega_b$, and $\eps\perp\xi$; the realized (deployed) score is $b_\Inf$. Conditioning on $b_\Inf$ and applying the first identity with oracle $H^{-1}b_\Inf$ gives $\E_\eps[J_{b_\Inf}(H^{-1}\widehat b_\Inf)]=\tfrac12 b_\Inf\T H^{-1}b_\Inf-\tfrac12\tr(H^{-1}\Sigma_\eps)$. Averaging over $\xi$ gives $\tfrac12 b\T H^{-1}b+\tfrac12\tr(H^{-1}\Omega_b)-\tfrac12\tr(H^{-1}\Sigma_\eps)$, while $\E[J_{b_\Inf}(\Theta_{\mathrm{na}})]=J_b(\Theta_{\mathrm{na}})$. Subtracting and identifying $\tfrac12 b\T H^{-1}b-J_b(\Theta_{\mathrm{na}})=J_b(\Theta_f^H)-J_b(\Theta_{\mathrm{na}})=\tfrac12\|a-v\|_H^2$ as in \Cref{thm:threeway} yields the display.
\end{proof}

\begin{corollary}[Optimal shrinkage is the reliability ratio]\label{cor:shrinkage}
Deploy the shrunk plug-in $\Theta_s=sH^{-1}\widehat b$ for $s\in[0,1]$, and write $q:=b\T H^{-1}b\ge0$ and $p:=\tr(H^{-1}\Sigma_\eps)\ge0$. Then
\[
  \E_\eps\big[J_b(\Theta_s)\big]=sq-\tfrac12 s^2(q+p),
\]
maximized at the \emph{reliability ratio}
\[
  s^\star=\frac{q}{q+p}\in[0,1],
  \qquad
  \E_\eps\big[J_b(\Theta_{s^\star})\big]=\tfrac12\,\frac{q^2}{q+p}=\tfrac12 s^\star q\;\ge0 .
\]
The unshrunk plug-in ($s=1$) attains $\tfrac12(q-p)$, which is \emph{negative} whenever the injected estimation noise exceeds the achievable signal, $p>q$ --- exactly the harmful-anticipation regime of \Cref{prop:value}. The optimally shrunk plug-in is never harmful, exceeds the unshrunk policy by $\tfrac12 p^2/(q+p)$, and interpolates $s^\star\to1$ as $\Sigma_\eps\to0$ (oracle anticipation) and $s^\star\to0$ as $\Sigma_\eps\to\infty$ (collapse to the baseline). The dual-norm regularizer of \Cref{prop:dro} is the worst-case face of this same attenuation: robust anticipation shrinks by adversarial design what the bias--variance optimum shrinks in expectation.
\end{corollary}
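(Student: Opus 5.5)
The plan is to compute $\E_\eps[J_b(\Theta_s)]$ in closed form and then read off every claim of the corollary from a scalar concave quadratic in $s$. Since $J_b(\Theta)=b\T\Theta-\tfrac12\Theta\T H\Theta$ is evaluated at a linear function of the noise, the substitution $\Theta_s=sH^{-1}(b+\eps)$ gives
\[
  J_b(\Theta_s)=s\,b\T H^{-1}(b+\eps)-\tfrac12 s^2 (b+\eps)\T H^{-1}(b+\eps).
\]
I will take $\E_\eps$ using $\E[\eps\given b]=0$, which kills the linear cross terms $b\T H^{-1}\eps$, and $\E[\eps\T H^{-1}\eps]=\tr(H^{-1}\Sigma_\eps)=p$, exactly as in the proof of \Cref{thm:estpenalty}. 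This yields $sq-\tfrac12 s^2(q+p)$. As a consistency check, the case $s=1$ must agree with the first identity of \Cref{thm:estpenalty}, namely $\tfrac12 q-\tfrac12 p$, since $J_b(\Theta_b^\star)=\tfrac12 q$.

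Next I maximize the scalar function. If $q+p>0$, the map $s\mapsto sq-\tfrac12 s^2(q+p)$ is strictly concave, and its stationary point is $s^\star=q/(q+p)$. Because $q,p\ge0$, this point lies in $[0,1]$, so restricting to $s\in[0,1]$ is not binding. Substituting back gives the optimal value $q^2/(q+p)-\tfrac12 q^2/(q+p)=\tfrac12 q^2/(q+p)=\tfrac12 s^\star q\ge0$. The degenerate case $q=p=0$ has to be treated separately: there the objective is identically zero, so any $s$ is optimal, and I will adopt the convention $s^\star=0$ so that all claims still hold.

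The comparison statements then follow directly.
\begin{itemize}
\item The unshrunk value is $\tfrac12(q-p)$, which is negative exactly when $p>q$.
\item The gain from optimal shrinkage is $\tfrac12 q^2/(q+p)-\tfrac12(q-p)=\tfrac12\,[q^2-(q-p)(q+p)]/(q+p)=\tfrac12 p^2/(q+p)$.
\item The optimally shrunk plug-in is never harmful, because $\tfrac12 s^\star q\ge0=J_b(0)$, the value of the baseline $s=0$.
\end{itemize}
For the limits, $\Sigma_\eps\to0$ forces $p\to0$ and hence $s^\star\to1$ whenever $q>0$. As $\Sigma_\eps\to\infty$, meaning any divergence in the Loewner sense, $p\ge\lambda_{\min}(H^{-1})\tr\Sigma_\eps\to\infty$, so $s^\star\to0$.

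The closing sentence comparing the result to \Cref{prop:dro} is interpretive rather than a mathematical claim. I will only remark that both the expected-value optimum here and the worst-case formulation there shrink the policy toward zero, without asserting an identity between them.

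The only real obstacle is bookkeeping: I need to make explicit that the expectation is conditional on $b$, so that $q$ is deterministic, and I need to handle the degenerate case $q+p=0$ above. Everything else is routine algebra.
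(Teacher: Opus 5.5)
Your proposal is correct and follows the paper's proof essentially line by line. Like the paper, you substitute $\Theta_s=sH^{-1}(b+\eps)$ and use $\E[\eps\given b]=0$ and $\E[\eps\T H^{-1}\eps]=p$ to obtain $sq-\tfrac12 s^2(q+p)$, then read off $s^\star$, the optimal value, the $s=1$ value and the gain $\tfrac12 p^2/(q+p)$. Your treatment of the degenerate case $q+p=0$ and your caveat that $s^\star\to1$ requires $q>0$ are small refinements that the paper's proof leaves implicit.
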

\begin{proof}
With $\Theta_s=sH^{-1}(b+\eps)$, $J_b(\Theta_s)=s\,b\T H^{-1}(b+\eps)-\tfrac12 s^2(b+\eps)\T H^{-1}(b+\eps)$; taking $\E_\eps$ with $\E[\eps]=0$ and $\E[(b+\eps)\T H^{-1}(b+\eps)]=q+p$ gives the quadratic $sq-\tfrac12 s^2(q+p)$. Its maximizer is $s^\star=q/(q+p)$ with value $\tfrac12 q^2/(q+p)$; the value at $s=1$ is $\tfrac12(q-p)$, and the gain $\tfrac12 q^2/(q+p)-\tfrac12(q-p)=\tfrac12 p^2/(q+p)\ge0$.
\end{proof}

\begin{remark}[The Bayes-optimal score shrinkage is a precision-weighted reliability matrix]\label{rem:bayesshrink}
The scalar rule of \Cref{cor:shrinkage} is the isotropic case of a matrix shrinkage. If the structural score carries a prior $b\sim(0,\Sigma_b)$ and is observed as $\widehat b=b+\eps$ with independent noise $\Cov(\eps)=\Sigma_\eps$, then the linear policy $\Theta=M\widehat b$ maximizing
$\E_{b,\eps}[J_b(M\widehat b)]=\tr(M\Sigma_b)-\tfrac12\tr\!\big(M\T HM(\Sigma_b+\Sigma_\eps)\big)$
is
\[
  M^\star=H^{-1}\,\Sigma_b(\Sigma_b+\Sigma_\eps)^{-1},
\]
the anticipatory map $H^{-1}$ applied to the \emph{posterior mean} $\Sigma_b(\Sigma_b+\Sigma_\eps)^{-1}\widehat b$ of the score, with optimal value $\tfrac12\tr\!\big(M^\star\Sigma_b\big)$. The factor $\Sigma_b(\Sigma_b+\Sigma_\eps)^{-1}$ is the multivariate reliability matrix: anticipation should be applied to the Bayes-shrunk score, not the raw estimate, and is most aggressive in directions where the structural signal is reliable relative to estimation noise. This is the estimation--control reading of empirical-Bayes and Black--Litterman shrinkage \citep{Jorion1986,BlackLitterman1992}.
\end{remark}

\paragraph{Status.} \Cref{thm:estpenalty,cor:shrinkage} and \Cref{rem:bayesshrink} are proved in the linear--quadratic case; they convert the worst-case plug-in bound of \Cref{prop:value,cor:plugin} into an exact in-expectation identity and identify the bias--variance-optimal shrinkage with the reliability ratio.

\section{A Unified Value in the Linear--Quadratic--Gaussian Model}\label{sec:unified}

The three faces have been treated separately and unified only in spirit. We now put information and impact in \emph{one} model and decompose the value of anticipation into its components in closed form, settling the unified-value question of the introduction in the tractable Gaussian case and exhibiting the coupling between the faces exactly.

\subsection{The model and the decomposition}

\begin{assumption}[LQG anticipation]\label{ass:lqg}
A one-period mean--variance optimizer allocates $\theta\in\R^n$ with risk aversion $\gamma>0$ and conditional return covariance $\Sigma\succ0$. Two enrichments are present. \emph{Information}: an information set $\Inf$ resolves the mean, $\mu_\Inf=\E[\mu\given\Inf]$ with $\E[\mu_\Inf]=\mu$ and resolved-mean covariance $\Omega=\Cov(\mu_\Inf)\succeq0$, leaving $\Sigma$ unchanged. \emph{Impact}: permanent linear impact shifts the realized mean to $\mu_\Inf-\Lambda\theta$ with $\Lambda\succeq0$. The realized objective of a signal-adapted policy $\theta(\Inf)$ is
\[
  \bar J(\theta)=\E\Big[(\mu_\Inf-\Lambda\theta)\T\theta-\tfrac\gamma2\theta\T\Sigma\theta\Big]
  =\E\Big[\mu_\Inf\T\theta-\tfrac12\theta\T H\theta\Big],\qquad H:=2\Lambda+\gamma\Sigma,
\]
using $\Lambda+\tfrac\gamma2\Sigma=\tfrac12H$.
\end{assumption}

Two binary switches --- condition on the signal or not, internalize impact or not --- generate four optima. Write $G:=\Lambda+\gamma\Sigma$ (the price-taker precision) and $H:=2\Lambda+\gamma\Sigma$ (the anticipatory/monopolist precision). The naive portfolio $\theta_{\mathrm{na}}=G^{-1}\mu$ uses the induced impact level but not its marginal effect and does not condition on the signal; the fully anticipatory policy $\theta_{\mathrm{an}}(\Inf)=H^{-1}\mu_\Inf$ uses both enrichments. The four policies in the information--impact block are
\[
\begin{array}{c|cc}
 & \text{no signal} & \text{signal-adapted}\\
\hline
\text{price-taking impact} & G^{-1}\mu & G^{-1}\mu_\Inf\\
\text{impact-aware} & H^{-1}\mu & H^{-1}\mu_\Inf.
\end{array}
\]
This table is useful because the full value can be telescoped through either row or column; the theorem below uses the economically clean path ``internalize impact first, then price information at the impact-aware precision.''

\begin{theorem}[Exact additive decomposition of the value of anticipation]\label{thm:unified}
Under \Cref{ass:lqg}, the value of full anticipation over the naive optimum decomposes exactly as
\[
  \Val\;=\;\bar J(\theta_{\mathrm{an}})-\bar J(\theta_{\mathrm{na}})
  \;=\;\underbrace{\tfrac12(\theta_{\mathrm{na}}-\theta_{\mathrm{na}}^{H})\T H\,(\theta_{\mathrm{na}}-\theta_{\mathrm{na}}^{H})}_{\displaystyle \Val_{\mathrm{impact}}}
  \;+\;\underbrace{\tfrac12\tr\!\big(H^{-1}\Omega\big)}_{\displaystyle \Val_{\mathrm{info}}},
\]
where $\theta_{\mathrm{na}}^{H}:=H^{-1}\mu$ is the impact-aware, information-free optimum. Both terms are nonnegative; $\Val_{\mathrm{impact}}=0$ iff $\Lambda\Sigma^{-1}\mu=0$ and $\Val_{\mathrm{info}}=0$ iff $\Omega=0$.
\end{theorem}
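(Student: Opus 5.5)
The plan is to telescope the value through the intermediate policy $\theta_{\mathrm{na}}^{H}=H^{-1}\mu$, following the path the table suggests: first internalize impact, then price information at the impact-aware precision. Write
\[
  \Val=\big[\bar J(\theta_{\mathrm{an}})-\bar J(\theta_{\mathrm{na}}^{H})\big]+\big[\bar J(\theta_{\mathrm{na}}^{H})-\bar J(\theta_{\mathrm{na}})\big].
\]
Each bracket is then identified with one term of the statement.

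For the second bracket, both policies are deterministic, that is, not signal-adapted. Because $\E[\mu_\Inf]=\mu$ and $\bar J$ is linear in $\mu_\Inf$, we have $\bar J(\theta)=\mu\T\theta-\tfrac12\theta\T H\theta$ for any deterministic $\theta$. This is exactly $J_\mu$ of \Cref{lem:quadratic} with precision $H$. Its maximizer is $\theta_{\mathrm{na}}^{H}$, so completing the square as in \Cref{lem:quadratic} (or \Cref{thm:impact}, item 3) gives
\[
  \bar J(\theta_{\mathrm{na}}^{H})-\bar J(\theta_{\mathrm{na}})=\tfrac12(\theta_{\mathrm{na}}-\theta_{\mathrm{na}}^{H})\T H(\theta_{\mathrm{na}}-\theta_{\mathrm{na}}^{H})=\Val_{\mathrm{impact}}.
\]
In fact this is literally the gap of \Cref{thm:impact}, with $\theta_{\mathrm{na}}=G^{-1}\mu$ and $\theta_{\mathrm{an}}$ there equal to $H^{-1}\mu$. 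The vanishing criterion $\Lambda\Sigma^{-1}\mu=0$ is therefore inherited from \Cref{thm:impact}, item 4.

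For the first bracket, the approach mirrors \Cref{prop:mvinfo} with $\gamma\Sigma$ replaced by $H$. Pointwise, $H^{-1}\mu_\Inf$ maximizes $\mu_\Inf\T\theta-\tfrac12\theta\T H\theta$ with value $\tfrac12\mu_\Inf\T H^{-1}\mu_\Inf$. Taking expectations gives
\[
  \bar J(\theta_{\mathrm{an}})=\tfrac12\tr\!\big(H^{-1}(\mu\mu\T+\Omega)\big).
\]
Meanwhile $\bar J(\theta_{\mathrm{na}}^{H})=\tfrac12\mu\T H^{-1}\mu$, so the difference is $\tfrac12\tr(H^{-1}\Omega)$. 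Nonnegativity follows from $H^{-1}\succ0$ and $\Omega\succeq0$. For the iff, note that $\tr(H^{-1}\Omega)=\tr(H^{-1/2}\Omega H^{-1/2})$ is the trace of a PSD matrix, so it vanishes iff $H^{-1/2}\Omega H^{-1/2}=0$, i.e.\ iff $\Omega=0$.

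The only delicate point is the bookkeeping of the first bracket: making sure that the realized objective of the deterministic policies involves only $\mu$, and that the cross term between $\mu$ and $\mu_\Inf-\mu$ drops out by $\E[\mu_\Inf-\mu]=0$. This is what yields an exact additive split with no interaction term. It also explains the choice of path. Telescoping the other way would price information at $G$ rather than $H$ and introduce a cross term, so one should commit to the ordering above.
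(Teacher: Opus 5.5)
Your proposal is correct and follows essentially the same route as the paper. You telescope through $\theta_{\mathrm{na}}^{H}=H^{-1}\mu$, identify the deterministic-policy bracket with the completed-square gap of \Cref{thm:impact} (inheriting its vanishing criterion), and compute the signal bracket from $\E[\mu_\Inf\mu_\Inf\T]=\mu\mu\T+\Omega$, with your explicit $\tr(H^{-1/2}\Omega H^{-1/2})=0\Rightarrow\Omega=0$ argument filling in a step the paper only asserts.
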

\begin{proof}
Maximizing $\bar J$ pointwise in $\Inf$ gives $\theta_{\mathrm{an}}=H^{-1}\mu_\Inf$ with value $\bar J(\theta_{\mathrm{an}})=\tfrac12\E[\mu_\Inf\T H^{-1}\mu_\Inf]=\tfrac12\tr\!\big(H^{-1}(\mu\mu\T+\Omega)\big)=\tfrac12\mu\T H^{-1}\mu+\tfrac12\tr(H^{-1}\Omega)$. Insert the impact-aware information-free optimum $\theta_{\mathrm{na}}^{H}=H^{-1}\mu$, whose value is $\bar J(\theta_{\mathrm{na}}^{H})=\mu\T H^{-1}\mu-\tfrac12\mu\T H^{-1}\mu=\tfrac12\mu\T H^{-1}\mu$. Telescoping,
\[
  \Val=\big[\bar J(\theta_{\mathrm{na}}^{H})-\bar J(\theta_{\mathrm{na}})\big]+\big[\bar J(\theta_{\mathrm{an}})-\bar J(\theta_{\mathrm{na}}^{H})\big].
\]
The second bracket is $\tfrac12\tr(H^{-1}\Omega)=\Val_{\mathrm{info}}$. The first bracket is the value of impact-anticipation at the prior mean, which by the argument of \Cref{thm:impact} (now with precision $H$ and price-taker point $\theta_{\mathrm{na}}=G^{-1}\mu$) equals $\tfrac12(\theta_{\mathrm{na}}-\theta_{\mathrm{na}}^{H})\T H(\theta_{\mathrm{na}}-\theta_{\mathrm{na}}^{H})=\Val_{\mathrm{impact}}$. Nonnegativity and the vanishing conditions follow from $H\succ0$, $\Omega\succeq0$, and $\theta_{\mathrm{na}}=\theta_{\mathrm{na}}^{H}\Leftrightarrow\Lambda\Sigma^{-1}\mu=0$.
\end{proof}

\begin{corollary}[Anticipating impact weakly diminishes the value of information]\label{cor:coupling}
The value of information in the impact-aware optimizer is $\Val_{\mathrm{info}}=\tfrac12\tr(H^{-1}\Omega)$, whereas for a price-taker who ignores impact it is computed at the smaller precision $\gamma\Sigma$, namely $\tfrac12\tr((\gamma\Sigma)^{-1}\Omega)=\Val_{\mathrm{info}}^{\mathrm{MV}}$ of \Cref{prop:mvinfo}. Since $H=2\Lambda+\gamma\Sigma\succeq\gamma\Sigma$ implies $H^{-1}\preceq(\gamma\Sigma)^{-1}$,
\[
  \Val_{\mathrm{info}}\;=\;\tfrac12\tr\!\big(H^{-1}\Omega\big)\;\le\;\tfrac12\tr\!\big((\gamma\Sigma)^{-1}\Omega\big),
\]
with equality iff $\Lambda(2\Lambda+\gamma\Sigma)^{-1}\Omega=0$ (in particular if $\Lambda=0$ or $\Omega=0$). Internalizing one's own impact makes each unit of signal worth weakly less, and strictly less whenever the resolved signal loads on an impacted direction, because impact shrinks the positions through which the signal would be expressed.
\end{corollary}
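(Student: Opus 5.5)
The plan has three short steps: reduce the two information values to traces, compare them via the Loewner order, and then characterize when the comparison is tight.

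First, the two information values. By \Cref{thm:unified}, the impact-aware information value is $\tfrac12\tr(H^{-1}\Omega)$. By \Cref{prop:mvinfo}, read through \Cref{lem:quadratic} with precision $\gamma\Sigma$, the price-taker's value is $\tfrac12\tr((\gamma\Sigma)^{-1}\Omega)=\tfrac1{2\gamma}\tr(\Sigma^{-1}\Omega)$. The corollary therefore reduces to a trace inequality plus a characterization of equality.

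Second, the inequality. Since $\Lambda\succeq0$, we have $H=\gamma\Sigma+2\Lambda\succeq\gamma\Sigma\succ0$. Operator monotonicity of inversion on positive definite matrices then gives $(\gamma\Sigma)^{-1}-H^{-1}\succeq0$. To see this concretely, congruence by $(\gamma\Sigma)^{-1/2}$ turns $H$ into $I+2(\gamma\Sigma)^{-1/2}\Lambda(\gamma\Sigma)^{-1/2}\succeq I$, whose inverse is $\preceq I$. Set $D:=(\gamma\Sigma)^{-1}-H^{-1}\succeq0$. Then
\[
  \tr((\gamma\Sigma)^{-1}\Omega)-\tr(H^{-1}\Omega)=\tr(D\Omega)=\tr(\Omega^{1/2}D\Omega^{1/2})\ge0,
\]
because $\Omega\succeq0$. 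This proves the inequality.

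Third, equality, which is the only step needing care. The key identity is
\[
  D=(\gamma\Sigma)^{-1}(H-\gamma\Sigma)H^{-1}=2(\gamma\Sigma)^{-1}\Lambda H^{-1}.
\]
Equality holds iff $\tr(\Omega^{1/2}D\Omega^{1/2})=0$. Since that matrix is PSD, this happens iff $\Omega^{1/2}D\Omega^{1/2}=0$, iff $D^{1/2}\Omega^{1/2}=0$, iff $D\Omega=0$. Because $\Sigma$ is invertible, $D\Omega=0$ is equivalent to $\Lambda H^{-1}\Omega=0$, i.e.\ $\Lambda(2\Lambda+\gamma\Sigma)^{-1}\Omega=0$. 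The special cases $\Lambda=0$ or $\Omega=0$ are then immediate.

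The main obstacle is avoiding the tempting but false claim that $\tr(D\Omega)=0$ requires $D=0$. The argument has to pass through the PSD square-root trick to get the sharp condition stated. For the interpretive sentence about strictness, I would note that the condition fails exactly when some resolved-signal direction in $\operatorname{range}(\Omega)$ is mapped by $H^{-1}$ outside $\ker\Lambda$, i.e.\ the signal "loads on an impacted direction."
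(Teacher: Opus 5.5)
Your proposal is correct and follows the paper's proof essentially step for step. The inequality comes from operator antitonicity of the inverse paired with $\Omega\succeq0$ through the trace. The equality case comes from the resolvent identity $(\gamma\Sigma)^{-1}-H^{-1}=2(\gamma\Sigma)^{-1}\Lambda H^{-1}$ together with the fact that $\tr(P\Omega)=0$ for positive semidefinite $P,\Omega$ forces $P\Omega=0$; your square-root argument just proves that last fact, which the paper invokes without proof.
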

\begin{proof}
Operator antitonicity of the matrix inverse on positive-definite matrices gives $H^{-1}\preceq(\gamma\Sigma)^{-1}$; pairing with $\Omega\succeq0$ via $\tr(\cdot\,\Omega)$ preserves the order. For the equality case, the resolvent identity $(\gamma\Sigma)^{-1}-H^{-1}=2(\gamma\Sigma)^{-1}\Lambda H^{-1}$ and the fact that $\tr(P\Omega)=0$ for $P,\Omega\succeq0$ forces $P\Omega=0$ give $\Lambda H^{-1}\Omega=0$, i.e.\ $\Lambda(2\Lambda+\gamma\Sigma)^{-1}\Omega=0$.
\end{proof}

\begin{remark}[No cross term, but a real coupling]
The decomposition of \Cref{thm:unified} is exactly additive --- there is no information$\times$impact cross term --- but the two faces are \emph{not} independent. The additivity is an artifact of the nesting (impact first, then information): the information value is evaluated \emph{at the impact-inflated precision} $H$, so impact is silently inside $\Val_{\mathrm{info}}$. \Cref{cor:coupling} is the honest statement of the coupling: the order-dependence of the telescoping is the cross-effect, and its sign is determined --- impact weakly deflates information, and strictly deflates it on impacted signal directions. This resolves the ``unified value'' open problem in the LQG case: the value of anticipation is a sum of an impact gap and an information trace, coupled through a shared precision.
\end{remark}

\subsection{The finite-horizon three-way decomposition}\label{sec:threeway}

The missing third face can be proved in the same stacked notation as \Cref{thm:forecastvalue}. The only subtlety is that the generic decomposition is not purely additive: forecast and impact produce a signed cross term. This term is not a defect; it is the exact algebraic expression of the fact that changing the forecast also changes the direction in which impact is internalized.

Let $N:=(T+1)n$ and stack holdings as $\Theta=(\theta_0,\ldots,\theta_T)\in\R^N$. Keep the discount matrix $\mathcal B$, risk matrix $\mathcal R$, cost matrix $\mathcal C$, difference matrix $D$, and initial-book vector $q$ from \Cref{thm:forecastvalue}, and define
\[
  K:=\mathcal R+D\T\mathcal C D\succ0,
  \qquad c:=D\T\mathcal C q.
\]
Let $\mathcal L\succeq0$ be a symmetric discounted impact operator on the stacked trading path. For one-period permanent impact with matrices $(\Lambda_s)_{s=0}^T$, one may take $\mathcal L=\diag(\Lambda_0,\beta\Lambda_1,\ldots,\beta^T\Lambda_T)$; more general symmetric block kernels cover transient linear impact. If the signal-adapted forecast stack is $m_\Inf=(m_{\Inf,0},\ldots,m_{\Inf,T})$, define the random score
\[
  b_\Inf:=c+\mathcal B m_\Inf,
  \qquad b:=\E[b_\Inf],
  \qquad b_0:=c+\mathcal B\bar m,
  \qquad \Omega_b:=\Cov(b_\Inf).
\]
Here $\bar m$ is the restricted forecast stack used by the naive estimator, while $m:=\E[m_\Inf]$ is the correctly specified unconditional forecast stack, so $b-b_0=\mathcal B(m-\bar m)$.

The deployed objective conditional on the signal is, up to constants independent of $\Theta$,
\[
  J_{b_\Inf}(\Theta)=b_\Inf\T\Theta-\tfrac12\Theta\T H\Theta,
  \qquad H:=K+2\mathcal L.
\]
The price-taking stable precision is
\[
  G:=K+\mathcal L.
\]
Thus the restricted price-taking policy, the restricted impact-aware policy, the forecast-aware impact-aware policy, and the full signal--forecast--impact policy are
\[
  \Theta_{\mathrm{na}}:=G^{-1}b_0,
  \qquad
  \Theta_0^{H}:=H^{-1}b_0,
  \qquad
  \Theta_f^{H}:=H^{-1}b,
  \qquad
  \Theta_{\mathrm{an}}(\Inf):=H^{-1}b_\Inf .
\]

\begin{theorem}[Exact finite-horizon information--forecast--impact decomposition]\label{thm:threeway}
In the finite-horizon stacked LQG model above, the value of full anticipation over the restricted price-taking policy is
\[
  \Val_{3}
  :=\E\Big[J_{b_\Inf}\big(\Theta_{\mathrm{an}}(\Inf)\big)-J_{b_\Inf}(\Theta_{\mathrm{na}})\Big].
\]
It admits the compact representation
\[
  \boxed{
  \Val_{3}
  =\tfrac12\tr(H^{-1}\Omega_b)
  +\tfrac12\Big\| (G^{-1}-H^{-1})b_0-H^{-1}(b-b_0)\Big\|_{H}^{2}}
\]
and therefore $\Val_3\ge0$. Expanded into the three faces, this is
\[
  \Val_{3}
  =\underbrace{\tfrac12\tr(H^{-1}\Omega_b)}_{\displaystyle \Val_{\mathrm{info}}}
  +\underbrace{\tfrac12\|\Theta_{\mathrm{na}}-\Theta_0^H\|_H^2}_{\displaystyle \Val_{\mathrm{impact}\mid 0}}
  +\underbrace{\tfrac12(b-b_0)\T H^{-1}(b-b_0)}_{\displaystyle \Val_{\mathrm{forecast}\mid H}}
  +\underbrace{\Val_{\mathrm{cross}}}_{\displaystyle -\,(\Theta_{\mathrm{na}}-\Theta_0^H)\T(b-b_0)} .
\]
The forecast--impact cross term is the only generic cross term. It vanishes if and only if
\[
  (\Theta_{\mathrm{na}}-\Theta_0^H)\T(b-b_0)=0,
\]
that is, when the impact correction induced by the restricted forecast is orthogonal, in score space, to the forecast innovation. There is no information--forecast or information--impact cross term because $b_\Inf-b$ has mean zero and enters only through its covariance $\Omega_b$.
\end{theorem}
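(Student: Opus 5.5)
The plan is to reduce everything to the master quadratic identity of \Cref{lem:quadratic}, applied conditionally on the signal and then averaged. Conditional on $\Inf$, $J_{b_\Inf}$ is a strictly concave quadratic with Hessian $-H$ and maximizer $\Theta_{\mathrm{an}}(\Inf)=H^{-1}b_\Inf$. Completing the square gives, for the deterministic policy $\Theta_{\mathrm{na}}$,
\[
  J_{b_\Inf}(\Theta_{\mathrm{an}}(\Inf))-J_{b_\Inf}(\Theta_{\mathrm{na}})
  =\tfrac12\|\Theta_{\mathrm{na}}-H^{-1}b_\Inf\|_H^2 .
\]
I will write $\Theta_{\mathrm{na}}-H^{-1}b_\Inf=(\Theta_{\mathrm{na}}-H^{-1}b)-H^{-1}(b_\Inf-b)$ and expand the $H$-norm. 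The cross term is linear in $b_\Inf-b$, which has mean zero, so it vanishes in expectation. The last term has expectation $\tfrac12\E[(b_\Inf-b)\T H^{-1}(b_\Inf-b)]=\tfrac12\tr(H^{-1}\Omega_b)$. This yields $\Val_3=\tfrac12\tr(H^{-1}\Omega_b)+\tfrac12\|\Theta_{\mathrm{na}}-H^{-1}b\|_H^2$. This is the same mechanism as in \Cref{thm:unified,prop:mvinfo}, and it already explains why there is no information--forecast or information--impact cross term.

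Next I rewrite the deterministic vector $\Theta_{\mathrm{na}}-H^{-1}b=G^{-1}b_0-H^{-1}b_0-H^{-1}(b-b_0)$, which is exactly the boxed expression. Nonnegativity then follows from $H\succ0$ (since $K\succ0$ and $\mathcal L\succeq0$) and $\Omega_b\succeq0$. For the three-face expansion, set $u:=\Theta_{\mathrm{na}}-\Theta_0^H=(G^{-1}-H^{-1})b_0$ and $w:=H^{-1}(b-b_0)$, and expand $\tfrac12\|u-w\|_H^2$. This gives $\tfrac12\|u\|_H^2+\tfrac12 (b-b_0)\T H^{-1}(b-b_0)-u\T H w$, and $u\T Hw=u\T(b-b_0)$, which is the stated cross term. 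The first two pieces are identified as $\Val_{\mathrm{impact}\mid0}$ (the impact gap of \Cref{thm:impact} at score $b_0$, now with precisions $G,H$) and $\Val_{\mathrm{forecast}\mid H}$ (the forecast premium of \Cref{thm:forecastvalue} with $K$ replaced by $H$).

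The vanishing characterization of the cross term is then immediate from its formula. The step needing most care is not hard but is bookkeeping. I have to check that the deployed objective really has Hessian $-H=-(K+2\mathcal L)$ and linear score $b_\Inf$ once the impact term $-\Theta\T\mathcal L\Theta$ is added to the objective of \Cref{thm:forecastvalue}. I also have to check that the price-taking fixed point solves $G\Theta=b_0$, by mirroring the self-consistency argument of \Cref{thm:impact} in stacked form. For $J_{b_\Inf}$ I take $\mathcal L$ symmetric, so that $\nabla(\Theta\T\mathcal L\Theta)=2\mathcal L\Theta$, and I drop the constant $-\tfrac12q\T\mathcal Cq$, which cancels in every difference.

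Finally, the phrase ``only generic cross term'' will be justified by pointing out that the full expansion contains exactly three quadratic pieces and one bilinear piece. The information pieces decouple because the averaging is over a mean-zero fluctuation.
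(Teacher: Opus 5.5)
Your proposal is correct and follows the paper's proof. Both complete the square at precision $H$, use that $b_\Inf-b$ has mean zero to isolate $\tfrac12\tr(H^{-1}\Omega_b)$, and then split $\Theta_{\mathrm{na}}-H^{-1}b$ into the impact correction $(G^{-1}-H^{-1})b_0$ and the forecast correction $H^{-1}(b-b_0)$ before expanding. The only cosmetic difference is ordering: you complete the square pointwise in $\Inf$ and then average, whereas the paper first averages $J_{b_\Inf}$ at the two policies and then completes the square on $J_b$.
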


\begin{proof}
Conditional on $\Inf$, $J_{b_\Inf}$ is a strictly concave quadratic with precision $H$ and maximizer $H^{-1}b_\Inf$. Hence
\[
  \E\Big[J_{b_\Inf}\big(\Theta_{\mathrm{an}}(\Inf)\big)\Big]
  =\tfrac12\E\big[b_\Inf\T H^{-1}b_\Inf\big]
  =\tfrac12 b\T H^{-1}b+\tfrac12\tr(H^{-1}\Omega_b).
\]
Since $\Theta_{\mathrm{na}}=G^{-1}b_0$ is deterministic and $\E[b_\Inf]=b$,
\[
  \E[J_{b_\Inf}(\Theta_{\mathrm{na}})]
  =b\T\Theta_{\mathrm{na}}-\tfrac12\Theta_{\mathrm{na}}\T H\Theta_{\mathrm{na}}.
\]
Subtracting gives
\[
  \Val_3
  =\tfrac12\tr(H^{-1}\Omega_b)
  +\Big[J_b(\Theta_f^H)-J_b(\Theta_{\mathrm{na}})\Big].
\]
Because $\Theta_f^H=H^{-1}b$ maximizes $J_b$, completing the square gives
\[
  J_b(\Theta_f^H)-J_b(\Theta_{\mathrm{na}})
  =\tfrac12\|\Theta_{\mathrm{na}}-\Theta_f^H\|_H^2.
\]
Now write
\[
  \Theta_{\mathrm{na}}-\Theta_f^H
  =\underbrace{(G^{-1}-H^{-1})b_0}_{\Theta_{\mathrm{na}}-\Theta_0^H}
  -\underbrace{H^{-1}(b-b_0)}_{\Theta_f^H-\Theta_0^H}.
\]
Substitution yields the compact formula. Expanding the squared $H$-norm gives
\[
\begin{aligned}
  \tfrac12\|\Theta_{\mathrm{na}}-\Theta_f^H\|_H^2
  &=\tfrac12\|\Theta_{\mathrm{na}}-\Theta_0^H\|_H^2
    +\tfrac12\|H^{-1}(b-b_0)\|_H^2 \\
  &\quad -(\Theta_{\mathrm{na}}-\Theta_0^H)\T H H^{-1}(b-b_0) \\
  &=\tfrac12\|\Theta_{\mathrm{na}}-\Theta_0^H\|_H^2
    +\tfrac12(b-b_0)\T H^{-1}(b-b_0)
    -(\Theta_{\mathrm{na}}-\Theta_0^H)\T(b-b_0),
\end{aligned}
\]
which is the expanded three-way formula. The stated orthogonality condition is exactly the condition under which the cross term is zero. Finally, the signal innovation $b_\Inf-b$ contributes only through $\E[(b_\Inf-b)(b_\Inf-b)\T]=\Omega_b$, so it creates no linear cross term with the deterministic forecast or impact corrections.
\end{proof}

\begin{corollary}[Impact deflates both forecast and information value]\label{cor:threewaydeflation}
Relative to the impact-free dynamic precision $K$, the impact-aware precision is $H=K+2\mathcal L\succeq K$. Hence
\[
  \tfrac12(b-b_0)\T H^{-1}(b-b_0)
  \le
  \tfrac12(b-b_0)\T K^{-1}(b-b_0),
\]
and
\[
  \tfrac12\tr(H^{-1}\Omega_b)
  \le
  \tfrac12\tr(K^{-1}\Omega_b).
\]
Thus impact weakly lowers the marginal value of both forecast and information, strictly in any direction where the forecast innovation or resolved signal covariance loads on an impacted mode.
\end{corollary}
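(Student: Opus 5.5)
The plan is to reduce both inequalities to a single operator-order fact, $H^{-1}\preceq K^{-1}$, and then pair it with positive semidefinite objects. This mirrors the proof of \Cref{cor:coupling}, with the dynamic precision $K$ now playing the role of $\gamma\Sigma$.

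\emph{Step one: the operator order.} Since $\mathcal L\succeq0$, we have $H=K+2\mathcal L\succeq K\succ0$. Operator antitonicity of inversion on positive-definite matrices then gives $H^{-1}\preceq K^{-1}$. If a self-contained argument is preferred, congruence by $K^{-1/2}$ reduces this to $I+2K^{-1/2}\mathcal LK^{-1/2}\succeq I$, and inverting a matrix that dominates the identity yields a matrix dominated by the identity.

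\emph{Step two: the forecast inequality.} Take $x=b-b_0$. Evaluating the quadratic form of $K^{-1}-H^{-1}\succeq0$ at $x$ gives $x\T H^{-1}x\le x\T K^{-1}x$, which is the first display.

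\emph{Step three: the information inequality.} Write $\tr(K^{-1}\Omega_b)-\tr(H^{-1}\Omega_b)=\tr\big(\Omega_b^{1/2}(K^{-1}-H^{-1})\Omega_b^{1/2}\big)$. This is the trace of a positive semidefinite matrix, so it is nonnegative, which gives the second display.

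\emph{Step four: the strictness clause.} The resolvent identity gives
\[
  K^{-1}-H^{-1}=2K^{-1}\mathcal LH^{-1}=2H^{-1}\mathcal LK^{-1}.
\]
For the forecast term, the gap $2x\T K^{-1}\mathcal LH^{-1}x$ vanishes iff $\mathcal L^{1/2}H^{-1}x=0$. To see this, write $K^{-1}-H^{-1}=P^\top P$ for a suitable factor $P$ and note that $P x=0$ iff $\mathcal LH^{-1}x=0$. For the trace term, I use the fact that $\tr(P\Omega)=0$ with $P,\Omega\succeq0$ forces $P\Omega=0$, exactly as in \Cref{cor:coupling}. This gives strictness unless $\mathcal LH^{-1}\Omega_b=0$.

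\emph{Main obstacle.} The only delicate step is making ``loads on an impacted mode'' precise. I would formalize it as $\mathcal LH^{-1}(b-b_0)\neq0$, respectively $\mathcal LH^{-1}\Omega_b\neq0$. Then I would check the kernel characterization of $K^{-1}-H^{-1}$: its null space is $\{x:\mathcal LH^{-1}x=0\}$, which follows from $K^{-1}-H^{-1}=2K^{-1}\mathcal LH^{-1}$ and the invertibility of $K$.
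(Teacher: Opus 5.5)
Your proposal is correct and follows the paper's proof: $H\succeq K\succ0$, antitonicity of inversion gives $H^{-1}\preceq K^{-1}$, and pairing this order with $(b-b_0)(b-b_0)^{\top}$ and with $\Omega_b\succeq0$ yields both displays. Your strictness argument via the resolvent identity $K^{-1}-H^{-1}=2K^{-1}\mathcal L H^{-1}$ makes explicit what the paper leaves as ``supported entirely on the null directions of $K^{-1}-H^{-1}$'', identifying that null space as $\{x:\mathcal L H^{-1}x=0\}$, which matches the equality condition of \Cref{cor:coupling}.
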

\begin{proof}
Since $\mathcal L\succeq0$, $H\succeq K\succ0$. Operator antitonicity of the inverse gives $H^{-1}\preceq K^{-1}$. Pairing this order with the rank-one matrix $(b-b_0)(b-b_0)\T$ gives the forecast inequality; pairing it with $\Omega_b\succeq0$ gives the information inequality. Strictness follows unless the corresponding vector or covariance is supported entirely on the null directions of $K^{-1}-H^{-1}$.
\end{proof}

\begin{proposition}[Forecast--impact interaction: bounds, angle, and sign]\label{prop:interaction}
In the notation of \Cref{thm:threeway}, set
\[
  a:=\Theta_{\mathrm{na}}-\Theta_0^H=(G^{-1}-H^{-1})b_0,
  \qquad
  v:=\Theta_f^H-\Theta_0^H=H^{-1}(b-b_0),
\]
and define
\[
  A:=\Val_{\mathrm{impact}\mid 0}=\tfrac12\|a\|_H^2,
  \qquad
  F:=\Val_{\mathrm{forecast}\mid H}=\tfrac12\|v\|_H^2.
\]
If $AF>0$, let $\phi$ be the $H$-angle between the impact correction $a$ and the forecast correction $v$:
\[
  \cos\phi:=\frac{a\T H v}{\|a\|_H\|v\|_H}
  =\frac{a\T(b-b_0)}{\|a\|_H\sqrt{(b-b_0)\T H^{-1}(b-b_0)}}.
\]
Then the deterministic part of the three-way value satisfies
\[
  \Val_3-\Val_{\mathrm{info}}
  = A+F-2\sqrt{AF}\cos\phi,
\]
and hence the sharp bounds
\[
  \big(\sqrt A-\sqrt F\big)^2
  \le
  \Val_3-\Val_{\mathrm{info}}
  \le
  \big(\sqrt A+\sqrt F\big)^2.
\]
The forecast and impact faces are \emph{substitutes} when $\cos\phi>0$ because the cross term is negative, \emph{complements} when $\cos\phi<0$ because the cross term is positive, and orthogonal/additive when $\cos\phi=0$. If $A=0$ or $F=0$, the same formulas hold by continuity with zero interaction.
\end{proposition}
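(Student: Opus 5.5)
The plan is to read everything off the compact representation already proved in \Cref{thm:threeway}. That theorem gives
\[
\Val_3-\Val_{\mathrm{info}}=\tfrac12\|\Theta_{\mathrm{na}}-\Theta_f^H\|_H^2=\tfrac12\|a-v\|_H^2,
\]
because $\Theta_{\mathrm{na}}-\Theta_f^H=(\Theta_{\mathrm{na}}-\Theta_0^H)-(\Theta_f^H-\Theta_0^H)=a-v$.

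\textbf{Step 1: expand in the $H$-inner product.} Expanding the norm gives
\[
\tfrac12\|a-v\|_H^2=\tfrac12\|a\|_H^2+\tfrac12\|v\|_H^2-a\T Hv=A+F-a\T Hv.
\]
\textbf{Step 2: rewrite the cross term in angle form.} When $AF>0$, both $a$ and $v$ are nonzero. We have $\|a\|_H=\sqrt{2A}$ and $\|v\|_H=\sqrt{2F}$, so
\[
a\T Hv=\|a\|_H\|v\|_H\cos\phi=2\sqrt{AF}\cos\phi,
\]
which is the displayed identity. For the second expression of $\cos\phi$, substitute $Hv=b-b_0$ in the numerator and $\|v\|_H^2=(b-b_0)\T H^{-1}(b-b_0)$ in the denominator. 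This also confirms that the cross term equals $-a\T(b-b_0)$, matching \Cref{thm:threeway}.

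\textbf{Step 3: bounds and sharpness.} Cauchy--Schwarz in the inner product $\langle x,y\rangle_H=x\T Hy$, which is a genuine inner product since $H\succ0$, gives $|\cos\phi|\le1$. The value $A+F-2\sqrt{AF}\cos\phi$ is affine and decreasing in $\cos\phi$, so it lies between $(\sqrt A-\sqrt F)^2$ at $\cos\phi=1$ and $(\sqrt A+\sqrt F)^2$ at $\cos\phi=-1$. These bounds are attained exactly when $a$ and $v$ are positively or negatively $H$-collinear. That settles sharpness in the sense that the bounds are equalities on those configurations.

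\textbf{Step 4: sign classification and degenerate cases.} The sign reading is immediate because the cross term is $-2\sqrt{AF}\cos\phi$: it is negative for $\cos\phi>0$ (substitutes), positive for $\cos\phi<0$ (complements), and zero for $\cos\phi=0$ (orthogonal). If $A=0$ or $F=0$, then $a=0$ or $v=0$ because $H\succ0$. The cross term $a\T Hv$ then vanishes, and the identity reduces to $A+F$, which agrees with both bounds since $(\sqrt A\mp\sqrt F)^2=A+F$ when one of them is zero.

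The only delicate point is checking that the second form of $\cos\phi$ is consistent with the theorem's sign convention for $\Val_{\mathrm{cross}}$. This is a bookkeeping matter, not a real obstacle.
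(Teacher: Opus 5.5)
Your proposal is correct and takes the same route as the paper's proof. You read $\Val_3-\Val_{\mathrm{info}}=\tfrac12\|a-v\|_H^2$ off \Cref{thm:threeway}, identify the cross term $-a\T Hv=-a\T(b-b_0)$, rewrite it as $-2\sqrt{AF}\cos\phi$, and obtain the sharp bounds from $|\cos\phi|\le1$, with equality under $H$-collinearity; your direct treatment of the degenerate case $AF=0$ (where $a=0$ or $v=0$ because $H\succ0$) is simply a more explicit version of the paper's continuity remark.
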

\begin{proof}
The cross term in \Cref{thm:threeway} is
\[
  \Val_{\mathrm{cross}}=-a\T(b-b_0)=-a\T H v.
\]
If $AF>0$, the definition of $\phi$ gives $a\T H v=\|a\|_H\|v\|_H\cos\phi=2\sqrt{AF}\cos\phi$, proving the angle formula. Since $-1\le\cos\phi\le1$, the bounds follow immediately. Sharpness occurs when $a$ and $v$ are collinear in the $H$-inner product. The cases $A=0$ or $F=0$ have $a=0$ or $v=0$, so the interaction term is zero and the continuous extension applies.

\end{proof}

\begin{corollary}[Orthogonalized nonnegative split of the forecast--impact interaction]\label{cor:orthosplit}
Use the notation of \Cref{prop:interaction}, and let $v=v_\parallel+v_\perp$ be the $H$-orthogonal decomposition of the forecast correction $v$ into the span of the impact correction $a$ and its $H$-orthogonal complement:
\[
  v_\parallel:=
  \begin{cases}
  \displaystyle \frac{a\T H v}{a\T H a}\,a, & a\ne0,\\[6pt]
  0, & a=0,
  \end{cases}
  \qquad
  v_\perp:=v-v_\parallel .
\]
Then the deterministic part of the three-way value has the nonnegative orthogonal form
\[
  \Val_3-\Val_{\mathrm{info}}
  =\frac12\|a-v_\parallel\|_H^2+\frac12\|v_\perp\|_H^2 .
\]
Consequently the signed cross term is exactly the projection overlap between the forecast correction and the impact correction. The deterministic value is zero iff $v=a$, i.e. iff the forecast correction exactly cancels the price-taking impact bias.
\end{corollary}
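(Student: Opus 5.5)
The plan is to start from the compact form in \Cref{thm:threeway}. That result already gives
\[
  \Val_3-\Val_{\mathrm{info}}=\tfrac12\|a-v\|_H^2,
\]
since $(G^{-1}-H^{-1})b_0-H^{-1}(b-b_0)=a-v$ in the notation of \Cref{prop:interaction}. Everything therefore reduces to Hilbert-space geometry in $(\R^N,\langle x,y\rangle_H=x\T Hy)$, which is a genuine inner product because $H\succ0$.

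\textbf{Step 1: the projection and Pythagoras.} First I check that $v_\parallel$ is the $H$-orthogonal projection of $v$ onto $\operatorname{span}\{a\}$. When $a\ne0$,
\[
  a\T Hv_\perp=a\T Hv-\frac{a\T Hv}{a\T Ha}\,a\T Ha=0,
\]
and when $a=0$ both the projection and the orthogonality condition are trivial. Now write $a-v=(a-v_\parallel)-v_\perp$. The vector $a-v_\parallel$ is a multiple of $a$, so it is $H$-orthogonal to $v_\perp$. Pythagoras in the $H$-norm then gives $\|a-v\|_H^2=\|a-v_\parallel\|_H^2+\|v_\perp\|_H^2$, and halving yields the displayed identity. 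Both summands are visibly nonnegative.

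\textbf{Step 2: the cross term is the projection overlap.} I expand $\tfrac12\|a-v_\parallel\|_H^2=A-a\T Hv_\parallel+\tfrac12\|v_\parallel\|_H^2$. By orthogonality, $\tfrac12\|v_\parallel\|_H^2+\tfrac12\|v_\perp\|_H^2=F$. Also $a\T Hv_\parallel=a\T Hv=a\T(b-b_0)$. Comparing with the expanded formula of \Cref{thm:threeway}, the signed cross term $-a\T(b-b_0)$ is exactly $-a\T Hv_\parallel$: it depends on $v$ only through its component along $a$, and $v_\perp$ contributes purely additively. This is the projection-overlap interpretation.

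\textbf{Step 3: the zero case.} Both terms are nonnegative, so the deterministic value vanishes iff $v_\perp=0$ and $a=v_\parallel$. Together these say $v=v_\parallel=a$. Conversely, if $v=a$, then $v_\perp=0$ and $v_\parallel=a$, so both terms are zero. Equivalently, by definiteness of $\|\cdot\|_H$ applied to the compact form $\tfrac12\|a-v\|_H^2$, the value is zero iff $a=v$.

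\textbf{Main obstacle.} The only delicate point is the degenerate case $a=0$, where the projection formula is undefined. There $v_\parallel=0$ and $v_\perp=v$, and the identity reduces to $\tfrac12\|v\|_H^2=F$, which is consistent with \Cref{prop:interaction}. The zero-value condition correctly becomes $v=0=a$. I would treat this case separately so that the argument does not divide by $a\T Ha$.
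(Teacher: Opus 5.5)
Your proposal is correct and follows essentially the same route as the paper: reduce to $\tfrac12\|a-v\|_H^2$ via the compact form of \Cref{thm:threeway}, split $a-v=(a-v_\parallel)-v_\perp$, apply Pythagoras in the $H$-inner product, and read off the zero case from definiteness of the norm. Your explicit orthogonality check, your handling of the degenerate case $a=0$, and your verification that the cross term equals $-a\T Hv_\parallel$ fill in details the paper leaves implicit.
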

\begin{proof}
By \Cref{thm:threeway}, $\Val_3-\Val_{\mathrm{info}}=\tfrac12\|a-v\|_H^2$. Since $a-v_\parallel$ lies in $\operatorname{span}\{a\}$ and $v_\perp$ is $H$-orthogonal to that span,
\[
  \|a-v\|_H^2=\|(a-v_\parallel)-v_\perp\|_H^2
  =\|a-v_\parallel\|_H^2+\|v_\perp\|_H^2.
\]
The zero condition follows because a norm is zero iff $a-v=0$.
\end{proof}

\begin{proposition}[Marginal shadow price of impact]\label{prop:shadowimpact}
Let $H_\tau:=K+2\tau\mathcal L$ for $\tau\ge0$, and write $\delta:=b-b_0$. Define the impact-adjusted marginal forecast and information values
\[
  F(\tau):=\tfrac12\delta\T H_\tau^{-1}\delta,
  \qquad
  I(\tau):=\tfrac12\tr(H_\tau^{-1}\Omega_b).
\]
Then $F$ and $I$ are differentiable and
\[
  F'(\tau)=-\delta\T H_\tau^{-1}\mathcal L H_\tau^{-1}\delta\le0,
  \qquad
  I'(\tau)=-\tr(H_\tau^{-1}\mathcal L H_\tau^{-1}\Omega_b)\le0.
\]
Thus impact weakly lowers the marginal value of forecast and information at every impact intensity, and the marginal loss is largest in modes where $\mathcal L$ overlaps the inverse-precision-filtered forecast innovation or signal covariance.
\end{proposition}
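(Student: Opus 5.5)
The plan is to differentiate the matrix inverse along the affine path $\tau\mapsto H_\tau=K+2\tau\mathcal L$ and then read off the signs from a congruence argument.

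\emph{Invertibility and smoothness.} For every $\tau\ge0$ we have $H_\tau=K+2\tau\mathcal L\succeq K\succ0$, since $\mathcal L\succeq0$. So $H_\tau$ is invertible on an open neighborhood of $[0,\infty)$. The map $\tau\mapsto H_\tau$ is affine in $\tau$, with derivative $\tfrac{d}{d\tau}H_\tau=2\mathcal L$. Matrix inversion is smooth (indeed rational in the entries) on positive-definite matrices. Hence $\tau\mapsto H_\tau^{-1}$ is differentiable, and so are $F$ and $I$, which are linear functionals of $H_\tau^{-1}$.

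\emph{Derivative of the inverse.} Differentiate the identity $H_\tau H_\tau^{-1}=I$ to obtain
\[
\tfrac{d}{d\tau}H_\tau^{-1}=-H_\tau^{-1}(2\mathcal L)H_\tau^{-1}.
\]
Substituting this into $F(\tau)=\tfrac12\delta\T H_\tau^{-1}\delta$ and $I(\tau)=\tfrac12\tr(H_\tau^{-1}\Omega_b)$, the factor $\tfrac12$ cancels the $2$. This gives
\[
F'(\tau)=-\delta\T H_\tau^{-1}\mathcal L H_\tau^{-1}\delta,\qquad I'(\tau)=-\tr(H_\tau^{-1}\mathcal L H_\tau^{-1}\Omega_b).
\]
Differentiation commutes with the trace and with the bilinear pairing, since both are linear.

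\emph{Signs.} Since $H_\tau^{-1}$ is symmetric, $F'(\tau)=-w\T\mathcal L w$ with $w:=H_\tau^{-1}\delta$, and this is $\le0$ because $\mathcal L\succeq0$. For $I'$, the matrix $P:=H_\tau^{-1}\mathcal L H_\tau^{-1}$ is a congruence of $\mathcal L$, so $P\succeq0$. The trace of a product of two PSD matrices is nonnegative: write $\tr(P\Omega_b)=\tr(\Omega_b^{1/2}P\,\Omega_b^{1/2})\ge0$. Hence $I'(\tau)\le0$.

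\emph{Interpretation.} The closing sentence of the statement follows by reading the derivatives as overlaps. $F'$ is the $\mathcal L$-energy of the filtered innovation $H_\tau^{-1}\delta$. $I'$ is $-\tr(\mathcal L\,H_\tau^{-1}\Omega_bH_\tau^{-1})$, the $\mathcal L$-weighted energy of the filtered signal covariance. Each vanishes exactly when the filtered object lies in $\ker\mathcal L$, using the same fact as in \Cref{cor:coupling} that $\tr(P\Omega)=0$ with both PSD forces $P\Omega=0$.

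There is no real obstacle here. The only point needing care is the trace-positivity step, which uses the symmetric square root rather than an entrywise argument. As a consistency check, integrating $F'$ from $0$ to $1$ recovers the inequalities of \Cref{cor:threewaydeflation}.
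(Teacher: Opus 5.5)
Your proof is correct and follows the paper's argument. Both differentiate $H_\tau^{-1}$ along the affine path via $\frac{d}{d\tau}H_\tau^{-1}=-2H_\tau^{-1}\mathcal L H_\tau^{-1}$, substitute into $F$ and $I$, and read off the signs from the positive semidefiniteness of the congruence $H_\tau^{-1}\mathcal L H_\tau^{-1}$ paired with $\delta$ and with $\Omega_b$. Your extra details (invertibility from $H_\tau\succeq K\succ0$ and the square-root argument for trace positivity) only make explicit what the paper leaves implicit.
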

\begin{proof}
Since $dH_\tau/d\tau=2\mathcal L$,
\[
  \frac{d}{d\tau}H_\tau^{-1}=-2H_\tau^{-1}\mathcal L H_\tau^{-1}.
\]
Substitution into $F$ gives the first derivative formula; substitution into $I$ and cyclicity of trace give the second. The matrix $H_\tau^{-1}\mathcal L H_\tau^{-1}$ is positive semidefinite, so the quadratic form and its trace pairing with $\Omega_b\succeq0$ are nonnegative.
\end{proof}

\begin{remark}[What was conjectural is now a theorem]
\Cref{thm:threeway} is the finite-horizon version of \Cref{thm:unified}. The static information--impact theorem is recovered when $T=0$, $b=b_0$, and $\Omega_b=\Omega$. The forecast theorem \Cref{thm:forecastvalue} is recovered when $\mathcal L=0$ and $\Omega_b=0$. The new term is $\Val_{\mathrm{cross}}$: it is the genuine interaction between forecast anticipation and impact anticipation, and it is generally nonzero. Therefore the exact statement is not ``three positive terms'' but ``two positive deterministic terms, one positive information trace, and the unique signed cross term whose sum is a single nonnegative quadratic norm.''
\end{remark}

\subsection{A worked example}

\begin{example}[Two assets, both faces, and the phase transition]\label{ex:worked}
Take $n=2$, $\gamma=1$, $\Sigma=I_2$, $\Lambda=\lambda I_2$, prior mean $\mu=(0.08,\,0.04)\T$, and a signal resolving $\Omega=4\times10^{-4}I_2$. Then $G=(1+\lambda)I$, $H=(1+2\lambda)I$, and $A=\gamma^{-1}\Sigma^{-1}\Lambda=\lambda I$, so $\rho(A)=\lambda$ and the recalibration threshold of \Cref{thm:phase} is at $\lambda=1$.

\emph{Below threshold} ($\lambda=\tfrac25$, $\rho(A)=0.4<1$): naive $\theta_{\mathrm{na}}=\mu/1.4=(\tfrac{2}{35},\tfrac{1}{35})$, impact-aware $\theta_{\mathrm{na}}^{H}=\mu/1.8=(\tfrac{2}{45},\tfrac{1}{45})$. The two values are
\[
  \Val_{\mathrm{impact}}=\tfrac{1+2\lambda}{2}\,\|\theta_{\mathrm{na}}-\theta_{\mathrm{na}}^{H}\|^2=\tfrac{2}{11025}\approx1.814\times10^{-4},
  \qquad
  \Val_{\mathrm{info}}=\tfrac12\tr(H^{-1}\Omega)=\tfrac{1}{4500}\approx2.222\times10^{-4},
\]
summing to $\Val\approx4.036\times10^{-4}$, which equals $\bar J(\theta_{\mathrm{an}})-\bar J(\theta_{\mathrm{na}})$ computed directly. Unit-step recalibration converges at linear rate $0.4$.

\emph{Above threshold} ($\lambda=\tfrac32$, $\rho(A)=1.5>1$): $\theta_{\mathrm{na}}=\mu/2.5=(\tfrac{4}{125},\tfrac{2}{125})$, $\theta_{\mathrm{na}}^{H}=\mu/4=(0.02,0.01)$, with $\Val_{\mathrm{impact}}=\tfrac{9}{25000}=3.6\times10^{-4}$ and $\Val_{\mathrm{info}}=\tfrac1{10000}=10^{-4}$. Unit-step recalibration \emph{diverges}; the damped iteration of \Cref{prop:damped} with $\omega<2/(1+1.5)=0.8$ converges to the same $\theta_{\mathrm{na}}$, but $\theta_{\mathrm{na}}$ is still dominated by $\theta_{\mathrm{na}}^{H}$ by $\Val_{\mathrm{impact}}$. Note the coupling of \Cref{cor:coupling} in the numbers: raising $\lambda$ from $\tfrac25$ to $\tfrac32$ cuts $\Val_{\mathrm{info}}$ from $2.22\times10^{-4}$ to $1.0\times10^{-4}$ --- the same signal is worth less once impact is internalized.
\end{example}

\subsection{From exogenous data to endogenous filtering: the stationary value}\label{sec:stationary}

Two ingredients of the decomposition were taken as exogenous data: the resolved-mean covariance $\Omega$ and the forecast deviation $m-\bar m$. We now close the model, generating the information content from a Kalman filter and lifting the value to the infinite-horizon stationary regime. This replaces ``complete the square against given data'' by the fixed-point equations of estimation and control --- a filtering Riccati for the information face, and a control Riccati with a Lyapunov average for the dynamic value.

\paragraph{A stationary expected-return state.} Let the conditional mean of next-period returns be a latent state $x_t\in\R^n$ (so $\mu_t=x_t$) following a stable Gaussian VAR(1),
\[
  x_{t+1}=\Phi_x x_t+w_{t+1},\qquad w_{t+1}\sim\mathcal N(0,W),\qquad \rho(\Phi_x)<1,
\]
with stationary covariance $\Pi$ solving the discrete Lyapunov equation $\Pi=\Phi_x\Pi\Phi_x\T+W$. An observer with linear--Gaussian measurements $y_t=Cx_t+v_t$, $v_t\sim\mathcal N(0,R)$, forms the steady-state filtered estimate $\hat x_t=\E[x_t\mid\mathcal F_t]$ whose one-step prediction-error covariance $P^-$ is the stabilizing solution of the filtering Riccati equation
\[
  P^-=\Phi_x P^-\Phi_x\T+W-\Phi_x P^-C\T\big(CP^-C\T+R\big)^{-1}CP^-\Phi_x\T,
\]
and whose filtering-error covariance is $P=\big((P^-)^{-1}+C\T R^{-1}C\big)^{-1}$.

\begin{theorem}[The resolved-mean covariance is a Kalman covariance reduction]\label{thm:filtering}
Let a coarse information flow $\mathcal F^{\mathrm c}\subseteq\mathcal F^{\mathrm f}$ be refined to a fine one, both observing the same state $x_t$ through their own linear--Gaussian sensors, with steady-state filtering-error covariances $P^{\mathrm c}\succeq P^{\mathrm f}$. Then the conditional-mean covariance each resolves relative to the unconditional prior is
\[
  \Cov(\hat x^{\bullet})=\Pi-P^{\bullet},\qquad\bullet\in\{\mathrm c,\mathrm f\},
\]
and the incremental resolved-mean covariance of the fine observer over the coarse one is exactly the reduction in steady-state filtering error,
\[
  \Omega\;:=\;\Cov(\hat x^{\mathrm f})-\Cov(\hat x^{\mathrm c})\;=\;P^{\mathrm c}-P^{\mathrm f}\;\succeq\;0 .
\]
Consequently the information term of \Cref{thm:unified,thm:threeway} is endogenous: the trace $\tfrac12\tr(H^{-1}\Omega)$ is the impact-deflated trace of a difference of two Riccati solutions, not a free parameter.
\end{theorem}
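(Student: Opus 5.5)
The argument is the law of total covariance for a stationary Gaussian state, applied once to each filtration, followed by the tower property to compare the two estimates. Throughout we work in the stationary regime: $x_t\sim\mathcal N(0,\Pi)$ with $\Pi=\Phi_x\Pi\Phi_x\T+W$, and both filters are initialized in, or have converged to, their steady state. The error covariance $P^\bullet=\E[(x_t-\hat x^\bullet_t)(x_t-\hat x^\bullet_t)\T]$ is then the stabilizing Riccati solution, and it is deterministic because the model is linear--Gaussian.

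\textbf{Step 1 (each filtration separately).} For $\bullet\in\{\mathrm c,\mathrm f\}$, write $x_t=\hat x^\bullet_t+e^\bullet_t$ with $e^\bullet_t:=x_t-\hat x^\bullet_t$. The orthogonality principle for conditional expectations makes $e^\bullet_t$ uncorrelated with every $\mathcal F^\bullet_t$-measurable square-integrable variable, in particular with $\hat x^\bullet_t$. Hence
\[
\Pi=\Cov(x_t)=\Cov(\hat x^\bullet_t)+P^\bullet ,
\]
which gives $\Cov(\hat x^\bullet)=\Pi-P^\bullet$. Since $\E[\hat x^\bullet_t]=\E[x_t]$, this is also the covariance resolved relative to the unconditional prior mean. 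This is exactly the quantity $\Omega$ of \Cref{ass:lqg}, with $\mu_\Inf=\hat x^\bullet_t$ and prior mean $\E[x_t]$.

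\textbf{Step 2 (the increment).} Subtracting the two identities gives $\Cov(\hat x^{\mathrm f})-\Cov(\hat x^{\mathrm c})=P^{\mathrm c}-P^{\mathrm f}$. For the sign, I will not rely on the assumed ordering $P^{\mathrm c}\succeq P^{\mathrm f}$ alone; I will derive it from nesting. Since $\mathcal F^{\mathrm c}_t\subseteq\mathcal F^{\mathrm f}_t$, the tower property gives $\hat x^{\mathrm c}_t=\E[\hat x^{\mathrm f}_t\mid\mathcal F^{\mathrm c}_t]$. Applying Step 1's decomposition to $\hat x^{\mathrm f}_t$ in place of $x_t$ yields
\[
P^{\mathrm c}-P^{\mathrm f}=\Cov(\hat x^{\mathrm f}_t-\hat x^{\mathrm c}_t)\succeq0 .
\]
So $\Omega$ is also the covariance of the innovation that the fine observer adds beyond the coarse one. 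This is the role $\mu_\Inf-\mu$ plays when the coarse estimate is taken as the base, and it justifies calling $\Omega$ the \emph{incremental} resolved-mean covariance.

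\textbf{Step 3 (endogeneity and the main obstacle).} Substituting $\Omega=P^{\mathrm c}-P^{\mathrm f}$ into \Cref{thm:unified} and \Cref{thm:threeway} expresses the information value as $\tfrac12\tr(H^{-1}(P^{\mathrm c}-P^{\mathrm f}))$. Each $P^\bullet=((P^{-,\bullet})^{-1}+C_\bullet\T R_\bullet^{-1}C_\bullet)^{-1}$ is built from the stabilizing filtering Riccati solution, so this trace is a difference of Riccati objects and contains no free parameter. The main obstacle is the modelling assumption that the fine sensor is a refinement: the nesting $\mathcal F^{\mathrm c}\subseteq\mathcal F^{\mathrm f}$, for example because the fine observer sees $(y^{\mathrm c},y^{\mathrm f})$. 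Without nesting, the identity $\Cov(\hat x^{\mathrm f})-\Cov(\hat x^{\mathrm c})=P^{\mathrm c}-P^{\mathrm f}$ still holds by Step 1, but positive semidefiniteness must then come from the hypothesis $P^{\mathrm c}\succeq P^{\mathrm f}$. In that case the difference also no longer has the interpretation as an innovation covariance that makes it play the role of $\Omega$ in the base-versus-enriched comparison. A secondary point is that steady state requires detectability and stabilizability, so that the stabilizing Riccati solutions exist and the filter covariances converge.
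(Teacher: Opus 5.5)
Your proposal is correct and follows the paper's proof essentially step for step. Orthogonality of the filtering error to $\hat x^\bullet_t$ gives $\Pi=\Cov(\hat x^\bullet)+P^\bullet$ for each filtration, subtraction gives $\Omega=P^{\mathrm c}-P^{\mathrm f}$, and the tower property under $\mathcal F^{\mathrm c}\subseteq\mathcal F^{\mathrm f}$ gives the sign; your explicit identity $P^{\mathrm c}-P^{\mathrm f}=\Cov(\hat x^{\mathrm f}_t-\hat x^{\mathrm c}_t)$ just spells out the step the paper compresses into ``the tower property gives $\Cov(\hat x^{\mathrm f})\succeq\Cov(\hat x^{\mathrm c})$.''
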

\begin{proof}
The filtered estimate is the $L^2$ projection of $x_t$ onto $\mathcal F^{\bullet}_t$, so the error $x_t-\hat x^{\bullet}_t$ is orthogonal to $\hat x^{\bullet}_t$; in stationarity $\Pi=\Cov(\hat x^{\bullet})+\Cov(x_t-\hat x^{\bullet})=\Cov(\hat x^{\bullet})+P^{\bullet}$, giving $\Cov(\hat x^{\bullet})=\Pi-P^{\bullet}$. Subtracting the coarse from the fine case gives $\Omega=(\Pi-P^{\mathrm f})-(\Pi-P^{\mathrm c})=P^{\mathrm c}-P^{\mathrm f}$. Since $\mathcal F^{\mathrm c}\subseteq\mathcal F^{\mathrm f}$, the tower property gives $\Cov(\hat x^{\mathrm f})\succeq\Cov(\hat x^{\mathrm c})$, i.e.\ $P^{\mathrm c}\succeq P^{\mathrm f}$ and $\Omega\succeq0$.
\end{proof}

\begin{corollary}[Comparative statics of endogenous information]\label{cor:filtercompstat}
Sharpening the fine observer's signal --- decreasing its measurement-noise covariance in the Loewner order --- decreases its steady-state error $P^{\mathrm f}$ and hence increases the resolved-mean covariance $\Omega$ and the information value $\tfrac12\tr(H^{-1}\Omega)$, monotonically. Internalizing impact ($\gamma\Sigma\mapsto H=2\Lambda+\gamma\Sigma$) weakly lowers that value, strictly on impacted directions, recovering \Cref{cor:coupling} with an endogenous $\Omega$.
\end{corollary}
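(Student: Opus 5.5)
The plan is to push the monotonicity through three links: measurement noise, steady-state error, resolved covariance, and finally the trace functional. Fix the coarse observer and its steady-state error $P^{\mathrm c}$, which does not depend on the fine sensor. By \Cref{thm:filtering}, $\Omega=P^{\mathrm c}-P^{\mathrm f}$. It therefore suffices to show that $R^{\mathrm f}\mapsto P^{\mathrm f}$ is Loewner-monotone: if $R_1\preceq R_2$, then the corresponding errors satisfy $P_1\preceq P_2$. Granting this, $\Omega_1\succeq\Omega_2$. Since $H^{-1}\succ0$, the map $\Omega\mapsto\tfrac12\tr(H^{-1}\Omega)=\tfrac12\tr(H^{-1/2}\Omega H^{-1/2})$ is Loewner-monotone, which gives the first claim.

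For the monotonicity of $P^{\mathrm f}$ in $R$, I would use an estimation (information-garbling) argument rather than manipulating the Riccati equation directly. Write $R_2=R_1+E$ with $E\succeq0$. Then the noisier observation $y^{(2)}_t=Cx_t+v_t+e_t$, with independent $e_t\sim\mathcal N(0,E)$, has the same joint law with $x$ as the model with noise covariance $R_2$. Moreover, $y^{(2)}$ is a function of $y^{(1)}$ plus independent randomization. Hence $\sigma(y^{(2)}_{\le t})\subseteq\sigma(y^{(1)}_{\le t},e_{\le t})$, and $e$ is independent of $x$. The projection of $x_t$ onto the larger $\sigma$-algebra coincides with projection onto $y^{(1)}_{\le t}$ alone, because $e$ carries no information about $x$. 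By the $L^2$ projection property, the error covariance satisfies $P_1\preceq P_2$ at every time.

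The step I expect to be the main obstacle is passing from finite-time to steady state. Running both Kalman filters from the same prior $\Pi$, the finite-time error covariances are ordered at every $t$ by the argument above. Under the standing detectability and stabilizability assumptions implicit in the existence of stabilizing Riccati solutions, these covariances converge to $P_1$ and $P_2$ respectively. Since the Loewner cone is closed, the ordering survives the limit. As a fallback, one could argue algebraically: the Riccati map is monotone in $R$ through the information form $P=((P^-)^{-1}+C\T R^{-1}C)^{-1}$, where $R\mapsto R^{-1}$ is antitone. This gives monotonicity one step at a time, and iterating the monotone Riccati recursion yields the same conclusion.

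For the second sentence, compare with the price-taker's precision $\gamma\Sigma$. By \Cref{cor:coupling}, $H=2\Lambda+\gamma\Sigma\succeq\gamma\Sigma$, and the inverse is antitone, so $\tfrac12\tr(H^{-1}\Omega)\le\tfrac12\tr((\gamma\Sigma)^{-1}\Omega)$ for every $\Omega\succeq0$. In particular this holds for the endogenous $\Omega=P^{\mathrm c}-P^{\mathrm f}$. Equality holds iff $\Lambda H^{-1}\Omega=0$, so the inequality is strict whenever the Kalman reduction loads on an impacted direction.
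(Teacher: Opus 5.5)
Your argument is correct. It follows the paper's skeleton: $P^{\mathrm c}$ held fixed, $\Omega=P^{\mathrm c}-P^{\mathrm f}$ from \Cref{thm:filtering}, Loewner monotonicity of $\tr(H^{-1}\cdot)$, and \Cref{cor:coupling} for the impact clause. The difference is in the one substantive step, that $P^{\mathrm f}$ is nondecreasing in the measurement-noise covariance, which you prove by another route.

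The paper simply cites the comparison theorem for algebraic Riccati equations. That theorem gives monotonicity of the stabilizing solution directly at steady state, in both $W$ and $R$, with no limiting argument.

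You instead derive it from first principles. A noisier sensor is the sharper sensor plus independent noise, so its observation $\sigma$-algebra sits inside an independently randomized copy of the sharper one. Projection onto that larger $\sigma$-algebra has exactly the sharper filter's error, since the added noise is independent of $(x_t,y^{(1)}_{\le t})$. So the noisier error can only be larger.

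This buys a self-contained proof that exposes the mechanism. It is the same information-nesting logic as \Cref{thm:filtering} and the feasibility argument of \Cref{prop:value}, in effect a Blackwell-garbling statement, and its core step does not rely on linear--Gaussian structure. It has two limitations. It covers only measurement noise: raising $W$ changes the state law itself rather than garbling the observation, though the corollary only needs $R$. It also requires the finite-time-to-steady-state passage that the comparison theorem absorbs.

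That passage, which you flag as the main obstacle, is harmless here. Because $\rho(\Phi_x)<1$, the pair $(\Phi_x,C)$ is automatically detectable and $(\Phi_x,W^{1/2})$ automatically stabilizable, since zero gain already stabilizes. So for $R\succ0$ the Riccati iterates converge to the stabilizing solution from any positive semidefinite start, in particular from $\Pi$. Closedness of the Loewner cone then preserves the order in the limit.

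Your algebraic fallback also works, with two provisos. You need to note that the Riccati map is monotone in $P^-$ as well as in $R$. You should also use the covariance form of the update if $P^-$ might be singular.
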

\begin{proof}
The stabilizing solution of the filtering Riccati equation is monotone nondecreasing in the noise covariances by the comparison theorem for algebraic Riccati equations \citep{LancasterRodman1995}: a smaller measurement-noise covariance yields a smaller $P^{\mathrm f}$, so $\Omega=P^{\mathrm c}-P^{\mathrm f}$ increases in the Loewner order, and $\tr(H^{-1}\cdot)$ preserves it. The impact-deflation inequality is \Cref{cor:coupling} applied to the endogenous $\Omega$.
\end{proof}

\paragraph{The infinite-horizon stationary value with costs.} Now restore quadratic transaction costs $\Gamma\succeq0$ and discount $\beta\in(0,1)$. An impact-aware trader solves
\[
  \max_{\{\theta_t\}}\ \E\sum_{t\ge0}\beta^t\Big[x_t\T\theta_t-\tfrac12\theta_t\T H\theta_t-\tfrac12(\theta_t-\theta_{t-1})\T\Gamma(\theta_t-\theta_{t-1})\Big],\quad H=2\Lambda+\gamma\Sigma,
\]
while the price-taker solves the same program with $H$ replaced by $G=\Lambda+\gamma\Sigma$ and is then \emph{scored} under the true $H$-objective.

\begin{theorem}[Stationary value of impact anticipation]\label{thm:stationary}
Both programs are discounted linear--quadratic regulators; their optimal policies are partial-adjustment rules $\theta_t=\theta_{t-1}+M_{\bullet}(\mathrm{aim}^{\bullet}_t-\theta_{t-1})$ for $\bullet\in\{G,H\}$, with trade-rate matrices $M_G,M_H$ from the control Riccati equation built on $G$ resp.\ $H$ and forward-looking aims $\mathrm{aim}^{\bullet}_t=\sum_{s\ge0}W^{\bullet}_s\,\Phi_x^{\,s}\hat x_t$. Under the stabilizing Riccati solutions and $\rho(\Phi_x)<1$, the induced closed-loop matrices are Schur. The value of impact anticipation is nonnegative and equals a discounted Lyapunov trace,
\[
  \Val^{\infty}_{\mathrm{impact}}
  =J_H(\pi_H)-J_H(\pi_G)
  =\tfrac12\sum_{t\ge0}\beta^t\,\tr\!\big(\Psi\,S_t\big)\;\ge0,
\]
where $S_t$ is the covariance of the joint state $(\theta_{t-1},\hat x_t)$ under the price-taking closed loop, propagated by $S_{t+1}=F_GS_tF_G\T+\Xi$, and $\Psi\succeq0$ is the $H$-curvature of the value-function gap. In the frictionless limit $\Gamma\to0$ the regulators collapse to the myopic rules $\theta_t=H^{-1}\hat x_t$ and $G^{-1}\hat x_t$. If $S_{\hat x}:=\Cov(\hat x_t)$ denotes the stationary covariance of the forecast state used by the trader, then the value reduces to the stationary average of the static gap of \Cref{thm:impact},
\[
  \Val^{\infty}_{\mathrm{impact}}\xrightarrow[\Gamma\to0]{}
  \frac{1}{1-\beta}\cdot\tfrac12\tr\!\Big(H\,(G^{-1}-H^{-1})\,S_{\hat x}\,(G^{-1}-H^{-1})^{\!\top}\Big).
\]
Under full observation $S_{\hat x}=\Pi$; under Kalman filtering $S_{\hat x}=\Pi-P$. Separately, the static price-taking recalibration map
\[
  \theta^{(k+1)}=\gamma^{-1}\Sigma^{-1}(\mu-\Lambda\theta^{(k)})
\]
is globally stable iff $\rho(\gamma^{-1}\Sigma^{-1}\Lambda)<1$, the threshold of \Cref{thm:phase}. This recalibration threshold governs an algorithmic fixed-point iteration; it is not the closed-loop stability condition of the already-solved dynamic trading regulator.
\end{theorem}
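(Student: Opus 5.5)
The plan is to treat both programs as one discounted LQR in the joint state $z_t:=(\theta_{t-1},\hat x_t)$ with control $\theta_t$. The forecast state evolves by $\hat x_{t+1}=\Phi_x\hat x_t+\nu_{t+1}$, where $\nu$ is the steady-state innovation with covariance $\Pi_{\hat x}-\Phi_x\Pi_{\hat x}\Phi_x\T$. Conditioning on $\hat x_t$ replaces $x_t$ by $\hat x_t$ in the linear reward, up to a policy-independent constant, by the tower property. For $\bullet\in\{G,H\}$ the per-period reward is a concave quadratic in $(z_t,\theta_t)$ with curvature $\bullet+\Gamma\succ0$ in the control.

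\textbf{Optimal policies and stability.} Standard discounted LQR theory, after rescaling by $\beta^{t/2}$ and using stabilizability from $\bullet+\Gamma\succ0$ and $\rho(\Phi_x)<1$, gives a quadratic value function $V_\bullet$ solving the control Riccati equation. The optimal policy is linear, $\theta_t=L_\bullet z_t$. Solving the first-order condition in the $\theta_{t-1}$ and $\hat x_t$ blocks separately puts it in partial-adjustment form. The $\hat x$-block of the Riccati solution, expanded as a Neumann series in $\Phi_x$, yields the aim $\sum_s W_s^\bullet\Phi_x^s\hat x_t$, exactly as in \Cref{thm:gp}. The closed loop $F_\bullet$ is block upper triangular in $(\theta,\hat x)$, with diagonal blocks $I-M_\bullet$ and $\Phi_x$. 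Schur stability of $I-M_\bullet$ is the stabilizing property, so $F_\bullet$ is Schur.

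\textbf{Value gap as a Lyapunov trace.} I would use the policy-evaluation (performance-difference) identity. Write the advantage of $\pi_G$ under the $H$-problem as
\[
Q_H(z,L_Gz)-V_H(z)=-\tfrac12(L_G-L_H)z\cdot\mathcal K_H(L_G-L_H)z,
\]
where $\mathcal K_H:=H+\Gamma+\beta(\text{Riccati control block})\succ0$. This holds because $Q_H$ is quadratic in the action with Hessian $-\mathcal K_H$, maximized at $L_Hz$. Summing the telescoped advantages along the $\pi_G$ trajectory gives
\[
J_H(\pi_H)-J_H(\pi_G)=\tfrac12\sum_t\beta^t\E[z_t\T\Psi z_t],\qquad \Psi:=(L_G-L_H)\T\mathcal K_H(L_G-L_H)\succeq0.
\]
Since $\E[z_tz_t\T]=S_t$, each term equals $\tr(\Psi S_t)$. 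The recursion $S_{t+1}=F_GS_tF_G\T+\Xi$ holds with $\Xi=\diag(0,\Cov\nu)$, and the first-moment parts are absorbed into $S_t$ taken as a second-moment matrix. Because $F_G$ is Schur and $\beta<1$, the series converges, and nonnegativity is immediate.

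\textbf{Frictionless limit.} As $\Gamma\to0$ the problem decouples across periods, so $L_\bullet\to(0,\bullet^{-1})$ and $\mathcal K_H\to H$. Then $\Psi\to\diag(0,(G^{-1}-H^{-1})H(G^{-1}-H^{-1}))$, and $S_t$'s $\hat x$-block is the stationary $S_{\hat x}$. Summing $\beta^t$ gives the displayed limit. Under full observation $S_{\hat x}=\Pi$; under Kalman filtering $S_{\hat x}=\Pi-P$ by \Cref{thm:filtering}. The recalibration statement is just \Cref{thm:phase} restated.

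\textbf{Main obstacle.} The hard step is justifying the $\Gamma\to0$ limit rigorously: continuity of the stabilizing Riccati solution in $\Gamma$ down to $\Gamma=0$, where the $\theta_{t-1}$ state becomes irrelevant. I would handle this by noting that the $\Gamma=0$ problem has $H\succ0$ control curvature, so the Riccati map stays uniformly strictly concave in the action and its stabilizing solution depends continuously, indeed analytically, on $\Gamma$ near $0$. A secondary care point is the interchange of the limit with the infinite sum, which dominated convergence handles using a uniform Schur bound on $F_G$ for small $\Gamma$.
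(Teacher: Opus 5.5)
Your proposal is correct and follows the paper's proof essentially step for step. It uses the same joint-state $(\theta_{t-1},\hat x_t)$ regulator, the same advantage identity with $\Psi$ as the $H$-control curvature pulled back through the gain gap $L_G-L_H$, the same telescoping along the price-taking closed loop into the Lyapunov recursion, and the same period-by-period decoupling in the frictionless case. Two small differences: your continuity and dominated-convergence argument for $\Gamma\to0$ is a bit more careful than the paper, which simply evaluates at $\Gamma=0$; and the reward is concave only in the control, not jointly in $(z_t,\theta_t)$, because of the bilinear term $\hat x_t\T\theta_t$, but only control concavity is used anyway.
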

\begin{proof}
Throughout, $\bullet\in\{G,H\}$ indexes the holding-precision matrix of the two programs ($G=\Lambda+\gamma\Sigma$ for the price-taker, $H=2\Lambda+\gamma\Sigma$ for the impact-aware trader); both are symmetric positive definite since $\Sigma\succ0$, $\gamma>0$, $\Lambda\succeq0$. We prove the four assertions in turn: the linear--quadratic reduction and partial-adjustment form; the Lyapunov-trace identity together with nonnegativity; the frictionless limit; and the separation between dynamic closed-loop stability and static recalibration stability.

\smallskip\noindent\textit{Step 1 (reduction to a homogeneous LQ regulator).}
Take the joint state $z_t:=(\theta_{t-1},\hat x_t)\in\R^{2n}$ and control $u_t:=\theta_t$. The trader is $\mathcal F_t$-adapted, so $\theta_t$ is $\mathcal F_t$-measurable and the only place the latent return $x_t$ enters the objective, the bilinear term $x_t\T\theta_t$, satisfies $\E[x_t\T\theta_t\mid\mathcal F_t]=\hat x_t\T\theta_t$. The expected per-period reward is therefore exactly the full-information reward with $x_t$ replaced by the filtered state $\hat x_t$ --- the separation principle holds with no residual constant, because the objective is \emph{linear}, not quadratic, in $x_t$. The filtered state is the stationary steady-state Kalman process
\[
  \hat x_{t+1}=\Phi_x\hat x_t+\iota_{t+1},\qquad \E[\iota_{t+1}\iota_{t+1}\T]=\Sigma_\iota,\quad \iota_{t+1}\perp\mathcal F_t,
\]
with $\Cov(\hat x_t)=\Pi-P$ by \Cref{thm:filtering} (and $\hat x_t=x_t$, $\Sigma_\iota=W$, $\Cov(\hat x_t)=\Pi$ under full observation). Collecting terms, the dynamics and reward are
\[
  z_{t+1}=\mathcal A z_t+\mathcal B u_t+\zeta_{t+1},\quad
  \mathcal A=\begin{psmallmatrix}0&0\\[1pt]0&\Phi_x\end{psmallmatrix},\ \
  \mathcal B=\begin{psmallmatrix}I\\[1pt]0\end{psmallmatrix},\ \
  \zeta_{t+1}=\begin{psmallmatrix}0\\[1pt]\iota_{t+1}\end{psmallmatrix},\ \
  \Xi:=\E[\zeta\zeta\T]=\begin{psmallmatrix}0&0\\[1pt]0&\Sigma_\iota\end{psmallmatrix},
\]
\[
  r_{\bullet}(z_t,u_t)=\hat x_t\T u_t-\tfrac12u_t\T\bullet\,u_t-\tfrac12(u_t-\theta_{t-1})\T\Gamma(u_t-\theta_{t-1}),
\]
which is jointly quadratic in $(z_t,u_t)$ with no constant: a homogeneous discounted LQ regulator. Since $\rho(\Phi_x)<1$ and $\beta\in(0,1)$, the pair $(\sqrt\beta\,\mathcal A,\mathcal B)$ is stabilizable (the $\hat x$-block is already Schur and the $\theta$-block is reachable through $\mathcal B$) and the relevant quadratic pencil is detectable; by the standard theory of the discrete algebraic Riccati equation \citep{LancasterRodman1995} each program admits a unique stabilizing symmetric solution $\mathcal K_\bullet=\mathcal K_\bullet\T$. Under the reward-maximization sign convention used here,
\[
  V_\bullet(z)=-\tfrac12z\T\mathcal K_\bullet z+c_\bullet,
\]
$\mathcal K_\bullet$ is generally indefinite rather than positive semidefinite: even the frictionless one-period value is positive quadratic in $\hat x_t$. Concavity of the Bellman maximization is governed instead by the positive control curvature below. The linear part of the value vanishes because $\E\zeta=0$, so the only inhomogeneity feeds the constant $c_\bullet$, and the optimal feedback is homogeneous and linear:
\[
  u=\pi_\bullet(z)=K_\bullet z,\qquad
  K_\bullet=(\Psi^{\mathrm{ctrl}}_\bullet)^{-1}L_\bullet,\qquad
  \Psi^{\mathrm{ctrl}}_\bullet:=\bullet+\Gamma+\beta\,\mathcal B\T\mathcal K_\bullet\mathcal B\succ0,
\]
where $L_\bullet$ is the $u$-versus-$z$ cross-block of the reward-plus-continuation form and $\Psi^{\mathrm{ctrl}}_\bullet$ is its (state-independent) control curvature. Writing the gain in its two $n\times n$ blocks, $K_\bullet=[\,I-M_\bullet\ \ \ M_\bullet A^{\mathrm{aim}}_\bullet\,]$, the feedback rearranges to the partial-adjustment rule $\theta_t=\theta_{t-1}+M_\bullet(\operatorname{aim}^\bullet_t-\theta_{t-1})$ with trade-rate $M_\bullet$ a function of $(\bullet,\Gamma,\beta,\mathcal K_\bullet)$ and aim $\operatorname{aim}^\bullet_t=A^{\mathrm{aim}}_\bullet\hat x_t=\sum_{s\ge0}W^\bullet_s\Phi_x^{\,s}\hat x_t$ a discounted sum of expected future Markowitz targets $\bullet^{-1}\E_t[x_{t+s}]=\bullet^{-1}\Phi_x^{\,s}\hat x_t$ at precision $\bullet$; this is exactly the structural solution of \Cref{thm:gp}/\citet{GarleanuPedersen2013} for the present coefficients, with the discount absorbed by the customary $\sqrt\beta$ rescaling. This proves the first assertion.

\smallskip\noindent\textit{Step 2 (the value of impact anticipation is a discounted Lyapunov trace).}
Let $Q_H(z,u):=r_H(z,u)+\beta\,\E_\zeta V_H(\mathcal A z+\mathcal B u+\zeta)$ be the $H$-state--action value. Its $u$-Hessian is $\partial^2_{uu}Q_H=-(H+\Gamma)+\beta\mathcal B\T(-\mathcal K_H)\mathcal B=-\Psi^{\mathrm{ctrl}}_H$, \emph{independent of $z$}, and the maximum over $u$ is attained at $u=\pi_H(z)=K_Hz$ with $V_H(z)=Q_H(z,K_Hz)$ (the Bellman identity). Hence for the price-taking control $u=\pi_G(z)=K_Gz$, exact second-order expansion of the concave quadratic $Q_H(z,\cdot)$ about its maximizer gives the one-step advantage gap
\begin{equation}\label{eq:advgap}
  \Delta_H(z):=V_H(z)-Q_H(z,K_Gz)
  =\tfrac12\,(K_Gz-K_Hz)\T\Psi^{\mathrm{ctrl}}_H(K_Gz-K_Hz)
  =\tfrac12\,z\T\Psi\,z\ \ge 0,
\end{equation}
where we have set the joint-state curvature
\[
  \boxed{\;\Psi:=(K_G-K_H)\T\,\Psi^{\mathrm{ctrl}}_H\,(K_G-K_H)\;=\;(K_G-K_H)\T\big(H+\Gamma+\beta\,\mathcal B\T\mathcal K_H\mathcal B\big)(K_G-K_H)\;\succeq0\;}
\]
--- the $H$-value control curvature pulled back through the gain gap $K_G-K_H$. Now let $g(z):=V_H(z)-J_H(\pi_G;z)$, where $J_H(\pi_G;z)$ is the discounted $H$-value started at $z$ under the fixed linear policy $\pi_G$, which obeys the policy-evaluation identity $J_H(\pi_G;z)=r_H(z,K_Gz)+\beta\E_\zeta J_H(\pi_G;F_Gz+\zeta)$ with closed loop $F_G:=\mathcal A+\mathcal B K_G$. Because $Q_H(z,K_Gz)=r_H(z,K_Gz)+\beta\E_\zeta V_H(F_Gz+\zeta)$, subtracting the two displays yields the fixed-point relation
\[
  g(z)=\Delta_H(z)+\beta\,\E_\zeta\,g(F_Gz+\zeta).
\]
Iterating $N$ times along the price-taking trajectory $z_{t+1}=F_Gz_t+\zeta_{t+1}$,
\[
  g(z_0)=\sum_{t=0}^{N-1}\beta^t\,\E\!\left[\Delta_H(z_t)\mid z_0\right]+\beta^N\,\E\!\left[g(z_N)\mid z_0\right].
\]
Both $V_H$ and $J_H(\pi_G;\cdot)$ are quadratics and, by Step~4 below, $F_G$ is Schur, so $\E[g(z_N)]$ grows at most linearly in $N$ while $\beta^N\to0$; the remainder vanishes. Taking expectations over the initial second moment $S_0:=\E[z_0z_0\T]$ and inserting \eqref{eq:advgap},
\[
  \Val^{\infty}_{\mathrm{impact}}=J_H(\pi_H)-J_H(\pi_G)=\E[g(z_0)]
  =\tfrac12\sum_{t\ge0}\beta^t\,\tr\!\big(\Psi\,S_t\big),
  \qquad
  S_{t+1}=F_GS_tF_G\T+\Xi,
\]
the recursion for $S_t=\E[z_tz_t\T]$ following from $z_{t+1}=F_Gz_t+\zeta_{t+1}$ with $\zeta_{t+1}\perp z_t$ and $\E\zeta_{t+1}=0$. This is the stated identity. Nonnegativity is immediate from $\Psi\succeq0$ and $S_t\succeq0$, and also follows directly because $\pi_H$ maximizes $J_H$ over admissible (stabilizing) policies while $\pi_G$ is admissible. This proves the second assertion.

\smallskip\noindent\textit{Step 3 (frictionless limit).}
Set $\Gamma=0$. The trading cost was the only term coupling $\theta_t$ to $\theta_{t-1}$, so the program decouples across periods: $V_\bullet(z)=\max_u\{\hat x\T u-\tfrac12u\T\bullet\,u\}+\beta\,\E_\zeta V_\bullet(z^+)$, whose maximizer is the one-period Markowitz target $\theta_t=\bullet^{-1}\hat x_t$ (equivalently $M_\bullet\to I$ and only the $s=0$ aim weight survives, $\operatorname{aim}^\bullet_t\to\bullet^{-1}\hat x_t$). The value Hessian then has a zero $\theta_{t-1}$-block, so $V_\bullet$ does not depend on its first argument; consequently the control enters the continuation only through that vanishing block and $\mathcal B\T\mathcal K_H\mathcal B=0$, giving $\Psi^{\mathrm{ctrl}}_H=H$ exactly. The gain gap reduces to $K_G-K_H=[\,0\ \ \ G^{-1}-H^{-1}\,]$, so by \eqref{eq:advgap}
\[
  \Delta_H(z_t)=\tfrac12\,\hat x_t\T(G^{-1}-H^{-1})\T H\,(G^{-1}-H^{-1})\,\hat x_t,
\]
which is precisely the static value gap of \Cref{thm:impact} with $\mu$ replaced by the realized forecast $\hat x_t$ (there $\thetana-\thetaan=(G^{-1}-H^{-1})\mu$ and the gap is $\tfrac12\norm{\thetana-\thetaan}_H^2$). Taking expectations with $S_{\hat x}:=\Cov(\hat x_t)=\E[\hat x_t\hat x_t\T]$ and using cyclicity and the symmetry of $G^{-1}-H^{-1}$, each period contributes the constant $\tfrac12\tr\!\big(H(G^{-1}-H^{-1})S_{\hat x}(G^{-1}-H^{-1})\T\big)$; summing the geometric series $\sum_{t\ge0}\beta^t$ gives
\[
  \Val^{\infty}_{\mathrm{impact}}\xrightarrow[\Gamma\to0]{}\frac{1}{1-\beta}\cdot\tfrac12\tr\!\Big(H\,(G^{-1}-H^{-1})\,S_{\hat x}\,(G^{-1}-H^{-1})^{\!\top}\Big),
\]
with $S_{\hat x}=\Pi$ under full observation and $S_{\hat x}=\Pi-P$ under Kalman filtering by \Cref{thm:filtering}. This proves the third assertion.

\smallskip\noindent\textit{Step 4 (closed-loop stability versus recalibration stability).}
The discounted Lyapunov sum $\sum_t\beta^t S_t$ converges, and the remainder in Step~2 vanishes, under the stabilizing price-taking closed loop $F_G=\mathcal A+\mathcal B K_G$. This is the dynamic trading stability object: $F_G$ maps the joint state $(\theta_{t-1},\hat x_t)$ under the already-solved optimal price-taking policy. The stabilizing Riccati solution in Step~1 gives $\rho(F_G)<1$ under the stated hypotheses, so $S_t$ admits the usual stationary Lyapunov recursion and the performance-difference sum is well defined.

This should be distinguished from the static price-taking recalibration map of \Cref{thm:phase}. That map is an algorithm for reaching a fixed point of the impacted expected-return law by repeatedly re-estimating returns at the current book and then re-optimizing:
\[
  \theta^{(k+1)}=\gamma^{-1}\Sigma^{-1}(\mu-\Lambda\theta^{(k)}),
  \qquad D\mathcal R=-\gamma^{-1}\Sigma^{-1}\Lambda .
\]
It is globally convergent from every initialization iff
\[
  \rho(\gamma^{-1}\Sigma^{-1}\Lambda)<1,
\]
which is the threshold of \Cref{thm:phase}. In the frictionless dynamic regulator $\Gamma=0$, the trading policy itself simply reaches the current target in one step, $\theta_t=G^{-1}\hat x_t$ for the price-taker and $\theta_t=H^{-1}\hat x_t$ for the impact-aware trader; its closed-loop stability is inherited from the forecast/filter state $\hat x_t$. The threshold therefore governs iterative self-calibration of a price-taking estimator, not the physical closed-loop stability of the solved trading policy. This proves the final assertion.

The closed forms for $M_\bullet,\,\Psi,\,F_\bullet$ are the stabilizing Riccati and Lyapunov solutions just exhibited.
\end{proof}

\begin{remark}[What the endogenization buys]
\Cref{thm:filtering,thm:stationary} answer the two exogeneity objections to the decomposition. The information term is no longer a free covariance $\Omega$ but the gap between two Kalman filters, monotone in signal quality through Riccati monotonicity; the dynamic value is no longer a single completed square but a discounted Lyapunov trace coupling the control Riccati to the stationary state law. The three-way structure of \Cref{thm:threeway} survives --- information enters through $\Omega=P^{\mathrm c}-P^{\mathrm f}$, impact through the $G\!\to\!H$ precision shift, forecast through the predictable path $\Phi_x^{\,s}\hat x_t$ feeding the aim --- but every ingredient is now the solution of an estimation or control fixed-point equation rather than given data.
\end{remark}

\paragraph{Status.} \Cref{thm:unified,thm:threeway,cor:coupling,cor:threewaydeflation,prop:interaction,cor:orthosplit,cor:plugin,cor:hpplugin,prop:curvature,prop:mvinfo,prop:damped,thm:filtering,cor:filtercompstat,thm:stationary} and \Cref{ex:worked} are proved in the stated settings. The finite-horizon information--forecast--impact decomposition is a theorem; the forecast--impact cross term is bounded, classified, and orthogonalized; the information face is endogenized as a Kalman covariance reduction; and the value is carried to the infinite-horizon stationary regime through a complete linear--quadratic argument --- separation, the stabilizing control Riccati, a performance-difference advantage gap, and a closed-loop Lyapunov recursion.

\section{Numerical Illustration: The Impact Threshold}\label{sec:numerics}

The impact and phase-transition theorems are fully analytic, but a numerical illustration makes the mechanism transparent. We simulate the recalibration map $\theta^{(k+1)}=\gamma^{-1}\Sigma^{-1}(\mu-\Lambda\theta^{(k)})$ across values of $\rho(\gamma^{-1}\Sigma^{-1}\Lambda)$, comparing $\thetaM=\gamma^{-1}\Sigma^{-1}\mu$, $\thetana=(\Lambda+\gamma\Sigma)^{-1}\mu$, and $\thetaan=(2\Lambda+\gamma\Sigma)^{-1}\mu$.

\paragraph{Design.}
We fix $n=20$ and $\gamma=3$, draw a positive-definite covariance $\Sigma$ with eigenvalues spread over $[0.5,2]$ in a random orthonormal basis, and an expected-return vector $\mu$ with entries in $[0.05,0.15]$. We construct a family of impact matrices
\[
  \Lambda_c=c\,\Sigma^{1/2}R\,\Sigma^{1/2},
\]
where $R\succeq0$ has a spread spectrum, so that $A_c=\gamma^{-1}\Sigma^{-1}\Lambda_c$ is similar to a symmetric positive-semidefinite matrix and $\rho(A_c)$ increases monotonically with the impact intensity $c$. The impact base is normalized so that $\rho(\gamma^{-1}\Sigma^{-1}\Lambda_1)=\tfrac13$, hence $\rho(A_c)=c/3$, and we pick $c\in\{1.05,3.00,4.80\}$ to place $\rho(A_c)$ exactly at $\{0.35,1.00,1.60\}$: below, at, and above the critical surface.

\paragraph{Results.}
\Cref{tab:phase} reports, for each regime, the impact scale, the spectral radius, the behavior of the unit-step recalibration started at $\theta^{(0)}=0$, and the realized objectives $J(\thetana)$ and $J(\thetaan)$. The anticipation gap $J(\thetaan)-J(\thetana)$ is positive in every regime and \emph{grows} with impact, from $7.6\times10^{-4}$ at $\rho(A)=0.35$ to $2.4\times10^{-3}$ at $\rho(A)=1.6$. \Cref{fig:phase} plots the recalibration error $\|\theta^{(k)}-\thetana\|_2$: geometric convergence at rate $\rho(A)=0.35$ below threshold, a marginal plateau at $\rho(A)=1.0$, and exponential divergence at $\rho(A)=1.6$.

\begin{table}[H]
\centering
\caption{Numerical illustration of the impact threshold ($n=20$, $\gamma=3$). The price-taking fixed point $\thetana$ exists in every regime as the solution of a linear system; the unit-step recalibration that would reach it is globally convergent only below the threshold.}
\label{tab:phase}
\begin{tabular}{@{}lccccc@{}}
\toprule
Regime & Impact $c$ & $\rho(\gamma^{-1}\Sigma^{-1}\Lambda_c)$ & Naive recalibration & $J(\thetana)$ & $J(\thetaan)$ \\
\midrule
Low      & $1.05$ & $0.35$ & converges (rate $0.35$)      & $0.02160$ & $0.02236$ \\
Critical & $3.00$ & $1.00$ & marginal --- no convergence  & $0.01290$ & $0.01488$ \\
High     & $4.80$ & $1.60$ & diverges                     & $0.00911$ & $0.01153$ \\
\bottomrule
\end{tabular}
\end{table}

\begin{figure}[H]
\centering
\includegraphics[width=0.72\textwidth]{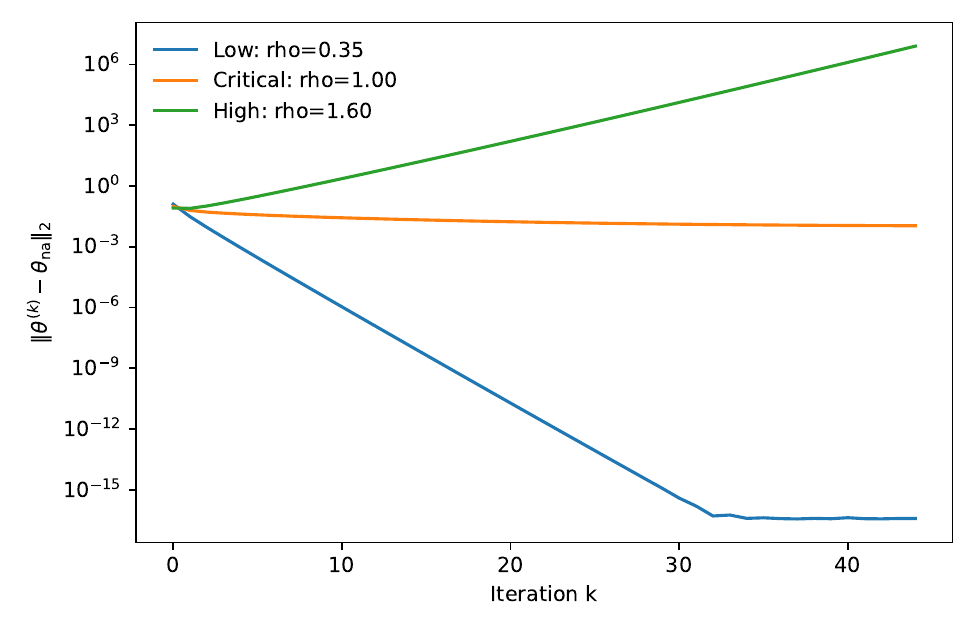}
\caption{Recalibration error $\|\theta^{(k)}-\thetana\|_2$ versus iteration $k$ for the three regimes of \Cref{tab:phase}. Below the threshold ($\rho(A)=0.35$) the unit-step map converges geometrically; at the threshold ($\rho(A)=1.0$) it neither converges nor blows up; above it ($\rho(A)=1.6$) it diverges. The anticipatory optimum $\thetaan$ requires no iteration in any regime.}
\label{fig:phase}
\end{figure}

\paragraph{Interpretation.}
The experiment separates two ideas that are often conflated. The price-taking fixed point can exist even when the learning dynamics used to reach it are unstable. The anticipatory optimizer does not chase a fixed point of the induced return law; it maximizes the deployed objective directly. This is why impact anticipation is not merely a more conservative Markowitz allocation, but a change from estimation to control --- and why the value of that change, $J(\thetaan)-J(\thetana)$, is largest exactly in the high-impact regime where naive recalibration fails.

\subsection{Verifying the three-way decomposition and the estimation penalty}\label{sec:numthreeway}

The phase experiment above isolates impact. The stacked LQG diagnostic below displays the three-way decomposition of \Cref{thm:threeway}, its signed cross term, and the estimation penalty of \Cref{thm:estpenalty} on a two-period, two-asset instance ($T=2$, $n=2$, $\beta=0.97$).

\paragraph{The cross term is not negligible.}
\Cref{tab:threeway} reports the four components of the deterministic-plus-information value in two configurations that share the \emph{same} information trace, impact gap $A$, and forecast premium $F$, and differ only in the $H$-angle between the forecast and impact corrections. When they are nearly aligned ($\cos\phi\approx+0.99$, substitutes) the negative cross term cancels most of the value; when nearly anti-aligned ($\cos\phi\approx-0.99$, complements) the positive cross term roughly doubles it. The Monte-Carlo value over the signal matches the closed-form decomposition to sampling error in both cases, and the orthogonalized split of \Cref{cor:orthosplit} reproduces the deterministic part exactly. Reporting only ``information $+$ impact $+$ forecast'' would misstate the total by the cross term --- here a factor of about $2.1$ between the two configurations.

\begin{table}[H]
\centering
\caption{End-to-end check of the three-way decomposition (\Cref{thm:threeway}) on a stacked LQG instance, values in units of $10^{-4}$. Both columns share the same information, impact, and forecast components and differ only in the sign of the forecast--impact cross term; the closed-form total matches the direct Monte-Carlo value (over the signal) to sampling error.}
\label{tab:threeway}
\small
\begin{tabularx}{\textwidth}{@{}Ycc@{}}
\toprule
Component & Substitutes ($\cos\phi\approx+0.99$) & Complements ($\cos\phi\approx-0.99$) \\
\midrule
$\Val_{\mathrm{info}}=\tfrac12\tr(H^{-1}\Omega_b)$        & $2.237$  & $2.237$ \\
$A=\Val_{\mathrm{impact}\mid 0}$                          & $0.639$  & $0.639$ \\
$F=\Val_{\mathrm{forecast}\mid H}$                        & $0.633$  & $0.633$ \\
$\Val_{\mathrm{cross}}=-a\T(b-b_0)$                       & $-1.265$ & $+1.265$ \\
\midrule
Total (closed-form decomposition)                        & $2.244$  & $4.774$ \\
Total (direct Monte Carlo over signal)                   & $2.242$  & $4.774$ \\
\bottomrule
\end{tabularx}
\end{table}

\paragraph{Estimation can make anticipation worth less than nothing.}
For a six-asset deployed precision with true score $b$ and estimation noise $\Sigma_\eps$, the oracle anticipatory value over the zero baseline is $\tfrac12 q$ with $q=b\T H^{-1}b$. In a high-noise regime ($p=\tr(H^{-1}\Sigma_\eps)\approx1.67$ against $q\approx0.12$), the \emph{unshrunk} plug-in's expected value is $\tfrac12(q-p)\approx-0.77$: anticipating an over-trusted estimate is strictly worse than not anticipating at all, the exact average-case form of \Cref{prop:value}'s second clause. The reliability-ratio shrinkage of \Cref{cor:shrinkage}, $s^\star=q/(q+p)\approx0.069$, restores a nonnegative value $\tfrac12 q^2/(q+p)\approx+4.2\times10^{-3}$, and the Bayes matrix shrinkage of \Cref{rem:bayesshrink} dominates the scalar rule while matching its closed form $\tfrac12\tr(M^\star\Sigma_b)$. The measured plug-in suboptimality relative to the oracle equals $\tfrac12\tr(H^{-1}\Sigma_\eps)$ to Monte-Carlo error, confirming the estimation--information mirror.

\section{Empirical Protocol and Falsifiability}\label{sec:protocol}

A strong empirical version of the paper should not merely report that an anticipatory allocation looks different from a naive one. It should show that the enrichment is decision-relevant, stable, and not dominated by estimation error. The following protocol turns the theory into testable diagnostics.

\paragraph{Objects to estimate.}
A rolling implementation needs five inputs: a baseline mean--covariance pair $(\widehat\mu_t,\widehat\Sigma_t)$; a forecast stack $\widehat m_t=(\widehat m_{t,0},\ldots,\widehat m_{t,T})$ or factor law $(\widehat B_t,\widehat\Phi_t)$; an impact operator $\widehat{\mathcal L}_t$ estimated from signed flow, order imbalance, or execution slippage; a signal-resolution matrix $\widehat\Omega_{b,t}$ when private or alternative information is used; and an ambiguity radius $\widehat\eps_t$ for the robust layer. The forecast and impact objects must be estimated on information available before deployment, then frozen during the out-of-sample decision period.

\begin{center}
\small
\begin{tabularx}{\textwidth}{@{}L{0.19\textwidth}L{0.28\textwidth}Y@{}}
\toprule
Face & Diagnostic statistic & Falsification / failure mode \\
\midrule
Information & $\tr(\widehat H_t^{-1}\widehat\Omega_{b,t})$ or an information-drift estimate & The signal is economically vacuous if the trace is statistically indistinguishable from zero after costs and turnover. \\
Forecast & $(\widehat b_t-\widehat b_{0,t})\T\widehat H_t^{-1}(\widehat b_t-\widehat b_{0,t})$ & Predictive accuracy is not enough; by \Cref{cor:decisionforecast}, the forecast fails if errors live in directions the optimizer cannot express after risk, impact, and transaction costs. \\
Impact & $\widehat{\mathcal L}_t\widehat\Theta^{\mathrm M}_t$ and $\rho(\widehat K_t^{-1}\widehat{\mathcal L}_t)$ & Impact anticipation is vacuous if the Markowitz direction lies in the impact null space; unit-step naive recalibration is unstable when the spectral radius exceeds one. \\
Interaction & $\cos\widehat\phi_t$ from \Cref{prop:interaction} & A negative additive claim is false if forecast and impact are substitutes; the cross term must be reported, not hidden. \\
Robustness & $\widehat\Delta_t-2\widehat\eta_t$ or the robust objective gap & Plug-in anticipation should be rejected when the estimated value is smaller than the model-error band of \Cref{cor:plugin,cor:hpplugin}; net deployment should also subtract turnover, execution, and data costs. \\
Estimation penalty & $\tfrac12\tr(\widehat H_t^{-1}\widehat\Sigma_{\eps,t})$ against the structural value & By \Cref{thm:estpenalty}, plug-in anticipation is expected to lose value once the estimation-noise trace exceeds the resolved-structure trace; the fix is to deploy the reliability-ratio or Bayes-shrunk score of \Cref{cor:shrinkage,rem:bayesshrink} rather than the raw estimate. \\
\bottomrule
\end{tabularx}
\end{center}

\paragraph{Rolling implementation.}
At each rebalance date $t$, build
\[
  \widehat K_t=\widehat{\mathcal R}_t+D\T\widehat{\mathcal C}_tD,
  \qquad
  \widehat H_t=\widehat K_t+2\widehat{\mathcal L}_t,
  \qquad
  \widehat G_t=\widehat K_t+\widehat{\mathcal L}_t,
\]
then compute the restricted price-taking path, the restricted impact-aware path, and the full anticipatory path:
\[
  \widehat\Theta_{\mathrm{na},t}=\widehat G_t^{-1}\widehat b_{0,t},
  \qquad
  \widehat\Theta_{0,t}^{H}=\widehat H_t^{-1}\widehat b_{0,t},
  \qquad
  \widehat\Theta_{\mathrm{an},t}=\widehat H_t^{-1}\widehat b_{\Inf,t}.
\]
A robust deployment replaces the last solve by the convex program
\[
  \widehat\Theta^{\mathrm{rob}}_t
  \in\argmax_{\Theta\in\Theta_t}
  \left\{\widehat b_{\Inf,t}\T\Theta
  -\tfrac12\Theta\T\widehat H_t\Theta
  -\widehat\eps_t\|\Theta\|_*\right\}.
\]
This is the empirical version of the paper's message: enrich the model, but price the uncertainty of the enrichment.

\paragraph{Rolling out-of-sample evaluation.}
Hold the chosen positions over the next evaluation interval and compute realized utilities using the same accounting convention for all policies:
\[
  \widehat U_{t+1}(\Theta_t)
  :=r_{t+1}\T\theta_{t}
  -\tfrac12(\theta_t-\theta_{t-1})\T\Gamma_t(\theta_t-\theta_{t-1})
  -\widehat c_t(\Theta_t)
  -\tfrac\gamma2\theta_t\T\widehat\Sigma_t\theta_t,
\]
where $\widehat c_t$ is the realized or estimated execution/impact cost and $\theta_t$ is the first deployed block of the stacked decision $\Theta_t$. The empirical value is the paired difference
\[
  \widehat\Val^{\mathrm{oos}}
  =\frac1N\sum_{t=1}^N
  \big[\widehat U_{t+1}(\widehat\Theta_{\mathrm{an},t})-\widehat U_{t+1}(\widehat\Theta_{\mathrm{na},t})\big],
\]
and the robust value replaces $\widehat\Theta_{\mathrm{an},t}$ by $\widehat\Theta^{\mathrm{rob}}_t$. Paired block bootstrap intervals are natural because the same return path evaluates both policies and serial dependence is expected.

\paragraph{Minimum reporting standard.}
An empirical paper should report (i) the raw and cost-adjusted value gap; (ii) turnover and concentration of each policy; (iii) the spectral radius $\rho(\widehat K^{-1}\widehat{\mathcal L})$ through time; (iv) the information trace, forecast premium, impact gap, forecast--impact interaction, and orthogonal projection components separately; (v) the realized sign of the plug-in condition $\widehat\Delta-2\widehat\eta$ and, when confidence bands are available, the high-probability condition $\widehat\Delta-2\widehat\eta_\delta$; and (vi) a robustness sweep over $\eps$. A positive in-sample anticipatory value with a negative out-of-sample robust value is not evidence of anticipation; it is evidence of overfitting the future model.

\section{Synthesis}\label{sec:synth}

\subsection{The three values of anticipation}

\begin{center}
\begin{tabularx}{\textwidth}{@{}L{0.18\textwidth}L{0.30\textwidth}Y@{}}
\toprule
Face & What is anticipated & Value of anticipation $\Val$ \\
\midrule
Information (\Cref{sec:info}) & a signal $L$ of the future under an enlarged filtration & information-drift energy $\tfrac12\E\!\int_0^T\alpha_t^2\,dt$; a mutual-information / relative-entropy quantity under standard enlargement assumptions ($=\Ent(L)$ if $L$ discrete; $=0$ iff $\alpha\equiv0$) (\Cref{thm:pk}) \\
Forecast (\Cref{sec:forecast}) & predicted return dynamics over a trading horizon & multi-period value minus restricted-forecast value; exact quadratic premium $\tfrac12(m-\bar m)\T\mathcal B K^{-1}\mathcal B(m-\bar m)$ and forward-looking aim portfolio (\Cref{thm:forecastvalue,thm:gp}) \\
Impact (\Cref{sec:impact}) & the return shift the strategy itself causes & performative gap $\tfrac12(\thetana{-}\thetaan)\T(2\Lambda{+}\gamma\Sigma)(\thetana{-}\thetaan)$ (\Cref{thm:impact}) \\
Robustness (\Cref{sec:robust}) & adversarial misspecification of the anticipated model & dual-norm penalty $\eps\|\theta\|_*$ capping optimizer overreach under Wasserstein ambiguity (\Cref{prop:dro}) \\
\bottomrule
\end{tabularx}
\end{center}

The finite-horizon LQG theorem (\Cref{thm:threeway}) unifies the first three faces in one formula:
\[
  \Val_3
  =\tfrac12\tr(H^{-1}\Omega_b)
  +\tfrac12\|(G^{-1}-H^{-1})b_0-H^{-1}(b-b_0)\|_H^2.
\]
Expanding the squared norm gives an information trace, an impact gap, a forecast premium, and the signed forecast--impact cross term $-(\Theta_{\mathrm{na}}-\Theta_0^H)\T(b-b_0)$. \Cref{prop:interaction} shows that this term is not an ambiguity in the theory: it is exactly the $H$-angle between the impact correction and the forecast correction, bounded between the substitute and complement extremes. The static information--impact formula is the special case $b=b_0$, and the pure forecast formula is the special case $\mathcal L=0$ and $\Omega_b=0$.

\subsection{When anticipation helps, is vacuous, or harms}

The common theorem behind all three faces is simple. If the enriched model is correctly specified and the naive policy remains feasible, then $\Val=J(\thetaan)-J(\thetana)\ge0$. Anticipation is therefore valuable not because it is optimistic, but because it solves a strictly richer decision problem.

Anticipation is \emph{vacuous} when the enrichment has no decision-relevant content. In the information case this means $\alpha\equiv0$. In the forecast case it means predictable expected returns do not change the optimal trading path relative to the myopic portfolio. In the impact case it means $\Lambda\thetaM=0$, so the Markowitz direction lies in the null space of impact.

Anticipation becomes harmful only when the enrichment is estimated and then over-optimized. The optimizer can exploit noise in a signal, a forecast, or an impact matrix exactly because the enriched model gives it more directions in which to act; the realized downside is bounded by twice the model-error sup (\Cref{prop:value}), operationalized by the deterministic and high-probability plug-in hurdles of \Cref{cor:plugin,cor:hpplugin}, and capped by the robust regularizer of \Cref{prop:dro}. In the linear--quadratic case the harm is moreover \emph{exact}: by \Cref{thm:estpenalty} the expected value of estimated anticipation is the structural value minus an estimation penalty $\tfrac12\tr(H^{-1}\Sigma_\eps)$ that mirrors the information gain $\tfrac12\tr(H^{-1}\Omega)$ term for term, so anticipation pays in expectation iff resolved structure beats injected estimation noise in the precision trace, and the bias--variance-optimal response is to shrink the score by its reliability ratio (\Cref{cor:shrinkage,rem:bayesshrink}). This is why robust anticipation is not an optional refinement but the natural counterpart of estimated anticipation: estimated anticipation $\Rightarrow$ regularized or distributionally robust anticipation, the worst-case face of the same shrinkage the expectation already prescribes. In one sentence: anticipation is the move from estimation to control; it creates value when the enriched model is true, has no effect when the enrichment is irrelevant, and destroys value only when model error is optimized as if it were structure.

\subsection{Open problems}
\begin{enumerate}[leftmargin=2em,itemsep=3pt]
\item \textbf{Progressive information and the information rate.} The static noisy case is the mean--variance trace $\tfrac1{2\gamma}\tr(\Sigma^{-1}\Omega)$ (\Cref{prop:mvinfo}), and the stationary noisy case is now endogenized as a steady-state Kalman covariance reduction $\Omega=P^{\mathrm c}-P^{\mathrm f}$ (\Cref{thm:filtering}). What remains is the genuinely progressive enlargement --- replacing terminal mutual information by an information-rate decomposition along the filtration --- and the reconciliation of the log-utility entropy and mean--variance trace functionals (\Cref{rem:tracevslogdet}) under a common information-rate object.
\item \textbf{Decision-focused forecasting.} Estimate the forecast objects $(B,\Phi)$ through the trading objective itself, obtaining finite-sample guarantees on realized portfolio value rather than on prediction error.
\item \textbf{Transient-impact anticipation.} Replace the permanent impact matrix $\Lambda$ by a propagator kernel and derive the execution-level analogue of the monopolist correction, with the corresponding schedule gap and phase transition.
\item \textbf{Multi-agent anticipation.} The static one-period symmetric game is now solved: \Cref{thm:multiagent} gives the Nash equilibrium interpolating monopoly and competition, \Cref{cor:erosion} the erosion of the agency gap as the field grows, and \Cref{prop:multiphase} the $N$-fold tightening of the stability threshold. What remains is the \emph{dynamic} mean-field equilibrium with transaction costs and filtering \citep{CarmonaLacker2015} --- coupling \Cref{thm:stationary} to a population of anticipating traders --- and heterogeneous impact, risk aversion, and information across agents.
\item \textbf{Beyond linear--quadratic three-way decompositions.} \Cref{thm:threeway} resolves the finite-horizon LQG case and \Cref{thm:stationary} the infinite-horizon stationary case with transaction costs and Kalman-filtered information. The remaining problem is to leave the linear--quadratic--Gaussian world entirely: nonlinear alpha dynamics, stochastic covariance, transient (propagator) and nonsymmetric or state-dependent impact, and constrained policies, where the quadratic norm and the Lyapunov trace should be replaced by an appropriate Bregman or variational-inequality gap.
\item \textbf{Finite-sample robust anticipation.} \Cref{cor:plugin,cor:hpplugin} give deterministic and high-probability edge conditions; the remaining empirical problem is to estimate nonasymptotic, asset-dependent confidence bands for $\eta_N(\delta)$ and to calibrate the ambiguity radius $\eps$ from those bands rather than choosing it as a backtest tuning knob.
\end{enumerate}

\subsection{Conclusion}
Anticipatory portfolio optimization is not a single technique but a common decision-theoretic pattern. The optimizer enriches its model by conditioning on hidden information, by using predicted return dynamics, or by internalizing the market impact of its own deployment; in each case the value of anticipation is the gap between the enriched controller and the naive estimator.

The information case gives the cleanest theoretical benchmark: value is an information-drift energy, equivalently a mutual-information or relative-entropy quantity under the usual enlargement assumptions. The forecast case shows how anticipation enters practical trading: the investor moves toward a forward-looking aim portfolio rather than a one-period Markowitz target. The impact case gives the sharpest closed form: price-taking allocation solves $(\Lambda+\gamma\Sigma)\thetana=\mu$, whereas impact-aware allocation solves $(2\Lambda+\gamma\Sigma)\thetaan=\mu$, the additional $\Lambda$ being the marginal effect of the strategy on its own opportunity set. Placing all three faces in one finite-horizon linear--quadratic--Gaussian model now makes the unification quantitative: the value is an information trace plus a deterministic inverse-precision norm, equivalently an information term, a forecast premium, an impact gap, and the unique signed forecast--impact cross term. The orthogonal projection split shows that even this signed term hides a nonnegative geometry: forecast anticipation partly cancels, reinforces, or misses the impact correction. The title is therefore literal in the LQG case, not merely interpretive. The plug-in certificates and empirical protocol add the deployment standard: do not use an enriched model just because it produces a positive backtest; use it only when the estimated edge clears the objective-error hurdle with statistical and implementation-cost margins.

The central lesson is therefore not that anticipation is always aggressive. Correctly specified anticipation can shrink positions, stabilize trading, and reduce exposure to endogenous return decay. Its danger appears only when the enriched model is estimated and then trusted too much. Robustness is the disciplined response: an anticipatory optimizer should enrich its model, but also price the uncertainty of that enrichment.

\paragraph{Status of the claims.}
The information result is classical under logarithmic utility and initial enlargement. The forecast result combines the structural linear--quadratic dynamic trading solution with an exact finite-horizon quadratic value identity. The deterministic and high-probability plug-in certificates, curvature bounds, impact, constrained-impact, phase-transition, damped-recalibration, mean--variance information-value, finite-horizon forecast-value, information--impact decomposition, full finite-horizon three-way information--forecast--impact decomposition, decision-relevant forecast norm, forecast--impact interaction bounds, orthogonal cross-term split, impact shadow-price sensitivities, the Kalman endogenization of the resolved-mean covariance, the infinite-horizon stationary value identity, the exact value of estimated anticipation with its estimation penalty and optimal reliability-ratio and Bayes shrinkage, and the $N$-trader Nash equilibrium with its competitive erosion and stability threshold are proved here in the relevant settings or operationalized where appropriate. The empirical protocol is a falsification design rather than a theorem. The transient-impact, multi-agent, nonlinear beyond local curvature, progressive-information, and finite-sample robust extensions remain open research directions.

\bibliographystyle{plainnat}
\addcontentsline{toc}{section}{References}
\bibliography{references}

\end{document}